\renewclass\P{PTIME}
\renewclass\EXP{EXPTIME}
\renewclass\coNEXP{coNEXPTIME}
\newclass\coNPSPACE{coNPSPACE}
\tikzstyle{player1}=[draw,rounded rectangle, minimum size=7mm]
\tikzstyle{player2}=[draw,rectangle,minimum size=7mm]
\tikzstyle{widget}=[draw,ellipse,dashed,minimum size=6mm]
\tikzset{every loop/.style={looseness=7}}
\newcommand\N{\ensuremath{\mathbb{N}}\xspace}
\renewcommand\R{\ensuremath{\mathbb{R}}\xspace}
\newcommand\Rplus{\ensuremath{\R_{\geq 0}}\xspace}
\newcommand\Rbar{\ensuremath{\R_\infty}\xspace}
\newcommand\Z{\ensuremath{\mathbb{Z}}\xspace}
\newcommand\Q{\ensuremath{\mathbb{Q}}\xspace}
\renewcommand\leq{\leqslant}
\renewcommand\geq{\geqslant}
\newcommand\powerset[1]{\ensuremath{2^{#1}}\xspace}
\newcommand\WTG{WTG\xspace}
\newcommand\Clocks{\ensuremath{X}\xspace}
\newcommand\clockbound{\ensuremath{M}\xspace}
\newcommand\val{\ensuremath{\nu}\xspace}
\newcommand\valnull{\ensuremath{\mathbf{0}}\xspace}
\newcommand\guard{\ensuremath{g}\xspace}
\newcommand\Guards{\ensuremath{\mathsf{Guards}(\Clocks)}\xspace}
\newcommand\reset{\ensuremath{Y}\xspace}
\newcommand\MinPl{\ensuremath{\mathsf{Min}}\xspace}
\newcommand\MaxPl{\ensuremath{\mathsf{Max}}\xspace}
\newcommand\game{\ensuremath{\mathcal G}\xspace}
\newcommand\tgame{\ensuremath{{\mathcal T(\game)}}\xspace}
\newcommand\loc{\ensuremath{\ell}\xspace}
\newcommand\Locs{\ensuremath{L}\xspace}
\newcommand\LocsMin{\ensuremath{\Locs_{\MinPl}}\xspace}
\newcommand\LocsMax{\ensuremath{\Locs_{\MaxPl}}\xspace}
\newcommand\Trans{\ensuremath{\Delta}\xspace}
\newcommand\trans{\ensuremath{\delta}\xspace}
\newcommand\weight{\ensuremath{\mathsf{wt}}\xspace}
\newcommand\LocsT{\ensuremath{\Locs_T}\xspace}
\newcommand\weightT{\ensuremath{\weight_T}\xspace}
\newcommand\weightC{\ensuremath{\weight_\Sigma}\xspace}
\newcommand\wmax{w_{\mathrm{max}}\xspace}
\newcommand\myconst{\mathbf{B}\xspace}
\newcommand\play{\ensuremath{\rho}\xspace}
\newcommand\FPlays{\ensuremath{\mathsf{FPlays}}\xspace}
\newcommand\FPlaysMin{\FPlays^\MinPl}
\newcommand\FPlaysMax{\FPlays^\MaxPl}
\newcommand\first{\ensuremath{\mathsf{first}}\xspace}
\newcommand\last{\ensuremath{\mathsf{last}}\xspace}
\newcommand\strat{\ensuremath{\sigma}\xspace}
\newcommand\stratmin{\ensuremath{\strat_{\MinPl}}\xspace}
\newcommand\stratmax{\ensuremath{\strat_{\MaxPl}}\xspace}
\newcommand\minstrategy{\stratmin}
\newcommand\maxstrategy{\stratmax}
\newcommand\Play{\ensuremath{\mathsf{play}}\xspace}
\newcommand\outcomes{\Play}
\newcommand\Val{\ensuremath{\mathsf{Val}}\xspace}
\newcommand\rgame{\ensuremath{\mathcal{R}(\game)}\xspace}
\newcommand\Nrgame[1]{\ensuremath{\mathcal{R}_{#1}(\game)}\xspace}
\newcommand\rstate{\ensuremath{s}\xspace}
\newcommand\RStates{\ensuremath{S}\xspace}
\newcommand\RStatesF{\ensuremath{\RStates_f}\xspace}
\newcommand\RStatesK{\ensuremath{\RStates_\Kernel}\xspace}
\newcommand\rtrans{\ensuremath{t}\xspace}
\newcommand\RTrans{\ensuremath{T}\xspace}
\newcommand\RTransF{\ensuremath{\RTrans_f}\xspace}
\newcommand\RTransK{\ensuremath{\RTrans_\Kernel}\xspace}
\newcommand\rpath{\ensuremath{\pi}\xspace} %
\newcommand\regions[2]{\mathsf{Reg}(#1,#2)}
\newcommand\Nregions[3]{\mathsf{Reg}_{#1}(#2,#3)}
\newcommand\corv{\ensuremath{v}\xspace}
\newcommand\Ncgame[1]{\ensuremath{{\cal C}_{#1}(\game)}\xspace}
\newcommand\lipconst{\ensuremath{\Lambda}\xspace}
\newcommand\numpieces{\ensuremath{J}\xspace}
\newcommand\addconstbound{\ensuremath{U}\xspace}
\newcommand\Kernel{\ensuremath{{\mathsf{K}}}\xspace}
\renewcommand\paragraph[1]{\smallskip\noindent\textbf{#1.}}
 \title{Symbolic Approximation of Weighted Timed Games\thanks{This work has been funded by the DeLTA project
   (ANR-16-CE40-0007).}}
 \author{Damien~Busatto-Gaston, Benjamin~Monmege and
   Pierre-Alain~Reynier}
 \institute{Aix Marseille Univ, Universit\'e de
   Toulon, CNRS, LIS, Marseille,
   France\\
 \email{\{damien.busatto,benjamin.monmege,pierre-alain.reynier\}@lis-lab.fr}}
\begin{document}

\maketitle

\begin{abstract}
  Weighted timed games are zero-sum games played by two players on a
  timed automaton equipped with weights, where one player wants to
  minimise the accumulated weight while reaching a target.
  Weighted timed games are notoriously difficult and quickly
  undecidable, even when restricted to non-negative weights.
  For non-negative weights, the largest class that can be analysed
  has been introduced by Bouyer, Jaziri and Markey in 2015.
  Though the value problem is undecidable, the authors show
  how to approximate the value by considering regions with a refined
  granularity.
  In this work, we extend this class to incorporate negative weights,
  allowing one to model energy for instance, and prove that the value
  can still be approximated, with the same complexity.
  In addition, we show that a symbolic
  algorithm, relying on the paradigm of value iteration, can be used
  as an approximation schema on this class.
\end{abstract}

\section{Introduction}

The design of programs verifying some real-time specifications is a
notoriously difficult problem, because such programs must take care of
delicate timing issues, and are difficult to debug a posteriori. One
research direction to ease the design of real-time software is to
automatise the process. The situation may be modelled into a
timed game, played by a \emph{controller} and an antagonistic
\emph{environment}: they act, in a turn-based fashion, over a
\emph{timed automaton}~\cite{AD94}, namely a finite automaton equipped
with real-valued variables, called clocks, evolving with a uniform
rate. A simple, yet realistic, objective for the controller is to
reach a target location. We are thus looking for a \emph{strategy} of the
controller, that is a recipe dictating
how to play %
so that the target is reached no matter how
the environment plays. Reachability timed games are
decidable~\cite{AsaMal99}, and \EXP-complete~\cite{JurTri07}.

Weighted extensions of these games have been considered in order to
measure the quality of the winning strategy for the
controller~\cite{BCFL04,ABM04}: when the controller has several winning
strategies in a given reachability timed game, the quantitative version
of the game helps choosing a good one with respect to some
metrics. This means that the game now takes place over a
\emph{weighted (or priced) timed automaton}~\cite{BehFeh01,AluLa-04},
where transitions are equipped with weights, and locations with rates of
weights (the cost is then proportional to the time spent in this
location, with the rate as proportional coefficient).
While solving the optimal reachability problem on
weighted timed automata has been shown to be
\PSPACE-complete~\cite{BouBri07} (i.e.~the same complexity as the
non-weighted version), weighted timed games are known to be
undecidable~\cite{BBR05}. This has led to many restrictions in order
to regain decidability, the first and most interesting one being the
class of strictly non-Zeno cost with only non-negative weights (in
transitions and locations)~\cite{BCFL04}: this hypothesis requires
that every execution of the timed automaton that follows a cycle of
the region automaton has a weight far from 0 (in interval
$[1,+\infty)$, for instance).

Negative weights are crucial when one
wants to model energy or other resources that can grow or decrease
during the execution of the system to study.
In \cite{BMR17a}, we have recently extended the strictly non-Zeno cost restriction to
weighted timed games in the presence of negative weights in
transitions and/or locations.
We have described there the class of \emph{divergent weighted timed games} where each
execution that follows a cycle of the region automaton has a weight
far from 0, i.e.~in $(-\infty,-1]\cup[1,+\infty)$. We were able to
obtain a doubly-exponential-time algorithm to compute the values and
almost-optimal strategies, while deciding the divergence of a weighted
timed game is \PSPACE-complete. These complexity results match the
ones that could be obtained in the non-negative case from~\cite{BCFL04,ABM04}.

The techniques used to obtain the results of \cite{BMR17a} cannot be extended if the
conditions are slightly relaxed. For instance, if we add the
possibility for an execution of the timed automaton following a cycle
of the region automaton to have weight \emph{exactly 0}, the decision
problem is known to be undecidable~\cite{BJM15}, even with
non-negative weights only. For this extension, in the presence of
non-negative weights only, it has been proposed an approximation
schema
 to compute arbitrarily close estimates of the optimal value
\cite{BJM15}.
To this end, the authors consider regions with a refined granularity so as
to control the precision of the approximation.
In this work, our contribution
is two-fold: first, we extend the class considered in \cite{BJM15}
to the presence of negative weights; second, we show that
the approximation can be obtained using a symbolic computation,
based on the paradigm of value iteration.

More precisely, we define the class of \emph{almost-divergent weighted
  timed games} where, for each strongly connected component (SCC) of
the region automaton, executions following a cycle of this SCC have
weights either all in $(-\infty,-1]\cup \{0\}$, or all in
$\{0\}\cup[1,+\infty)$. In contrast, the \emph{divergent} condition is
equivalent to the same property on the strongly connected components,
but without the presence of singleton $\{0\}$. Given an
almost-divergent weighted timed game, an initial configuration $c$ and
a threshold~$\varepsilon$, we compute a value that we guarantee to be
$\varepsilon$-close to the optimal value when the play starts from
$c$.
 Moreover, we prove that deciding if a weighted timed game is
 almost-divergent is a \PSPACE-complete problem.

In order to approximate almost-divergent weighted timed games,
we first adapt the approximation schema of \cite{BJM15} to our setting.
At the very core of their schema is the notion of \emph{kernels}
that collect all cycles of weight exactly 0 in the game. Then, a semi-unfolding
of the game (in which kernels are not unfolded) of bounded depth is shown
to be equivalent to the original game. Adapting this schema to negative weights
requires to address new issues:
\begin{itemize}
\item The definition and the approximation of these kernels is much more
  intricate in our setting (see Sections~\ref{sec:kernels-in-wtg} and
  \ref{sec:approx-kernels}). Indeed, with only non-negative weights, a
  cycle of weight $0$ only encounters locations and transitions with
  weight $0$. It is no longer the case with arbitrary weights,
  both for discrete
  weights on transitions (that could alternate between weight $+1$ and
  $-1$, e.g.)  and continuous rates on locations: for this continuous
  part, this requires to keep track of the real-time dynamics of the game.
\item Some configurations may have value $-\infty$. While it is
  undecidable in general whether a configuration has value $-\infty$,
  we prove that it is decidable for almost-divergent weighted timed
  games (see Lemma~\ref{lm:-infty}).
\item The identification of an adequate bound to define an equivalent
  semi-unfolding of bounded depth is more difficult in our setting, as
  having guarantees on weight accumulation is harder
  (we can lose accumulated weight). We deal with this by
  evaluating how large the value of a configuration can be,
  provided it is not infinite. This is presented in
  Section~\ref{sec:unfolding}.
\end{itemize}

We also develop, in Section~\ref{sec:symbolic}, a more symbolic approximation
schema, in the sense that it
avoids %
the a priori refinement of regions. Instead, all
computations are performed in a symbolic way using the techniques
developed in~\cite{ABM04}.
This allows to mutualise as much as possible the different
computations: comparing these schemas with the evaluation of MDPs or
quantitative games like mean-payoff or discounted-payoff, it is the
same improvement as when using \emph{value iteration} techniques
instead of techniques based on the unfolding of the model into a finite tree
which can contain many times the same location.

\section{Weighted timed games}\label{sec:prelim}

\paragraph{Clocks, guards and regions}
We let \Clocks be a finite set of variables called clocks. A valuation of
clocks is a mapping $\val\colon \Clocks\to \Rplus$. For a valuation $\val$,
$d\in\Rplus$ and $Y\subseteq \Clocks$, we define the valuation $\val+d$ as
$(\val+d)(x)=\val(x)+d$, for all $x\in \Clocks$, and the valuation
$\val[Y:=0]$ as $(\val[Y:=0])(x)=0$ if $x\in Y$, and
$(\val[Y:=0])(x)=\val(x)$ otherwise. The valuation $\valnull$
assigns $0$ to every clock.
A guard on clocks of \Clocks is a conjunction of atomic constraints of
the form $x\bowtie c$, where ${\bowtie}\in\{{\leq},<,=,>,{\geq}\}$ and
$c\in \Q$ (we allow for rational coefficients as we will
refine the granularity in the following). Guard $\overline g$ is the closed
version of a satisfiable guard $g$ where every open constraint $x<c$ or $x>c$ is
replaced by its closed version $x\leq c$ or $x\geq c$. A valuation
$\val\colon \Clocks\to \Rplus$ satisfies an atomic constraint
$x\bowtie c$ if $\val(x)\bowtie c$. The satisfaction relation is
extended to all guards $g$ naturally, and denoted by $\val\models
g$. We let $\Guards$ denote the set of guards over \Clocks.

\begin{wrapfigure}{3}{3cm}
  \centering
  \vspace{-5mm}
  \begin{tikzpicture}
    \path[draw,color=gray!60!white,very thin] (0,1.66) -- (0.33,2) -- (0.33,0) --
    (2,1.66) -- (0,1.66)%
    (0,1.33) -- (0.66,2) -- (0.66,0) -- (2,1.33) -- (0,1.33)%
    (0,0.66) -- (1.33,2) -- (1.33,0) -- (2,0.66) -- (0,0.66)%
    (0,0.33) -- (1.66,2) -- (1.66,0) -- (2,0.33) -- (0,0.33);%

    \path[draw,->,thick](0,0) -> (2.3,0) node[above] {$x$};
    \path[draw,->,thick](0,0) -> (0,2.3) node[right] {$y$};

    \path[draw] (0,0) -- (2,2)%
    (1,0) -- (1,2) -- (0,1) -- (2,1) -- (1,0)%
    (0,2) -- (2,2) -- (2,0);%

    \node () at (1,-0.2) {$1$};
    \node () at (2,-0.2) {$2$};
    \node () at (-0.15,-0.15) {$0$};
    \node () at (-0.2,1) {$1$};
    \node () at (-0.2,2) {$2$};

  \end{tikzpicture}
    \vspace{-8mm}
\end{wrapfigure}
We rely on the crucial notion of regions, as introduced in the seminal
work on timed automata \cite{AD94}: intuitively, a region is a set of valuations
that are all time-abstract bisimilar. We will need some
refinement of regions, with respect to a granularity $1/N$, with
$N\in \N$.
Formally, with respect to the set \Clocks of clocks and
a constant $M$, a $1/N$-region $r$ is a subset of valuations
characterised by the vector
$(\iota_x)_{x\in \Clocks} = (\min(M N, \lfloor \val(x)
N\rfloor))_{x\in \Clocks}\in [0,MN]^\Clocks$ and the order of
fractional parts of $\val(x) N$, given as a partition
$\Clocks=\Clocks_0\uplus \Clocks_1\uplus \cdots \uplus \Clocks_m$ of
clocks: a valuation~$\val$ is in this $1/N$-region~$r$ if
\begin{inparaenum}[($i$)]
\item $\lfloor \val(x) N\rfloor = \iota_x$, for all clocks
  $x\in \Clocks$;
\item $\val(x)=0$ for all $x\in \Clocks_0$;
\item all clocks $x\in \Clocks_i$ satisfy that
  $\val(x) N$ have the same fractional part, for all $1\leq i\leq m$.
\end{inparaenum}
We denote by $\Nregions N \Clocks \clockbound$ the set of
$1/N$-regions, and we write $\regions \Clocks \clockbound$ as a
shorthand for $\Nregions 1 \Clocks \clockbound$. We recover the
traditional notion of region for $N=1$. E.g., the figure on the right
depicts regions $\regions{\{x,y\}} 2$ as well as their refinement
$\Nregions 3 {\{x,y\}} 2$. For any integer guard $g$, either all
valuations of a given $1/N$-region satisfy $g$, or none of them do.
A $1/N$-region $r'$ is said to be a time successor of the $1/N$-region
$r$ if there exist $\val\in r$, $\val'\in r'$, and $d>0$ such that
$\val'=\val+d$. Moreover, for $Y\subseteq \Clocks$, we let
$r[Y:=0]$ be the $1/N$-region where clocks of $Y$ are~reset.

\begin{figure}
  \centering
  \scalebox{0.9}{
     \begin{tikzpicture}[node distance=3cm,auto,->,>=latex]

      \node[player2](1){\makebox[0mm][c]{$\mathbf{-2}$}};
      \node()[below of=1,node distance=6mm]{$\loc_1$};

      \node[player1](2)[below right
        of=1]{\makebox[0mm][c]{$\mathbf{2}$}};
      \node()[below of=2,node distance=6mm]{$\loc_2$};

      \node[player1](3)[above right
        of=2, accepting]{\makebox[0mm][c]{}};
      \node()[below of=3,node distance=6mm]{$\loc_3$};
    \node()[above of=3,node distance=6mm]{$\weightT=\mathbf{0}$};

      \node[player2](4)[below right
        of=3]{\makebox[0mm][c]{$\mathbf{-1}$}};
      \node()[below of=4,node distance=6mm]{$\loc_4$};

      \node[player1](5)[above right
        of=4]{\makebox[0mm][c]{$\mathbf{-2}$}};
      \node()[below of=5,node distance=6mm]{$\loc_5$};

	\path
	(2) edge node[below left,xshift=2mm,yshift=2mm]{$
          \begin{array}{c}
x\leq2 \\ x:=0 \\ \mathbf{0}
          \end{array}
          $} (1);

	\path
	(2) edge node[above left,xshift=4mm,yshift=-4mm]{
          $\begin{array}{c}
1\leq x\leq3 \\ \mathbf{1}
          \end{array}$} (3);

	\path
	(2) edge node[below]{$x\leq3;\;  x:=0;\; \mathbf{0}$} (4);

	\path
	(4) edge node[above right,xshift=-4mm,yshift=-4mm]{$
          \begin{array}{c}
2\leq x\leq3 \\ \mathbf{3}
          \end{array}$} (3);

	\path
	(4) edge node[below right,xshift=-2mm,yshift=2mm]{$
          \begin{array}{c}
x\leq3\\ \mathbf{0}
          \end{array}$} (5);

	\path
	(1) edge node[above]{$x\leq3;\; \mathbf{0}$} (3);

	\path
	(5) edge node[above]{$x\leq3; \; \mathbf{0}$} (3);

	\path
	(1) edge [loop above] node {$
          \begin{array}{c}
x\leq 1\\ x:=0;\; \mathbf{3}
          \end{array}$}  (1);

	\path
	(5) edge [loop above] node {$\begin{array}{c}1<x\leq 3 \\ x:= 0; \;\mathbf{1}\end{array}$} (5);

      \end{tikzpicture}}
    \scalebox{.8}{\begin{tikzpicture}[node distance=3cm,auto,>=latex]

	\draw[->] (0,0) -- (0,4);
	\draw[->] (0,0) -- (4,0);

	\draw[dashed] (1,0) -- (1,4);
	\draw[dashed] (2,0) -- (2,4);
	\draw[dashed] (3,0) -- (3,4);

	\draw[dashed] (0,1) -- (4,1);
	\draw[dashed] (0,2) -- (4,2);
	\draw[dashed] (0,3) -- (4,3);

	\draw[dashed] (.666,0) node[below,xshift=-1mm]{$2/3$}-- (.666,1.666);

	\node at (4,-.5) {$x$};
  \node at (-.5,4) {$\Val$};

	\node at (0,-.5) {$0$};
	\node at (1,-.5) {$1$};
	\node at (2,-.5) {$2$};
	\node at (3,-.5) {$3$};

	\node at (-.5,0) {$0$};
	\node at (-.5,1) {$1$};
	\node at (-.5,2) {$2$};
	\node at (-.5,3) {$3$};

	\draw[red,very thick] (0,3) -- (1,1) -- (3,1);

	\draw[blue,very thick] (0,1) -- (2,3) -- (3,3);

	\node[rectangle,fill=white,fill opacity=.9,text opacity=1] at (2.4,3.4) {$\loc_2 \rightarrow \loc_4 \rightarrow \loc_3$};

	\node[rectangle,fill=white,fill opacity=.9,text opacity=1]  at (2.2,1.4) {$\loc_2 \rightarrow \loc_3$};

   \end{tikzpicture}}
 \caption{On the left, a weighted timed game. Locations belonging to \MinPl
   (resp.~\MaxPl) are depicted by circles (resp.~squares). The target
   location is $\loc_3$, whose output weight function is null.  It is
   easy to observe that location $\loc_1$ (resp.~$\loc_5$) has value
   $+\infty$ (resp.~$-\infty$). As a consequence, the value
   in~$\loc_4$ is determined by the edge to $\loc_3$, and depicted in
   blue on the right. In location $\loc_2$, the value
   associated with the transition to $\loc_3$ is depicted in red, and
   the value in $\loc_2$ is obtained as the minimum of these two
   curves. Observe the intersection in $x=2/3$ requiring to refine the
   regions.}\label{fig:wtg}
\end{figure}

\paragraph{Weighted timed games}
A weighted timed game (\WTG) is then a tuple
$\game=\langle\Locs=\LocsMin\uplus\LocsMax, \Trans, \weight, \LocsT,
\weightT\rangle$ where $\LocsMin$ and $\LocsMax$ are finite disjoint
subsets of locations belonging to \MinPl and \MaxPl, respectively,
$\Trans\subseteq \Locs\times\Guards\times \powerset \Clocks \times
\Locs$ is a finite set of transitions,
$\weight\colon \Trans\uplus\Locs \to \Z$ is the weight function,
associating an integer weight with each transition and location,
$\LocsT\subseteq \LocsMin$ is a subset of target locations for player
$\MinPl$, and $\weightT\colon \LocsT\times\Rplus^\Clocks\to \Rbar$ is
a function mapping each target location and valuation of the clocks to
a final weight of $\Rbar= \R\uplus\{-\infty,+\infty\}$ (possibly $0$,
$+\infty$, or $-\infty$). The addition of target weights is not
standard, but we will use it in the process of solving those games:
anyway, it is possible to simply map each target location to the
weight $0$, allowing us to recover the standard definition. Without
loss of generality, we suppose the absence of deadlocks except on
target locations, i.e.~for each location
$\loc\in \Locs\backslash\LocsT$ and valuation $\val$, there exists
$(\loc,g,Y,\loc')\in \Trans$ such that $\val\models g$, and no
transitions start in \LocsT.

\begin{wrapfigure}{r}{3.7cm}
   \vspace{-.5cm}
\scalebox{.7}{

   }
   \vspace{-.5cm}
\end{wrapfigure}
The semantics of a \WTG $\game$ is defined in terms of a game played
on an infinite transition system whose vertices are configurations of
the \WTG. A configuration is a pair $(\loc,\val)$ with a location and a
valuation of the clocks. Configurations are split into players
according to the location. A configuration is final if its location is a
target location of $\LocsT$. The alphabet of the transition system is
given by $\Rplus\times \Trans$ and will encode the delay that a player
wants to spend in the current location, before firing a certain
transition. For every delay $d\in\Rplus$, transition
$\trans=(\loc,g,Y,\loc')\in \Trans$ and valuation~$\val$, there is an
edge $(\loc,\val)\xrightarrow{d,\trans}(\loc',\val')$ if
$\val+d\models g$ and $\val'=(\val+d)[Y:=0]$. The weight of
such an edge $e$ is given by
$d\times \weight(\loc) + \weight(\trans)$.
An example is depicted on Figure~\ref{fig:wtg}.

A \emph{finite play} is a finite sequence of consecutive edges
$\play=(\loc_0,\val_0)\xrightarrow{d_0,\trans_0}(\loc_1,\val_1)
\xrightarrow{d_1,\trans_1}\cdots
(\loc_k,\val_k)$. We denote by $|\play|$ the length $k$ of $\play$.
The concatenation of two finite plays $\play_1$ and
$\play_2$, such that $\play_1$ ends in the same configuration as
$\play_2$ starts, is denoted by $\play_1\play_2$.
We let $\FPlays_\game$ be the set of all finite plays in $\game$, whereas
$\FPlaysMin_\game$ (resp. $\FPlaysMax_\game$) denote the finite plays that
end in a configuration of $\MinPl$ (resp. $\MaxPl$). A \emph{play}
is then a maximal sequence of consecutive edges (it is either infinite or it reaches \LocsT).

A \emph{strategy} for $\MinPl$ (resp.~$\MaxPl$) is a mapping
$\strat\colon \FPlaysMin_\game \to \Rplus\times \Trans$
(resp.~$\strat\colon \FPlaysMax_\game \to \Rplus\times \Trans$) such
that for all finite plays $\play\in\FPlaysMin_\game$
(resp.~$\play\in\FPlaysMax_\game$) ending in non-target configuration
$(\loc,\val)$, there exists an edge
$(\loc,\val)\xrightarrow{\strat(\play)}(\loc',\val')$. A play or
finite play
$\play = (\loc_0,\val_0)\xrightarrow{d_0,\trans_0}
(\loc_1,\val_1)\xrightarrow{d_1,\trans_1}\cdots$ conforms to a
strategy $\strat$ of $\MinPl$ (resp.~$\MaxPl$) if for all $k$ such
that $(\loc_k,\val_k)$ belongs to $\MinPl$ (resp.~$\MaxPl$), we have
that
$(d_{k},\trans_k) = \strat((\loc_0,\val_0)\xrightarrow{d_0,\trans_0}
\cdots (\loc_k,\val_k))$. A
strategy $\strat$ is \emph{memoryless} if for all finite plays
$\play, \play'$ ending in the same configuration, we have that
$\strat(\play)=\strat(\play')$.  For all strategies $\minstrategy$ and
$\maxstrategy$ of players \MinPl and \MaxPl, respectively, and for all
configurations~$(\loc_0,\val_0)$, we let
$\outcomes_\game((\loc_0,\val_0),\maxstrategy,\minstrategy)$ be the
outcome of $\maxstrategy$ and $\minstrategy$, defined as the only
play conforming to $\maxstrategy$ and $\minstrategy$ and starting
in~$(\loc_0,\val_0)$.

The objective of \MinPl is to reach a target configuration, while minimising
the accumulated weight up to the target. Hence, we associate to every
finite play
$\play=(\loc_0,\val_0)\xrightarrow{d_0,\trans_0}(\loc_1,\val_1)
\xrightarrow{d_1,\trans_1}\cdots
(\loc_k,\val_k)$ its cumulated
weight, taking into account both discrete and continuous costs:
$\weightC(\play)=\sum_{i=0}^{k-1} \weight(\loc_i)\times d_i +
\weight(\trans_i)$.
Then, the weight of a play $\play$, denoted by
$\weight_\game(\play)$, is defined by $+\infty$ if
\play is infinite (does not reach $\LocsT$), and
$\weightC(\play)+\weightT(\loc_T,\val)$ if it ends in
$(\loc_T,\val)$ with $\loc_T\in \LocsT$.
Then, for all locations $\loc$ and
valuation~$\val$, we let $\Val_\game(\loc,\val)$ be the value of
$\game$ in $(\loc,\val)$, defined as
$\Val_\game((\loc,\val)) = \inf_{\minstrategy}\sup_{\maxstrategy}
\weight_\game(\outcomes((\loc,\val),\maxstrategy,\minstrategy))$,
where the order of the infimum and supremum does not matter, since
\WTG{s} are known to be determined\footnote{The determinacy result is
  stated in \cite{BGH+15} for \WTG (called priced timed games) with
  one clock, but the proof does not use the assumption on the number
  of clocks.}.
We say that a strategy $\minstrategy^\star$ of $\MinPl$ is $\varepsilon$-optimal
if, for all $(\loc,\val)$, and all strategies $\maxstrategy$ of $\MaxPl$,
$\weight_\game(\outcomes((\loc,\val),\maxstrategy,\minstrategy^\star))\leq
\Val_\game(\loc,\val)+\varepsilon$. It is said optimal if this holds for
$\varepsilon=0$. A symmetric
definition holds for optimal strategies of $\MaxPl$.
If the game is clear from the context, we may drop the index $\game$
from all previous notations.

As usual in related work \cite{ABM04,BCFL04,BJM15}, we assume that the
input \WTG{s} have guards where all constants are integers,
and all clocks are \emph{bounded}, i.e.~there is a constant
$\clockbound\in\N$ such that every transition of the \WTG is equipped
with a guard $g$ such that $\val\models g$ implies
$\val(x)\leq \clockbound$ for all clocks $x\in \Clocks$. We denote by
$\wmax^\Locs$ (resp.~$\wmax^\Trans$, $\wmax^e$) the maximal weight in
absolute values of locations (resp.~of transitions, edges) of $\game$,
i.e.~$\wmax^\Locs = \max_{\loc\in \Locs} |\weight(\loc)|$
(resp.~$\wmax^\Trans = \max_{\trans \in \Trans} |\weight(\trans)|$,
$\wmax^e = M\wmax^\Locs + \wmax^\Trans$). We also assume that the
output weight functions are piecewise linear with a finite number of
pieces and are continuous on each region. Notice that the zero output
weight function satisfies this property. Moreover, the computations we
will perform in the following maintain this property as an invariant,
and use it to prove their correctness.

\paragraph{Region and corner abstractions}
The region automaton, or region game, $\Nrgame{N}$ (abbreviated as
$\rgame$ when $N=1$) of a game
$\game= \langle\Locs=\LocsMin\uplus\LocsMax, \Trans,\weight,
\LocsT,\weightT\rangle$ is the \WTG with locations
$\RStates = \Locs\times \Nregions N \Clocks \clockbound$ and all
transitions $((\loc,r),\guard'',\reset,(\loc',r'))$ with
$(\loc,\guard,\reset,\loc')\in \Trans$ such that the model of guard $\guard''$
(i.e.~all valuations $\val$ such that $\val\models \guard''$) is a
$1/N$-region $r''$, time successor of $r$ such that $r''$ satisfies
the guard $\guard$, and $r'=r''[\reset:=0]$.  Distribution of locations to
players, final locations and weights are taken according to $\game$. We
call \emph{path} a finite or infinite sequence of transitions in this
automaton, and we denote by $\rpath$ the paths. A play~$\play$ in
$\game$ is projected on a path $\rpath$ in $\Nrgame N$, by replacing
every edge $(\loc,\val)\xrightarrow{d,\trans=(\loc,\guard,\reset,\loc')}(\loc',\val')$ by
the transition $((\loc,r),\guard,\reset,(\loc',r'))$, where $r$ (resp. $r'$) is the $1/N$-region
containing $\val$ (resp. $\val'$): we say that
$\play$ \emph{follows} the path $\rpath$. It is important to notice
that, even if $\rpath$ is a \emph{cycle} (i.e.~starts and ends in the same
location of the region game), there may exist plays following it in
$\game$ that are not cycles, due to the fact that regions are sets of
valuations.  By projecting away the region information of
$\Nrgame{N}$, we simply obtain:
\begin{lemma}\label{lm:region-game}
  For all $\loc\in\Locs$, $1/N$-regions $r$, and
  $\val\in r$,
  $\Val_\game(\loc,\val)=\Val_{\Nrgame N}((\loc,r),\val)$.
\end{lemma}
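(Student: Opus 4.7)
The plan is to establish a weight-preserving bijection between plays in $\game$ starting from $(\loc,\val)$ and plays in $\Nrgame N$ starting from $((\loc,r),\val)$, where $r$ is the $1/N$-region of $\val$, and to transfer strategies across this bijection. Since $\Nrgame N$ inherits $\weight$, $\LocsT$ and $\weightT$ from $\game$, corresponding edges will carry identical weights, so the value equality will follow from the usual $\inf\sup$ definition on both sides.

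First I would describe the projection map explicitly. Given a play $\play = (\loc_0,\val_0) \xrightarrow{d_0,\trans_0} (\loc_1,\val_1) \cdots$ in $\game$ with $\val_0\in r_0 := r$ and $\trans_i = (\loc_i,\guard_i,\reset_i,\loc_{i+1})$, let $r_i$ denote the $1/N$-region of $\val_i$ and $r_i''$ that of $\val_i + d_i$. Then $r_i''$ is a time successor of $r_i$; it satisfies $\guard_i$ because $\val_i+d_i\models \guard_i$; and $r_{i+1} = r_i''[\reset_i:=0]$ because $\val_{i+1} = (\val_i+d_i)[\reset_i:=0]$. Therefore the transition $((\loc_i,r_i), \guard'', \reset_i, (\loc_{i+1},r_{i+1}))$, with $\guard''$ having model $r_i''$, is a bona fide transition of $\Nrgame N$, and concatenating these transitions yields the projected path $\rpath$. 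Conversely, any play in $\Nrgame N$ starting from $((\loc,r),\val)$ with $\val \in r$ lifts to a play in $\game$ by simply forgetting the region component: the constraints imposed on each region-game transition guarantee that the corresponding $\game$-edge is enabled at the current configuration. These two constructions are mutual inverses, they agree on delays and underlying transitions, so $\weightC$ and $\weightT$ agree, and hence $\weight_\game(\play) = \weight_{\Nrgame N}(\rpath)$.

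To conclude, I would transfer strategies through this bijection. Precomposing a strategy of either player in $\game$ with the projection on finite plays yields a strategy in $\Nrgame N$ whose outcomes are exactly the projections of the original outcomes, and symmetrically for lifting; since outcome weights are preserved, the $\inf_{\stratmin}\sup_{\stratmax}$ expressions defining $\Val_\game(\loc,\val)$ and $\Val_{\Nrgame N}((\loc,r),\val)$ coincide. The only mild technicality I expect is that a single $\trans \in \Trans$ may correspond to several $\Nrgame N$-transitions, indexed by the possible intermediate regions $r''$ between $r$ and $r'$; but once the delay $d$ is chosen, $r''$ is completely determined as the $1/N$-region of $\val + d$, which canonically resolves the ambiguity. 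Everything else is routine bookkeeping on finite plays.
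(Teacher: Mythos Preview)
Your proposal is correct and follows the same approach as the paper, which in fact dispatches the lemma in a single sentence (``By projecting away the region information of $\Nrgame{N}$, we simply obtain:''): you have carefully unfolded what that projection/lifting correspondence means at the level of plays and strategies, and your handling of the only genuine subtlety—that a single $\trans\in\Trans$ may split into several $\Nrgame N$-transitions, but the intermediate region $r''$ is determined by the chosen delay—is exactly the right observation.
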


On top of regions, we will need the corner-point abstraction
techniques introduced in~\cite{BouBri08a}. A valuation~\corv is said
to be a corner of a $1/N$-region~$r$, if it belongs to the
topological closure~$\overline{r}$ and has coordinates multiple of
$1/N$ ($\corv\in(1/N)\N^X$). We call corner state a
triple~$(\loc,r,\corv)$ that contains information about a location
$(\loc,r)$ of the region-game $\Nrgame N$, and a corner $\corv$ of the
$1/N$-region $r$.
Every region has at most~$|\Clocks|+1$ corners.  We now define the
corner-point abstraction~$\Ncgame N$ of a \WTG~$\game$ as the \WTG
obtained as a refinement of $\Nrgame N$ where guards on transitions
are enforced to stay on one of the corners of the current
$1/N$-region: the locations of $\Ncgame N$ are all corner states
of~$\Nrgame N$, associated to each player accordingly, and transitions
are all $((\loc,r,\corv), \guard'', \reset, (\loc',r',\corv'))$ such
that there exists $\rtrans=((\loc,r),\guard,\reset,(\loc',r'))$ a
transition of $\Nrgame N$ such that the model of guard $g''$ is a
corner $\corv''$ satisfying the guard $\overline g$ (recall that
$\overline g$ is the closed version of $g$),
$\corv'=\corv''[\reset:=0]$, and there exist two valuations
$\val\in r$, $\val'\in r'$ such that
$((\loc,r),\val)\xrightarrow{d',\rtrans}((\loc',r'),\val')$ for some
$d'\in \Rplus$ (the latter condition ensures that the transition
between corners is not spurious). Because of this closure operation,
we must also define properly the final weight function: we simply
define it over the only valuation~$v$ reachable in location
$(\loc,r,v)$ (with $\loc\in \LocsT$) by
$\weightT((\loc,r,v),v)=\lim_{\val\to v, \val\in
  r}\weightT(\loc,\val)$ (the limit is well defined since $\weightT$
is piecewise linear with a finite number of pieces on region~$r$).

The \WTG $\Ncgame N$ can be seen as a \emph{weighted game} (with final
weights), i.e.~a \WTG without clocks (which means that there are only
weights on transitions), by removing guards, resets and rates of
locations, and replacing the weights of transitions by the actual
weight of jumping from one corner to another: a transition
$(((\loc,r),\corv), \guard'', \reset, ((\loc',r'),\corv'))$ becomes an
edge from $((\loc,r),\corv)$ to $((\loc',r'),\corv')$ with weight
$d\times \weight(\loc) + \weight(\rtrans)$ (for all possible values of
$d$, which requires to allow for multi-edges\footnote{The only case
  where several edges could link two corners using the same
  transition is when all clocks are reset in $\reset$, in which case
  there is a choice for delay~$d$.}). Note that delay $d$ is
necessarily a rational of the form $\alpha/N$ with $\alpha\in \N$,
since it must relate corners of $1/N$-regions. In particular, this
proves that the cumulated weight $\weightC(\play)$ of a finite play
$\play$ in $\Ncgame N$ is indeed a rational number with
denominator~$N$.

We will call \emph{corner play} a play $\play$ in the corner-point
abstraction $\Ncgame N$: it can also be interpreted as a timed
execution in $\game$ where all guards are closed (as explained in the
definition above). It straightforwardly projects on a finite path
$\rpath$ in the region game $\Nrgame N$: in this case, we say again
that $\play$ follows $\rpath$. Figure~\ref{fig:plays} depicts a play,
its projected path in the region game and one of its associated corner
plays.

\begin{figure}[htbp]
  \centering
  \begin{tikzpicture}[>=latex]
    \path[draw,thick,color=red,fill=red!20!white] (0,0) -- (1,0) --
    (1,1) -- (0,0);

    \path[draw,thick,color=red] (4,0) -- (4,1);

    \path[draw,thick,color=red,fill=red!20!white] (7,0) -- (8,1) --
    (7,1) -- (7,0);

    \path[draw,thick,color=red] (11,0.3) -- (12,0.3);

    \node[draw,circle,inner
    sep=1pt,color=blue!70!black,fill=blue!40!white] (1) at (0.4,0.2) {};
    \node[draw,circle,inner
    sep=1pt,color=blue!70!black,fill=blue!40!white] (2) at (4,0.7) {};
    \node[draw,circle,inner
    sep=1pt,color=blue!70!black,fill=blue!40!white] (3) at (7.3,0.9) {};
     \node[draw,circle,inner
    sep=1pt,color=blue!70!black,fill=blue!40!white] (4) at (11.3,0.3) {};

    \path[thick,color=blue!70!black,->] (1) edge[bend left] (2);
    \path[thick,color=blue!70!black,->] (2) edge[bend left=18] (3);
    \path[thick,color=blue!70!black,->] (3) edge[bend right=40] (4);

    \node[draw,circle,inner
    sep=1pt,color=green!60!black,fill=green!20!white] (1) at (1,0) {};
    \node[draw,circle,inner
    sep=1pt,color=green!60!black,fill=green!20!white] (2) at (4,0) {};
    \node[draw,circle,inner
    sep=1pt,color=green!60!black,fill=green!20!white] (3) at (8,1) {};
    \node[draw,circle,inner
    sep=1pt,color=green!60!black,fill=green!20!white] (4) at (12,0.3) {};

    \path[thick,color=green!60!black,->] (1) edge[bend right] (2);
    \path[thick,color=green!60!black,->] (2) edge[bend right=49] (3);
    \path[thick,color=green!60!black,->] (3) edge[bend left=18] (4);

    \node[color=red] at (0,0.8) {$(\ell_0,r_0)$};
    \node[color=red] at (4,1.4) {$(\ell_1,r_1)$};
    \node[color=red] at (7.5,1.4) {$(\ell_2,r_2)$};
    \node[color=red] at (12.7,0.3) {$(\ell_3,r_3)$};
    \node[color=blue!70!black] at (2.5,1.2) {$\rho$};
    \node[color=green!60!black] at (2.5,-.2) {$\rho'$};

    \path[draw,thick,color=red,->] (1.3,0.3) -- node[above]{$g_0,Y_0$} (3.7,0.3);
    \path[draw,thick,color=red,->] (4.3,0.3) -- node[above]{$g_1,Y_1$} (6.7,0.3);
    \path[draw,thick,color=red,->] (8.3,0.3) -- node[sloped,anchor=south]{$g_2,Y_2$} (10.7,0.3);

  \end{tikzpicture}
  \caption{A play $\rho$ (in blue), its projected path $\pi$ in the
    region game (in red), and one of its associated corner plays
    $\rho'$ (in green)}
  \label{fig:plays}
\end{figure}
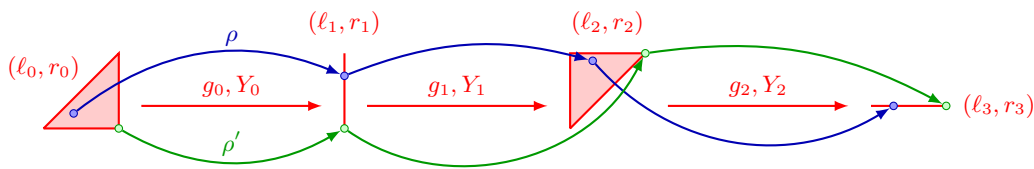

\noindent Corner plays allow one to obtain faithful information on the plays
that follow the same path:

\begin{lemma}\label{lm:cornerabstract}
  If~\rpath is a finite path in~$\Nrgame N$, the
  set~$\{\weightC(\play) \mid \play \text{ finite play following }
  \rpath \}$ is an interval bounded by the minimum and the maximum
  values of the
  set~$\{\weightC(\play) \mid \play \text{ finite corner play
  }\allowbreak \text{of } \Ncgame N \text{ following } \rpath \}$.
\end{lemma}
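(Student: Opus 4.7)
I fix the path $\rpath$ in $\Nrgame N$ of length $k$, with successive $1/N$-regions $r_0, \ldots, r_k$, guards $\guard_0, \ldots, \guard_{k-1}$ and reset sets $\reset_0, \ldots, \reset_{k-1}$, and I parametrize finite plays following $\rpath$ by tuples $(\val_0, d_0, \ldots, d_{k-1}) \in \R^{|\Clocks|+k}$ consisting of the initial valuation together with the $k$ successive delays. Since $\val_{i+1} = (\val_i + d_i)[\reset_i := 0]$ is an affine function of those parameters, the admissibility conditions $\val_0 \in r_0$, $\val_i + d_i \models \guard_i$ and $\val_{i+1} \in r_{i+1}$ translate to a finite family of (open or closed) linear half-space constraints of the difference-bound form with right-hand sides in $(1/N)\Z$. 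Let $P \subseteq \R^{|\Clocks|+k}$ denote the resulting admissible set.

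Then $P$ is convex, and the cumulated weight $\weightC(\play) = \sum_{i=0}^{k-1}(d_i\,\weight(\loc_i) + \weight(\trans_i))$ is an affine function of the parameters. The continuous image of a convex, hence connected, set being connected, the set $\{\weightC(\play) \mid \play \text{ follows } \rpath\}$ is an interval of $\R$, which already establishes the first half of the statement. For the bounds, I pass to the closure $\overline P$, which is a compact convex polytope, compactness following from the standard boundedness of clocks by $\clockbound$. An affine function on a compact polytope attains its infimum and supremum at vertices, and because the defining constraints have the difference-bound form with right-hand sides in $(1/N)\Z$, the coordinates of any vertex of $\overline P$ lie in $(1/N)\N$. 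In particular each $\val_i$ sits on a corner of $\overline{r_i}$ and each delay $d_i$ is a multiple of $1/N$, so the tuple is precisely a finite corner play of $\Ncgame N$ following $\rpath$; conversely every corner play following $\rpath$ gives a point of $\overline P$. Hence $\min_{\overline P}\weightC$ and $\max_{\overline P}\weightC$ equal the extremal weights over corner plays following $\rpath$, and by continuity they bound $\weightC(P)$ as claimed.

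The principal technical subtlety is to check that a vertex of $\overline P$ truly encodes a bona fide transition of $\Ncgame N$, and not a spurious corner-to-corner jump: this amounts to exhibiting, for each such step, two valuations $\val \in r$, $\val' \in r'$ realising the timed transition in $\game$, a condition built explicitly into the definition of $\Ncgame N$. It is satisfied here because a vertex of $\overline P$ is the limit of points of $P$, each of which corresponds to a genuine play of $\game$ witnessing the required timed transition. A secondary point is that corner plays are allowed to evaluate guards in their closed form $\overline{\guard_i}$, which matches exactly the passage from $P$ to $\overline P$.
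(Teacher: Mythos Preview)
Your argument is correct and follows essentially the same route as the paper's proof, which also invokes convexity of the parameter set for the interval property and the corner-point abstraction for the bounds, though the paper delegates both steps to the literature (\cite{BouBri07} and \cite{BouBri08a}) rather than unfolding the polytope geometry as you do. One minor imprecision: in your parametrization by $(\val_0, d_0, \ldots, d_{k-1})$ the constraints are not literally of difference-bound form (some involve sums of several consecutive delays), but reparametrizing via absolute times $t_i = \sum_{j<i} d_j$ and virtual reset offsets $-\val_0(x)$ turns every constraint into a genuine difference $u - v \bowtie c$ with $c \in (1/N)\Z$, after which the vertex-integrality conclusion is standard.
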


\paragraph{Value iteration} We will rely on the value iteration algorithm
described in \cite{ABM04} for a \WTG~\game.

If $V$ represents a value function---i.e.~a mapping from configurations of
$\Locs\times\Rplus^\Clocks$ to a value in $\Rbar$---we denote by
$V_{\loc,\val}$ the image $V(\loc,\val)$, for better readability, and
by $V_\loc$ the function mapping each valuation $\val$ to
$V_{\loc,\val}$. One step of the game is summarised in the following
operator $\mathcal{F}$ mapping each value function $V$ to a value
function $V'=\mathcal{F}(V)$ defined by $V'_{\loc,\val}=\weightT(\loc,\val)$ if
$\loc\in\LocsT$, and otherwise
\begin{equation}V'_{\loc,\val}=
\begin{cases}
  \sup_{(\loc,\val)\xrightarrow{d,\trans}(\loc',\val')}
   \big[d\times\weight(\loc)+\weight(\trans)+V_{\loc',\val'}\big]
   &
  \text{if }\loc\in\LocsMax\\
  \inf_{(\loc,\val)\xrightarrow{d,\trans}(\loc',\val')}
   \big[d\times\weight(\loc)+\weight(\trans)+V_{\loc',\val'} \big] &
  \text{if }\loc\in\LocsMin
\end{cases}\label{eq:operator}\end{equation}
\noindent where $(\loc,\val)\xrightarrow{d,\trans}(\loc',\val')$
ranges over valid edges in \game. Then, starting from $V^0$
mapping every configuration to $+\infty$, except for the targets mapped
to $\weightT$, we let $V^i= \mathcal{F}(V^{i-1})$ for all $i>0$. The value
function $V^i$ represents the value $\Val^i_\game$, which
is intuitively what \MinPl can guarantee when forced to
reach the target in at most $i$ steps.

More formally, we define $\weight^i_\game(\play)$
the weight of a maximal play $\play$ at horizon $i$, as
$\weight_\game(\play)$ if $\play$ reaches a target state in at most
$i$ steps, and $+\infty$ otherwise. Using this alternative definition
of the weight of a play, we can obtain a new game value
$\Val^i_\game(\loc,\val)=\inf_{\minstrategy}\sup_{\maxstrategy}
\weight^i_\game(\outcomes((\loc,\val),\maxstrategy,\minstrategy))$.
Then, if $\game$ is a tree of depth $d$,
$V^i {=} \Val_\game$ if $i\ge d$.

The mappings~$V^0_\loc$ are piecewise linear for all
$\loc$, and $\mathcal{F}$ preserves piecewise linearity over regions,
so all iterates $V^i_\loc$ are piecewise linear with a finite number
of pieces. In \cite{ABM04}, it is proved that $V^i_\loc$ has a number
of pieces (and can be computed within a complexity) exponential in~$i$
and in the size of \game when $\weightT=0$. This result can be
extended to handle negative weights in \game and output weights $\weightT\neq0$.

\section{Results}

We consider the \emph{value problem} that asks, given a \WTG $\game$,
a location $\loc$ and a threshold
$\alpha\in \Z\cup\{-\infty,+\infty\}$, to decide whether
$\Val_\game(\loc,\valnull)\leq \alpha$. In the context of timed games,
optimal strategies may not exist. We generally focus on finding
$\varepsilon$-optimal strategies, that guarantee the optimal value, up
to a small error $\varepsilon$.  Moreover, when the value problem is
undecidable, we also consider the \emph{approximation problem} that
consists, given a precision $\varepsilon\in \Q_{>0}$, in
computing
an $\varepsilon$-approximation of $\Val_\game(\loc,\valnull)$.

In the one-player case, computing the optimal
value and an $\varepsilon$-optimal strategy for weighted timed
automata is known to be $\PSPACE$-complete \cite{BouBri07}. In the
two-player case, the value problem of \WTG{s} (also called priced
timed games in the literature) is undecidable with 3 clocks
\cite{BBR05,BJM15}, or even 2 clocks in the presence of negative
weights \cite{BGNK+14} (for the existence problem asking if a strategy
of player \MinPl can guarantee a given threshold). To obtain
decidability, one possibility is to limit the number of clocks to 1:
then, there is an exponential-time algorithm to compute the value as
well as $\varepsilon$-optimal strategies in the presence of
non-negative weights only~\cite{BBM06,Rut11,DueIbs13}, whereas the
problem is only known to be $\P$-hard. A similar result can be lifted
to arbitrary weights, under restrictions on the resets of the clock in
cycles~\cite{BGH+15}.

The other possibility to obtain a decidability
result~\cite{BCFL04,BMR17a} is to enforce a semantical property of
divergence (originally called strictly non-Zeno cost): it asks that
every play following a cycle in the region automaton has weight far
from $0$. It allows the authors to prove that playing for only a
bounded number of steps is equivalent to the original game, which
boils down to the problem of computing the value of a tree-shaped
weighted timed game \game using the value iteration algorithm.

Other objectives, not directly related to optimal reachability, have
been considered in~\cite{BCR14} for weighted timed games, like
mean-payoff and parity objectives. In this work, the authors manage to
solve these problems for the so-called class of $\delta$-robust \WTG{s}
that they introduce. This class includes the class we consider, but is
decidable in 2-$\EXPSPACE$.

In \cite{BMR17a}, we generalised the strictly non-Zeno cost property
of \cite{BCFL04,BMR17a} to weighted timed games with both positive and
negative weights: we called them divergent weighted timed games. This
article relaxes the divergence property, to introduce almost-divergent
weighted timed games. We first define formally these classes of
games. A cycle~\rpath of~\rgame is said to be a positive cycle
(resp.~a 0-cycle, or a negative cycle) if every finite play~\play
following~\rpath satisfies $\weightC(\play)\geq 1$
(resp.~$\weightC(\play)=0$, or $\weightC(\play)\leq -1$). A strongly
connected component (SCC)~$S$ of~\rgame is said to be positive
(resp.~negative) if every cycle~$\rpath\in S$ is positive
(resp.~negative).  An SCC~$S$ of~\rgame is said to be non-negative
(resp.~non-positive) if every play~\play following a cycle in~$S$
satisfies either $\weightC(\play)\geq 1$ or $\weightC(\play)=0$
(resp.~either $\weightC(\play)\leq -1$ or $\weightC(\play)=0$).

\begin{definition}\label{def:divergent}
  A \WTG~\game is divergent if every SCC of~\rgame is either positive or
  negative. As a generalisation, a \WTG~\game is
  almost-divergent when every SCC of~\rgame is either non-negative or
  non-positive.
\end{definition}

In~\cite{BMR17a}, we showed that we can decide in $2\textrm{-}\EXP$
the value problem for divergent \WTG{s}. Unfortunately, it is shown in
\cite{BJM15} that this problem is undecidable for almost-divergent
\WTG{s} (already with non-negative weights only, where
almost-divergent \WTG{s} are called \emph{simple}). They propose a
solution to the approximation problem, again with non-negative weights
only. Our first result is the following extension of their result:
\begin{theorem}\label{thm:almost-div}
  Given an almost-divergent \WTG \game, a location $\loc$ and $\varepsilon \in \Q_{>0}$,
  we can compute an $\varepsilon$-approximation of $\Val_\game(\loc,\valnull)$ in %
   time doubly-exponential in the size of \game
  and polynomial in $1/\varepsilon$.
  Moreover, deciding if a \WTG is almost-divergent is
  \PSPACE-complete.
\end{theorem}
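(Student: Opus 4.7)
The plan is to mirror and extend the approximation schema of \cite{BJM15}, replacing the a priori refinement of regions by a value-iteration-based computation, and adding the machinery needed to cope with negative weights. The core idea is to isolate the problematic $0$-cycles into sub-games called \emph{kernels}, replace each kernel by a precomputed approximated value function on its exit configurations, and then solve a \emph{semi-unfolding} of \game of bounded depth in which kernels are kept untouched. Fixing a desired precision $\varepsilon$, we refine regions into $1/N$-regions for some $N$ polynomial in $1/\varepsilon$ and apply the value-iteration operator $\mathcal{F}$ defined in~\eqref{eq:operator} on the corresponding refined game.

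First, I would study \emph{kernels}, defined as the maximal sub-WTGs whose region-automaton cycles all have weight exactly $0$. In contrast to \cite{BJM15}, a $0$-cycle can now traverse locations and transitions of arbitrary signs, so identifying kernels requires tracking the real-time dynamics rather than merely looking at syntactic weight-$0$ components. I would prove that on a kernel the optimal value $\Val_\game(\loc,\val)$ is Lipschitz in $\val$ with a controlled constant depending on $\wmax^e$ and \clockbound, and can be $\varepsilon$-approximated on each region symbolically. Second, I would address the fact that almost-divergent games may exhibit configurations of value $-\infty$. Using the hypothesis that each SCC of \rgame is either non-negative or non-positive, one shows that a $-\infty$ configuration is witnessed by a \MinPl-strategy reaching a non-positive SCC containing a strictly negative corner cycle, detectable via Lemma~\ref{lm:cornerabstract}; this yields Lemma~\ref{lm:-infty} and allows collapsing those configurations into an absorbing target of weight $-\infty$.

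With kernels approximated and $-\infty$ configurations identified, the next step is to bound the depth of the semi-unfolding. Outside kernels, each full traversal of an SCC of \rgame strictly separates the accumulated weight from $0$, so after sufficiently many SCC traversals the weight dominates $\varepsilon$ or escapes any finite optimal value. Combining this separation with an upper bound on finite values (extracted from $\wmax^e$, \clockbound, and the piecewise-linear shape of $\weightT$) yields a horizon $H$ exponential in $|\game|$ and polynomial in $1/\varepsilon$ such that $V^H = \Val_{\hgame}$ up to $\varepsilon$ on the semi-unfolded game \hgame. Each application of $\mathcal{F}$ preserves piecewise linearity over $1/N$-regions and costs time exponential in the size of the $1/N$-refined region game, which is itself exponential in $|\game|$ and polynomial in $N$; the overall running time is therefore doubly exponential in $|\game|$ and polynomial in $1/\varepsilon$, as claimed. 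The most delicate step of the whole construction is the approximation of kernels, since weight can be both accumulated and lost along a single $0$-cycle; everything else follows standard lines once kernels are tamed.

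For the second part, \PSPACE-membership of the almost-divergence check reduces to verifying, for each SCC $S$ of \rgame, that every cycle of $S$ is either a $0$-cycle or has corner-play weights entirely in $[1,+\infty)$, respectively entirely in $(-\infty,-1]$; by Lemma~\ref{lm:cornerabstract} this reduces to sign properties of minimum and maximum cycle weights in the corner graph \cgame, which can be decided by on-the-fly nondeterministic exploration of \rgame and \cgame (both of exponential size but polynomially-encoded states) in \coNPSPACE, hence in \PSPACE by Savitch. \PSPACE-hardness is inherited from the divergence problem of \cite{BMR17a}: a reduction builds a product gadget that forbids $0$-cycles, thereby making almost-divergence coincide with divergence on the reduced instance.
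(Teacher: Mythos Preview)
Your overall architecture---kernels, semi-unfolding, corner-based approximation inside kernels, bottom-up combination---matches the paper. However, several steps are either misconceived or contain genuine gaps.

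\textbf{Detection of $-\infty$ configurations.} Your criterion ``\MinPl can reach a non-positive SCC containing a strictly negative corner cycle'' is neither necessary nor sufficient. In a non-positive SCC, \MaxPl may be able to trap the play inside the kernel forever, or force it into a target with finite output weight, even when negative cycles exist; conversely, value $-\infty$ can also arise from reaching a target with $\weightT=-\infty$. The paper's characterisation (Lemma~\ref{lm:-infty}) is that a state has value $-\infty$ iff \MinPl can ensure the LTL condition $(\mathrm G\,\neg\RTransF^\R\wedge \neg \mathrm F \mathrm G\, \RTransK)\vee \mathrm F\, \RTransF^{-\infty}$, i.e.~either reach a $-\infty$ target, or avoid finite targets while not being confined to the kernel. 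Your sketch misses the interaction with \MaxPl and with the kernel.

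\textbf{Depth of the semi-unfolding.} You claim the horizon $H$ is ``polynomial in $1/\varepsilon$''. In the paper it is \emph{independent} of $\varepsilon$: the semi-unfolding realises an \emph{exact} equality $\Val_\game=\Val_\tgame$ (Proposition~\ref{prop:semi-unfolding}), not an $\varepsilon$-approximation. The depth bound comes from the fact that any play visiting a non-kernel state more than $3|\rgame|\wmax^e+2\sup|\weightT|+2$ times already exceeds, in cumulated weight, the a priori bound $|\rgame|\wmax^e+\sup|\weightT|$ on all finite values (Lemma~\ref{lm:pathbound}); no $\varepsilon$ enters here. The precision $\varepsilon$ appears only when approximating each kernel via the $1/N$-corner game with $N\sim \myconst/\varepsilon$. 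Conflating these two roles also explains your confusing opening sentence (``replacing the a priori refinement\ldots'' followed by ``we refine regions into $1/N$-regions''): the paper refines only inside kernels, and solves the resulting \emph{untimed} corner game exactly, rather than running $\mathcal F$ on a globally refined timed game.

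\textbf{\PSPACE-hardness.} Your reduction ``from divergence, via a gadget that forbids $0$-cycles'' does not work as stated: every divergent game is already almost-divergent, so yes-instances transfer trivially, but a non-divergent game may still be almost-divergent (precisely when the offending cycles are $0$-cycles), and it is unclear how a gadget could eliminate $0$-cycles while preserving (non-)divergence. The paper instead reduces directly from reachability in timed automata: put weight $0$ everywhere, add a self-loop of weight $+1$ on the initial state and one of weight $-1$ on the target, plus an edge back from target to initial; the resulting \WTG fails almost-divergence iff the target is reachable, since only then do a positive and a negative cycle coexist in one SCC.
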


\label{page:SCCs}
To obtain this result, we follow an approximation schema that
we now outline. First, we will always reason on the region game
$\rgame$ of the almost-divergent \WTG $\game$. The goal is to compute
an $\varepsilon$-approximation of $\Val_{\rgame}(s_0,\valnull)$ for
some state $s_0=(\loc_0,r_0)$, with $r_0$ the region where every clock
value is 0.
As already recalled, techniques of \cite{ABM04} allow one to compute
the (exact) values of a \WTG played on a finite tree, using operator
$\mathcal F$. The idea is thus to decompose as much as possible the
game $\rgame$ in a \WTG over a tree. First, we decompose the region
game into SCCs (left of Figure~\ref{fig:schema}).

During the approximation process, we must think about the final weight
functions as the previously computed approximations of the values of
SCCs below the current one.  We will
keep as an invariant that final weight functions are piecewise linear
functions with a finite number of pieces, and are continuous
 on each region.

 For an SCC of $\rgame$ and an initial state $s_0$ of $\rgame$
 provided by the SCC decomposition, we show that the game on the SCC
 is equivalent to a game on a tree built from a semi-unfolding (see
 middle of Figure~\ref{fig:schema}) of $\rgame$ from $s_0$ of finite
 depth, with certain nodes of the tree being \emph{kernels}.  These
 kernels are some parts of \rgame %
 that contain all cycles of weight 0. The semi-unfolding is stopped
 either when reaching a final location, or when some location (or
 kernel) has been visited for a certain fixed number of times: such
 locations deep enough are called \emph{stop leaves}.

\begin{figure}[tbp]
  \centering
  \scalebox{1}{\begin{tikzpicture}[>=latex]
    \begin{scope}[every node/.style={draw,shape=ellipse,minimum
        height=5mm, minimum width=8mm},level/.style={sibling distance=2cm/#1},level distance=8mm,->]
      \node (c) {}
      child { node (a) {}
        child { node (d) {} }
        child { node {} } }
      child { node {}
        child { node (b) {} }
        child { node {} } };
    \end{scope}
    \path[draw,gray,dashed] (1.25,-.5) to[bend right=10] (3.7,.9);
    \path[draw,gray,dashed] (1.25,-1.1) to[bend left=35] (2.6,-2.5);
    \begin{scope}[xshift=4.5cm,yshift=8mm,every
      node/.style={draw,shape=circle,minimum size=5mm,inner sep=0.3mm},level/.style={sibling distance=2cm/#1},level distance=8mm,->]
      \node {$s_0$}
      child { node {$s$}
        child { node[regular polygon,regular polygon sides=6,inner sep=0.1mm] {\makebox[1em][c]{\smaller$\Kernel_{s'}$}}
          child { node {$s$}
            child { node[double,regular polygon,regular polygon sides=3,inner
              sep=0.1mm] (stop) {$s$} }
          }
        }
        child { node {}
          child { node[double,inner sep=0] (out) {$s_f$} }
        }
      }
      child { node {}
        child { node[regular polygon,regular polygon sides=6,inner
          sep=0.1mm] {\makebox[1em][c]{\smaller$\Kernel_{s''}$}}
          child { node[double,inner sep=0] {$s_f$} }
          child { node [double,regular polygon,regular polygon sides=3,inner
              sep=0.1mm] {} }
        }
      };
      \node[node distance=3mm,below of=stop,draw=none] () {\smaller[2]{stop leaf}};
      \node[node distance=5mm,below of=out,draw=none] () {\smaller[2]{$\weightT(s_f)$}};
    \end{scope}
    \path[draw,gray,dashed] (5.8,-.5) to[bend right=10] (8.2,.9);
    \path[draw,gray,dashed] (5.8,-1.1) to[bend left=35] (6.5,-2.5);
    \begin{scope}[xshift=9cm,yshift=6mm,node distance=2cm,scale=0.75, every node/.style={transform shape}]
      \node[draw,circle,minimum size=7mm] (0) {
        \begin{tikzpicture}[scale=0.4, every node/.style={transform shape}]
          \path[draw] (0,0) node[fill=black,minimum size=1mm] {}-- (1,1) -- (1,0) -- cycle;
        \end{tikzpicture}
      };
      \node[draw,circle,below of=0,xshift=-1.5cm,minimum size=7mm] (1) {
         \begin{tikzpicture}[scale=0.4, every node/.style={transform shape}]
          \path[draw] (0,0) node[fill=black,minimum size=1mm] {}-- (1,1) -- (1,0) -- cycle;
        \end{tikzpicture}
      };
      \node[draw,circle,below of=0,xshift=2cm,minimum size=7mm] (2) {
        \begin{tikzpicture}[scale=0.4, every node/.style={transform shape}]
          \path[draw] (0,0)-- (1,1)  node[fill=black,minimum size=1mm] {} -- (1,0) -- cycle;
        \end{tikzpicture}
      };
      \node[draw,rectangle,below of=1,xshift=-1.3cm,minimum size=7mm] (3) {
        \begin{tikzpicture}[scale=0.4, every node/.style={transform shape}]
          \path[draw] (0,0) node[circle,fill=black,minimum size=1mm] {}-- (1,0);
        \end{tikzpicture}
      };
      \node[draw,rectangle,below of=1,xshift=7mm,minimum size=7mm] (4) {
         \begin{tikzpicture}[scale=0.4, every node/.style={transform shape}]
          \path[draw] (0,0) node[circle,fill=black,minimum size=1mm] {}-- (1,1) -- (1,0) -- cycle;
        \end{tikzpicture}
      };
      \node[draw,rectangle,below of=2,xshift=-1cm,minimum size=7mm] (5) {
        \begin{tikzpicture}[scale=0.4, every node/.style={transform shape}]
          \path[draw] (0,0) node[circle,fill=black,minimum size=1mm] {};
        \end{tikzpicture}      };
      \node[draw,circle,below of=2,xshift=1cm,minimum size=7mm] (6) {
        \begin{tikzpicture}[scale=0.4, every node/.style={transform shape}]
          \path[draw] (0,0) node[fill=black,minimum size=1mm] {}-- (1,0) -- (0,-1) -- cycle;
        \end{tikzpicture}
      };

      \path[->] (0) edge node[above left]{0} (1)
      (0) edge node[above right]{$1$} (2)
      (1) edge node[above left]{$-3$} (3)
      (1) edge[bend left=10] node[above right]{$-1$} (4)
      (2) edge node[above left]{$2$} (5)
      (2) edge node[above right]{$1$} (6)
      (3) edge node[below]{$2$} (4)
      (3) edge[bend left=40] node[left]{$3$} (0)
      (4) edge[bend left=10] node[below left]{$1$} (1)
      (4) edge node[below]{$4$} (5)
      (5) edge[bend left=5] node[left]{$-3$} (0)
      (6) edge node[below]{$1$} (5)
      (6) edge[bend left] node[below]{$-3$} (4);
    \end{scope}
  \end{tikzpicture}}
\caption{Static approximation schema: SCC decomposition of \rgame,
  semi-unfolding of an SCC, corner-point abstraction for the kernels}
  \label{fig:schema}
\end{figure}

Our second result is a more symbolic approximation schema based on the
value iteration only. It is more symbolic in the sense that it does
not require the SCC decomposition, the computation of kernels nor the
semi-unfolding of the game in a tree.

\begin{theorem}\label{thm:symbolic}
Let $\game$ be an almost-divergent \WTG
such that $\Val_\game>-\infty$ for all configurations.
Then the sequence
$(\Val^k_\game)_{k\geq0}$ converges towards $\Val_\game$ and for
every $\varepsilon\in\Q_{>0}$, we can compute an integer $P$ such that
$\Val^P_\game$ is an $\varepsilon$-approximation of
$\Val_\game$
for all configurations. %
\end{theorem}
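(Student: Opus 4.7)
The plan is to establish convergence via monotonicity and then leverage the static approximation schema underlying Theorem~\ref{thm:almost-div} to bound the rate of convergence.

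First I would observe that the sequence $(\Val^k_\game)_{k\geq 0}$ is pointwise decreasing. Indeed, for any maximal play $\play$, the horizon-$k$ weight $\weight^k_\game(\play)$ is non-increasing in $k$: the $+\infty$ penalty applied when $\play$ fails to reach $\LocsT$ within $k$ steps only becomes weaker as $k$ grows. Taking infima over MinPl strategies and suprema over MaxPl strategies preserves this inequality, so $\Val^{k+1}_\game \leq \Val^k_\game$; moreover, $\Val^k_\game \geq \Val_\game$ for all $k$ because the unrestricted weight $\weight_\game$ is itself dominated by $\weight^k_\game$. Under the hypothesis $\Val_\game > -\infty$, the pointwise limit $V^\infty := \lim_k \Val^k_\game$ therefore exists and satisfies $V^\infty \geq \Val_\game$.

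For the converse direction, I would argue that for every $\varepsilon > 0$ MinPl admits an $\varepsilon$-optimal strategy whose outcomes reach the target in a uniformly bounded number $P$ of steps; this immediately yields $\Val^P_\game \leq \Val_\game + \varepsilon$, giving both the convergence $V^\infty = \Val_\game$ and the computable bound. To produce such a bounded-horizon strategy, I would re-use the static schema underpinning Theorem~\ref{thm:almost-div}: an $\varepsilon/2$-optimal strategy of MinPl can be extracted from the finite semi-unfolding of $\rgame$ of controlled depth $D$ (computable from $|\game|$ and $1/\varepsilon$). A branch of that semi-unfolding alternates between individual transitions of $\rgame$ and full kernel traversals; inside each kernel, near-optimal play can be realised by a finite strategy read off from the corner-point abstraction, of length polynomial in $|\rgame|$ and the granularity $N$ chosen for the refinement of regions.

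The main obstacle will be bounding the number of actual game steps needed to optimally traverse kernels. Kernels collect the $0$-cycles of $\rgame$, and in an almost-divergent \WTG both players could a priori loop inside them indefinitely without modifying the cumulated weight, so nothing rules out an $\varepsilon$-optimal strategy that invokes arbitrarily many such loops. The resolution combines two ingredients: the assumption $\Val_\game > -\infty$ (together with Lemma~\ref{lm:-infty} which makes this hypothesis algorithmically controllable) rules out MaxPl exploiting kernels to force unbounded plays, and the corner-point abstraction of a kernel is a finite weighted graph whose near-optimal plays have length polynomial in the size of the kernel, thanks to Lemma~\ref{lm:cornerabstract} which ties the interval of achievable cumulated weights along a path to the values at corners. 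Assembling the depth bound $D$ with the per-kernel bound yields an explicit $P$, doubly-exponential in $|\game|$ and polynomial in $1/\varepsilon$ (matching the complexity of Theorem~\ref{thm:almost-div}), such that $\Val^P_\game$ is a uniform $\varepsilon$-approximation of $\Val_\game$.
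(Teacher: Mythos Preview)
Your overall strategy---monotonicity of $(\Val^k_\game)_k$ plus extraction of a bounded-horizon near-optimal strategy for $\MinPl$ from the semi-unfolding, with kernel traversals bounded via the corner-point abstraction---is essentially the paper's route, though the paper phrases the kernel step as a direct comparison of $\Val^P_\game$ with $\Val^P_{\Ncgame N}$ (Lemma~\ref{lm:symbolic-kernel}) rather than through strategy extraction.

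There is a genuine gap in the non-positive SCC case that you do not address. In the semi-unfolding $\tgame$ of such an SCC, stop leaves carry output weight $-\infty$. An $\varepsilon$-optimal strategy $\stratmin$ for $\MinPl$ in $\tgame$ may therefore steer some outcomes toward those leaves: this is perfectly compatible with $\Val_\tgame$ being finite, since it is $\MaxPl$ who has an incentive to avoid them, not $\MinPl$. But a play that reaches a stop leaf in $\tgame$ corresponds in $\game$ only to a long \emph{prefix}; your extracted strategy is undefined past that point, and you cannot conclude that every outcome of $\stratmin$ in $\game$ reaches $\LocsT$ within $P$ steps. The paper handles this separately (Lemma~\ref{lm:symbolic-scc-minus}): it extends the semi-unfolding by $|\rgame|$ further layers and then \emph{switches the stop-leaf weight to $+\infty$}, proving the value is unchanged; with $+\infty$ at the leaves, $\MinPl$'s near-optimal strategy necessarily avoids them, and the bounded-horizon argument goes through. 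You need the same fix.

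A secondary point: you assert $P$ is doubly-exponential, matching Theorem~\ref{thm:almost-div}. The paper's own analysis of this symbolic schema yields only a tower of $\alpha$ exponentials (see the discussion after Lemma~\ref{lm:symbolic-scc-plus}), because replacing a kernel by its depth-$P_\Kernel$ unfolding makes the Lipschitz constant feeding the next kernel up blow up exponentially in $P_\Kernel$. Your strategy-extraction viewpoint, which draws each $P_\Kernel$ from the Lipschitz constants of the \emph{static} schema rather than from a fully unfolded tree, may in fact sidestep this cascade and give the doubly-exponential bound you claim---but that would be an improvement over the paper's stated bound, and you have not argued it.
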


\begin{remark}
  In a weighted-timed game, it is easy to detect the set of states
  with value $+\infty$: these are all the states from which \MinPl
  cannot ensure reachability of a target location~$\loc\in\LocsT$ with
  $\weightT(\loc)<+\infty$. It can therefore be computed by an
  attractor computation, and is indeed a property constant on each
  region. In particular, removing those states from~\rgame does not
  affect the value of any other state and can be done in complexity linear
  in $|\rgame|$.
  We will therefore assume that the considered \WTG
  have no configurations with value $+\infty$.%
\end{remark}

\section{Kernels of an almost-divergent \WTG}\label{sec:kernels-in-wtg}

The approximation procedure described before uses the so-called
\emph{kernels} in order to group together all cycles of weight 0. We
study those kernels and give a characterisation allowing computability.
Contrary to the non-negative case,
the situation is more complex in our arbitrary case, since weights of
both locations and transitions may differ from $0$ in the
kernel. Moreover, it is not trivial (and may not be true in a
non almost-divergent \WTG) to know whether it is sufficient to
consider only simple cycles, i.e.~cycles without repetitions.

To answer these questions, let us first analyse the cycles of \rgame
that we will encounter.  Since we are in an almost-divergent game, by
Lemma~\ref{lm:cornerabstract}, all cycles
$\rpath=\rtrans_1\cdots\rtrans_n$ of~\rgame (with
$\rtrans_1,\ldots,\rtrans_n$ transitions of~$\rgame$) are either
0-cycles, positive cycles or negative cycles. Additionally, in an
SCC~$S$ of~\rgame, we cannot find both positive and negative cycles by definition.
Moreover, we can classify a cycle by looking only at the corner plays
following it.
\begin{lemma}\label{lm:exist0}
  A cycle~\rpath is a 0-cycle iff there exists a corner play~\play
  following~\rpath with $\weightC(\play){=}0$.
\end{lemma}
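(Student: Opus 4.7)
The plan is to combine Lemma~\ref{lm:cornerabstract} with the classification of cycles in an almost-divergent game recalled just before the statement: in an almost-divergent \WTG, every cycle of \rgame is either a 0-cycle, a positive cycle, or a negative cycle.

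For the forward direction, assume $\rpath$ is a 0-cycle, so every finite play following $\rpath$ has cumulated weight $0$. By Lemma~\ref{lm:cornerabstract}, this set of weights is an interval bounded by the minimum and maximum of $\weightC$ over corner plays following $\rpath$. Since the interval is $\{0\}$, both bounding corner plays have weight $0$, which yields the required corner play.

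For the converse, suppose there is a corner play $\play$ following $\rpath$ with $\weightC(\play) = 0$. Then the minimum (resp.\ maximum) of $\weightC$ over corner plays following $\rpath$ is at most (resp.\ at least) $0$, so by Lemma~\ref{lm:cornerabstract} the value $0$ lies in the interval $I$ of weights of plays following $\rpath$. Let $S$ be the SCC of $\rgame$ containing $\rpath$. By almost-divergence, $S$ is non-negative or non-positive, hence $I \subseteq \{0\} \cup [1,+\infty)$ or $I \subseteq \{0\} \cup (-\infty,-1]$. Since $I$ is an interval containing $0$ while each of these sets excludes the open interval $(0,1)$ or $(-1,0)$, the only possibility is $I = \{0\}$. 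Thus every finite play following $\rpath$ has weight $0$, i.e.\ $\rpath$ is a 0-cycle.

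There is no real obstacle: the statement is a short consequence of Lemma~\ref{lm:cornerabstract} together with the gap between $0$ and $\pm 1$ forced by almost-divergence. The single key observation is that the interval structure furnished by Lemma~\ref{lm:cornerabstract}, combined with this gap, is precisely what prevents $I$ from extending beyond $\{0\}$ once it contains $0$; without almost-divergence, a mere existence of a weight-$0$ corner play would not be enough to force all plays along $\rpath$ to have weight $0$.
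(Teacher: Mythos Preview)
Your proof is correct and takes essentially the same approach as the paper: both directions rely on Lemma~\ref{lm:cornerabstract} together with the almost-divergence gap around $0$. The paper's converse phrases the contradiction as ``otherwise $I$ would meet $(-1,1)\setminus\{0\}$'' (first concluding that all corner plays have weight $0$), whereas you argue directly that an interval $I$ touching $0$ and disjoint from $(0,1)$ or $(-1,0)$ must reduce to $\{0\}$---a cosmetic rephrasing of the same idea.
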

\begin{proof}
  If \rpath is a 0-cycle, every such corner play~\play will have
  weight~$0$, by Lemma~\ref{lm:cornerabstract}.  Reciprocally, if such
  a corner play exists, all corner plays following~\rpath have
  weight~$0$: otherwise the set
  $\{\weightC(\play) \mid \play \text{ play following } \rpath \}$
  would have non-empty intersection with the set
  $(-1,1)\setminus\{0\}$ which would contradict the almost-divergence.
\end{proof}

An important result is that 0-cycles are stable by rotation.  This is
not trivial because plays following a cycle can start and end in
different valuations, therefore changing the starting state of the
cycle could a priori change the plays that follow it and their
weights.
\begin{lemma}\label{lm:rotatcycle}
  Let~\rpath and $\rpath'$ be paths of~\rgame. Then, $\rpath\rpath'$
  is a 0-cycle iff $\rpath'\rpath$ is a 0-cycle.
\end{lemma}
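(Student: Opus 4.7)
}
By symmetry it suffices to show that if $\rpath\rpath'$ is a 0-cycle then so is $\rpath'\rpath$. Both cycles visit exactly the same set of states of $\rgame$, so they lie in a common SCC $S$. Since $\game$ is almost-divergent, $S$ is either non-negative or non-positive, so every cycle of $S$ is a 0-cycle or a positive cycle (resp.\ a 0-cycle or a negative cycle). Arguing by contradiction, assume WLOG that $S$ is non-negative and that $\rpath'\rpath$ is a positive cycle, so that every finite play following $\rpath'\rpath$ has cumulated weight at least~$1$. The plan is to use the associativity identity $(\rpath\rpath')^{n+1} = \rpath\,(\rpath'\rpath)^n\,\rpath'$ to amplify the positive contribution of $\rpath'\rpath$ inside a play whose total weight is forced to be~$0$ by the 0-cycle hypothesis.

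Concretely, let $C$ be an upper bound on the absolute value of $\weightC(\Pi)$ for any corner play $\Pi$ following $\rpath$ or $\rpath'$ (such a $C$ exists since these are corner plays of fixed length using transitions of bounded weight). Fix $n > 2C$. Because $\rpath\rpath'$ is a 0-cycle, any finite play following $(\rpath\rpath')^{n+1}$ decomposes into $n+1$ plays each following $\rpath\rpath'$, hence has weight $0$; in particular, this is true for any corner play following $(\rpath\rpath')^{n+1}$. Such a corner play $\Pi$ exists since plays following $(\rpath\rpath')^{n+1}$ exist (obtained by iteration) and Lemma~\ref{lm:cornerabstract} guarantees that the set of weights of corner plays is non-empty.

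Splitting $\Pi$ at the corners reached between successive sub-paths of $(\rpath\rpath')^{n+1} = \rpath(\rpath'\rpath)^n\rpath'$ yields $\Pi = \Pi_0\,\Pi_1\cdots\Pi_n\,\Pi_{n+1}$ where $\Pi_0$ is a corner play following $\rpath$, each $\Pi_i$ ($1 \le i \le n$) is a corner play following $\rpath'\rpath$, and $\Pi_{n+1}$ is a corner play following $\rpath'$. By the positive cycle hypothesis applied to each $\Pi_i$ and the definition of $C$ for $\Pi_0$ and $\Pi_{n+1}$, we obtain
\[
0 \;=\; \weightC(\Pi) \;=\; \weightC(\Pi_0) + \sum_{i=1}^n \weightC(\Pi_i) + \weightC(\Pi_{n+1}) \;\ge\; -C + n - C \;=\; n - 2C \;>\; 0,
\]
a contradiction. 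Hence $\rpath'\rpath$ must also be a 0-cycle.

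The main technical point is justifying that the decomposition at the boundary between successive sub-paths produces \emph{corner plays} following the sub-paths, and not merely plays. This is fine because a corner play of $\Ncgame N$ sits, at every intermediate region crossing, at a specific corner of that region; cutting the trajectory at such a state therefore yields two valid corner plays, each following the corresponding piece of the path. Everything else is bookkeeping with the almost-divergence hypothesis and the classification of cycles into positive, negative, and 0-cycles.
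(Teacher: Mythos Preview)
Your proof is correct and takes a genuinely different route from the paper's. The paper builds two sequences of corners $(\corv_i)$ and $(\corv'_i)$ by alternately following $\rpath$ and $\rpath'$, uses pigeonhole to find a repetition $\corv_k=\corv_l$, observes that the telescoping sum $\sum_{i=k}^{l-1}(w_{i+1}-w_i)=w_l-w_k=0$ forces each term $w_{i+1}-w_i$ (which is the weight of a corner play following $\rpath'\rpath$) to vanish because almost-divergence forces all these terms to share a sign, and then concludes via Lemma~\ref{lm:exist0}. Your amplification argument is more direct: you iterate the 0-cycle $(\rpath\rpath')^{n+1}=\rpath(\rpath'\rpath)^n\rpath'$ and let the hypothetical positive contribution of the middle $n$ blocks overwhelm the bounded endpoint contributions, reaching a numerical contradiction. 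Your approach avoids the corner-sequence bookkeeping entirely and does not need Lemma~\ref{lm:exist0}; the paper's approach, on the other hand, constructively exhibits a single corner play of weight~$0$ following $\rpath'\rpath$, which is a slightly stronger intermediate conclusion (and the kind of statement reused later in Lemma~\ref{lm:cyclescombi}).

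One small point to tighten: when you write ``by the positive cycle hypothesis applied to each $\Pi_i$'', note that the definition of a positive cycle speaks of \emph{plays} in $\game$, whereas $\Pi_i$ is a \emph{corner play}. You should invoke Lemma~\ref{lm:cornerabstract} (or rather its appendix proof) to justify that corner-play weights lie in the closure of the set of play weights, so that $\weightC(\Pi_i)\geq 1$ indeed follows. This is the only place where your argument is slightly elliptic; everything else is sound.
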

\begin{proof}
  Since~$\rpath_1=\rpath\rpath'$ is a cycle,
  $\first(\rpath)=\last(\rpath')$ and $\first(\rpath')=\last(\rpath)$,
  so~$\rpath_2=\rpath'\rpath$ is correctly defined.

  First, since there are finitely many corners, by constructing a long
  enough play following an iterate of $\rpath'\rpath$, we can obtain a
  corner play that starts and ends in the same corner.  Formally, we
  define two sequences of region
  corners~$(\corv_i\in\first(\rpath))_i$
  and~$(\corv'_i\in\first(\rpath'))_i$.  We start by choosing
  any~$\corv_0\in\first(\rpath)$.  Let~$\corv'_0$ be a corner
  of~$\first(\rpath')$ such that~$\corv'_0$ is accessible
  from~$\corv_0$ by following~\rpath.  For every~$i>0$, let~$\corv_i$
  be a corner of~$\first(\rpath)$ such that~$\corv_i$ is accessible
  from~$\corv'_{i-1}$ by following~$\rpath'$, and let~$\corv'_i$ be a
  corner of~$\first(\rpath')$ such that~$\corv'_i$ is accessible
  from~$\corv_i$ by following~$\rpath$.  We stop the construction at
  the first~$l$ such that there exists~$k<l$ with~$\corv_k=\corv_l$.
  Additionally, we let~$\corv'_l=\corv'_k$
  and~$\corv_{l+1}=\corv_{k+1}$.  This process is bounded
  since~$\first(\rpath)$ has at most~$|\Clocks|+1$ corners.

  For every~$0\leq i\leq l$, let~$w_i$ be the weight of a play
  $\play_i$ from~$\corv_i$ to~$\corv'_i$ along~\rpath, and let~$w'_i$
  be the weight of a play $\play'_i$ from~$\corv'_i$ to~$\corv_{i+1}$
  along~$\rpath'$. The concatenation of the two plays has weight
  $w_i+w'_i=0$, since it follows the 0-cycle $\rpath_1$. Therefore,
  all corner plays from $\corv_i$ to $\corv_i'$ following~\rpath have
  the same weight $w_i$, and the same applies for $w'_i$.  For
  every~$0\leq i< l$, the concatenation of $\play'_i$ and $\play_{i+1}$
  is a play from $\corv'_i$ to $\corv_{i+1}$, of
  weight~$w'_i+w_{i+1}=-w_i+w_{i+1}$,
  following~$\rpath_2$. Since~$\rpath_2$ is a cycle, and the game is
  almost-divergent, all possible values of $w_{i+1}-w_i$ have the same
  sign.

  Finally, we can construct a corner play from $\corv'_k$ to
  $\corv'_l$ by concatenating the plays
  $\play'_k, \play_{k+1}, \allowbreak \play'_{k+1}, \play_{k+2}, \ldots,
  \play'_{l-1}, \play_{l}$. That play has weight
  $\sum_{i=k}^{l-1} (w_{i+1}-w_i)=w_l-w_k=0$. This implies that the
  terms $w_{i+1}-w_i$, of constant sign, are all equal to $0$. As a
  consequence, the concatenation of $\play'_k$ and $\play_{k+1}$ is a
  corner play following $\rpath_2$ of weight $0$. By
  Lemma~\ref{lm:exist0}, we deduce that $\rpath_2$~is a 0-cycle.
\end{proof}

We will now construct the kernel~$\Kernel$ as the subgraph of~\rgame
containing all 0-cycles.  Formally, let \RTransK be the set of
transitions of~\rgame belonging to a \emph{simple} 0-cycle,
and~\RStatesK be the set of states covered by~\RTransK. We define the
kernel~$\Kernel$ of \rgame as the subgraph of~\rgame defined by
\RStatesK and \RTransK.  Transitions in $\RTrans\backslash\RTransK$
with starting state in \RStatesK are called the output transitions
of~\Kernel.  We define it using only simple 0-cycles in order to
ensure its computability. However, we now show that this is of no
harm, since the kernel contains exactly all the 0-cycles, which will
be crucial in the approximation schema we present in
Section~\ref{sec:approx}.

\begin{proposition}\label{prop:kernelprop}
  A cycle of $\rgame$ is entirely in~\Kernel if and only if it is a 0-cycle.
\end{proposition}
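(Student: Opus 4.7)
The plan is to prove both implications by strong induction on the length $n=|\rpath|$, based on a preliminary subclaim about concatenations of 0-cycles. Throughout, WLOG the SCC of $\rgame$ containing $\rpath$ is non-negative (the non-positive case is symmetric).

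I would first establish the following subclaim: if $X$ and $Y$ are cycles at a common state $s$, then $X \cdot Y$ is a 0-cycle if and only if both $X$ and $Y$ are 0-cycles. For the forward direction, Lemma~\ref{lm:exist0} yields a corner play following $XY$ of weight $0$; it decomposes into corner plays following $X$ and $Y$ whose weights lie in $\{0\}\cup[1,\infty)$ by almost-divergence and sum to $0$, so both must be $0$, and Lemma~\ref{lm:exist0} applies to each. The reverse direction is symmetric: every corner play following $X$ or $Y$ has weight $0$, so every corner play following $XY$ does too, and Lemma~\ref{lm:cornerabstract} together with Lemma~\ref{lm:exist0} concludes.

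For direction ($\Leftarrow$), assume $\rpath$ is a 0-cycle. If $\rpath$ is simple, it is itself a simple 0-cycle and all its transitions are in $\RTransK$ by definition. Otherwise some state $s$ is visited strictly inside $\rpath$, so $\rpath = A\cdot B\cdot C$ with $A\colon s_0\to s$, $B\colon s\to s$ a proper sub-cycle, and $C\colon s\to s_0$. By Lemma~\ref{lm:rotatcycle}, the rotation $B\cdot (CA)$ is a 0-cycle at $s$, and the subclaim gives that both $B$ and $CA$ are 0-cycles of length strictly less than $n$. The induction hypothesis then places all transitions of $B$ and $CA$, and hence all transitions of $\rpath$, into $\RTransK$.

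For direction ($\Rightarrow$), I argue by minimal counterexample: pick $\rpath\in\Kernel$ of smallest length $n$ that is not a 0-cycle, so $\rpath$ is a positive cycle. If $\rpath$ is not simple, decompose as above; $B$ and $CA$ are cycles of $\Kernel$ of length $<n$, hence 0-cycles by minimality; the reverse direction of the subclaim gives that $B\cdot(CA)$ is a 0-cycle, and Lemma~\ref{lm:rotatcycle} makes $\rpath$ one too, a contradiction. The main obstacle is the case where $\rpath$ is simple of length $n\ge 2$. Here each $\rtrans_i$ lies in a simple 0-cycle $C_i\neq\rpath$, and I would enlarge $\rpath$ into a cycle $W$ of $\Kernel$ by inserting, at each state $s_{i-1}$, a rotation of $C_i$ at $s_{i-1}$, yielding $W = \tilde C_1\rtrans_1\tilde C_2\rtrans_2\cdots \tilde C_n\rtrans_n$. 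The strategy is then to exhibit, via Lemma~\ref{lm:rotatcycle} and the subclaim, a non-trivial decomposition of $W$ (or of a further rotation/iteration of it) into sub-cycles of length strictly less than $n$, to which the induction hypothesis applies. An alternative route, which I would use as a fallback, is a Bellman--Ford-style potential argument on the corner abstraction $\cgame$ restricted to the relevant SCC: since every corner cycle has non-negative weight, a potential $h$ exists whose tight edges correspond exactly to transitions of $\RTransK$, and a pigeonhole argument over iterates $\rpath^k$ produces a corner play following some $\rpath^j$ whose weight telescopes to $0$; non-negativity of the $j$ summands then forces each $\rpath$-portion to have weight $0$, and Lemma~\ref{lm:exist0} concludes that $\rpath$ is a 0-cycle.
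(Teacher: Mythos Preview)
Your subclaim and the ($\Leftarrow$) direction are correct and match the paper's argument. The non-simple subcase of ($\Rightarrow$) is also fine. The genuine gap is in the simple subcase of ($\Rightarrow$). Your construction of $W=\tilde C_1\rtrans_1\cdots\tilde C_n\rtrans_n$ does not lead anywhere: repeatedly applying the subclaim and Lemma~\ref{lm:rotatcycle} to peel off the $\tilde C_i$ (which are $0$-cycles) shows precisely that $W$ is a $0$-cycle \emph{iff} $\rpath$ is a $0$-cycle, which is circular. And you cannot invoke the minimal-counterexample hypothesis on sub-cycles of $W$, since $|W|>n$ and the sub-cycles you would extract need not have length $<n$. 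Your fallback is also not a proof as stated: while every corner cycle in the SCC has non-negative weight, the claim that ``tight edges correspond exactly to transitions of $\RTransK$'' is false in general (a transition of $\RTransK$ lifts to several corner edges, and only those lying on some zero-weight corner cycle are tight), and the pigeonhole on $\rpath^k$ only yields a corner cycle following some $\rpath^{j}$ of \emph{non-negative} weight, not of weight $0$; nothing forces the telescoping sum to vanish.

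The paper closes this gap differently. It does not try to decompose $\rpath$ into shorter cycles. Instead, it proves a dedicated auxiliary lemma: for any path $\rtrans_1\cdots\rtrans_n$ in $\Kernel$, the cycle $\rtrans_1\cdots\rtrans_n\,\rpath_{\rtrans_n}\cdots\rpath_{\rtrans_1}$ is a $0$-cycle, where $\rpath_{\rtrans_i}$ is the ``return path'' making $\rtrans_i\rpath_{\rtrans_i}$ a simple $0$-cycle. That lemma is established by an induction on $n$ with a careful pumping argument on corners of $\last(\rtrans_{n+1})$, exploiting that both $\rtrans_{n+1}\rpath_{\rtrans_{n+1}}$ and its rotation are $0$-cycles so every corner play along them has weight $0$; this lets one align corners and produce two corner plays of opposite weight in the same SCC, forcing both to be $0$. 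Once this lemma is available, any cycle $\rpath$ in $\Kernel$ extends to a $0$-cycle $\rpath\rpath'$, and the observation from the ($\Leftarrow$) part (every sub-cycle of a $0$-cycle is a $0$-cycle) finishes. The missing idea in your attempt is precisely this corner-pumping argument that manufactures a weight-$0$ corner play without relying on a length-decreasing decomposition.
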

\begin{proof}
  We prove that every 0-cycle is in~\Kernel by induction on the length
  of the cycles. The initialisation contains only cycles of length
  $1$, that are in~\Kernel by construction.  If we consider a
  cycle~\rpath of length~$n>1$, it is either simple %
  or it can be rotated and decomposed into~$\rpath'\rpath''$, $\rpath'$ and
  $\rpath''$ being smaller cycles.  Let~\play be a corner play
  following~$\rpath'\rpath''$.  We denote by $\play'$ the prefix
  of~\play following~$\rpath'$ and $\play''$ the suffix
  following~$\rpath''$. It holds that
  $\weightC(\play')=-\weightC(\play'')$, and in an almost-divergent SCC
  this implies $\weightC(\play')=\weightC(\play'')=0$.  Therefore, by
  Lemma~\ref{lm:exist0} both~$\rpath'$ and~$\rpath''$ are 0-cycles,
  and they must be in~\Kernel by induction hypothesis.  Note that this
  reasoning %
  proves that every cycle contained in a longer 0-cycle is also a 0-cycle.

  {\makeatletter
    \let\par\@@par
    \par\parshape0
    \everypar{}\begin{wrapfigure}{r}{3.9cm}
      \vspace{-.3cm}
      \centering
      \scalebox{.8}{
        \begin{tikzpicture}[node distance=5cm,auto,->,>=latex]
          \def \n {5}
          \def \radius {1cm}
          \def \margin {9} %

          \foreach \s in {1,...,\n}
          {
            \node[draw,circle,minimum width=8pt](\s) at ({90+360/\n * (\s - 1)}:\radius) {};
            \draw ({90+360/\n * (\s - 1)+\margin}:\radius)
            arc ({90+360/\n * (\s - 1)+\margin}:{90+360/\n * (\s)-\margin}:\radius)
            node[midway]{$\rtrans_{\s}$};
          }
          \path[looseness=2]
          (1) edge[bend left=100] node[midway]{$\rpath_{\rtrans_5}$} (5)
          (5) edge[bend left=100] node[midway]{$\rpath_{\rtrans_4}$} (4)
          (4) edge[bend left=100] node[midway]{$\rpath_{\rtrans_3}$} (3)
          (3) edge[bend left=100] node[midway]{$\rpath_{\rtrans_2}$} (2)
          (2) edge[bend left=100] node[midway]{$\rpath_{\rtrans_1}$} (1);
        \end{tikzpicture}}
    \end{wrapfigure}
    We now prove that every cycle in \Kernel is a 0-cycle.  By
    construction, every transition $\rtrans\in \RTransK$ is part of a
    simple 0-cycle.  Thus, to every transition~$\rtrans\in \RTransK$,
    we can associate a path~$\rpath_\rtrans$ such that
    $\rtrans\rpath_\rtrans$ is a simple 0-cycle (rotate the simple
    cycle if necessary). We can prove (using both
    Lemmas~\ref{lm:exist0} and \ref{lm:rotatcycle}) the following property by relying
    on another pumping argument on corners:
      If $\rtrans_1\cdots\rtrans_n$ is a path in~\Kernel, then
      $\rtrans_1\rtrans_2\cdots\rtrans_n\rpath_{\rtrans_n}
      \cdots\rpath_{\rtrans_2}\rpath_{\rtrans_1}$ is a 0-cycle
      of~\rgame.
  Now, if $\rpath$ is a cycle of \rgame in~\Kernel, there exists a
  cycle~$\rpath'$ such that $\rpath\rpath'$ is a 0-cycle, therefore
  $\rpath$ is a 0-cycle.\par}%
\end{proof}

\section{Semi-unfolding of almost-divergent
  \WTG{s}}\label{sec:unfolding}

Given an almost-divergent \WTG $\game$, we describe the construction
of its \emph{semi-unfolding} $\tgame$ (as depicted
in~\figurename~\ref{fig:schema}).  This crucially relies on the
absence of states with value~$-\infty$, so we explain how to deal with
them first:
\begin{lemma}\label{lm:-infty}
  In an SCC of $\rgame$, the set of configurations with value
  $-\infty$
  is a union of regions
  computable in time linear in the size of $\rgame$.
\end{lemma}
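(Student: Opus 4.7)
The plan is to combine a bottom-up SCC processing of \rgame with an attractor-based characterisation of $-\infty$-value configurations that exploits almost-divergence. By induction on the reverse topological order of SCCs, we assume the set $F_\infty$ of $-\infty$-value configurations \emph{outside} $S$ is already known, and focus on the configurations of $S$ itself.

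By Definition~\ref{def:divergent}, $S$ is either non-negative or non-positive. If $S$ is non-negative, every cycle of $S$ has weight in $\{0\}\cup[1,+\infty)$, so the cumulated weight of any finite play remaining in $S$ is bounded below by $-|S|\,\wmax^e$, and no configuration of $S$ can reach value $-\infty$ for internal reasons. The $-\infty$ set is then the \MinPl-attractor in $S$ of the configurations with a transition into $F_\infty$, which is computable in linear time.

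If $S$ is non-positive, cycles have weight in $(-\infty,-1]\cup\{0\}$, and by Proposition~\ref{prop:kernelprop} the $0$-cycles are exactly the cycles contained entirely in the kernel~$\Kernel$ (restricted to $S$). Compute the safety game in $S$ whose winning region for \MinPl is the set $U\subseteq S\setminus\Kernel$ of configurations from which \MinPl can force the play to remain in $S\setminus\Kernel$ forever; this is a classical safety game, solvable in linear time. Any \MinPl-strategy witnessing $s\in U$ yields an infinite play staying in $S\setminus\Kernel$ which traverses some cycle infinitely often; that cycle lies outside~$\Kernel$ and hence, by non-positivity, has weight at most $-1$, so the cumulated weight tends to $-\infty$. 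Because no configuration has value $+\infty$ by assumption, \MinPl can at any prefix switch to a reachability strategy to a target, producing outcomes with arbitrarily small weight; hence every configuration in the \MinPl-attractor in $S$ of $U$ together with the transitions to $F_\infty$ has value~$-\infty$.

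The main obstacle is the converse inclusion: configurations outside this \MinPl-attractor have value strictly greater than $-\infty$. From such a configuration, \MaxPl has a positional strategy forcing every play, within $|S|$ steps, to either enter $\Kernel$ or exit $S$ to a non-$F_\infty$ state. Any later excursion back into $S\setminus\Kernel$ can be closed into a cycle by a $\Kernel$-path, yielding a cycle of weight $0$ or at most $-1$; combined with the fact that the current state does not belong to $U$, almost-divergence bounds the net weight \MinPl can accumulate under \MaxPl's strategy, yielding a uniform lower bound on the value of $s$. All the steps (non-negativity/non-positivity test on $S$, kernel look-up, safety game on $S\setminus\Kernel$, \MinPl-attractor) run in time linear in $|\rgame|$ and produce sets of states of \rgame, i.e.\ unions of regions in \game.
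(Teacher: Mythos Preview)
Your characterisation in the non-positive case is too restrictive, and the resulting set misses genuine $-\infty$ configurations. You define $U$ as the states from which \MinPl can keep the play inside $S\setminus\Kernel$ forever, i.e.\ a safety condition on \emph{states}. But a negative cycle may very well visit kernel states: what Proposition~\ref{prop:kernelprop} guarantees is only that a cycle is a $0$-cycle iff all its \emph{transitions} lie in $\RTransK$. Concretely, take Min-states $a,b,c$ with a $0$-cycle $a\to b\to a$ (so $a,b\in\RStatesK$) and a negative cycle $a\to b\to c\to a$ of weight $-5$, with $c\notin\RStatesK$, plus an exit $a\to t$ to a target with bounded final weight. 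Then \MinPl can iterate the negative cycle arbitrarily many times before exiting to $t$, so $\Val(a)=-\infty$. Yet from $c$ every move enters $\RStatesK$, hence $U=\emptyset$, $F_\infty=\emptyset$, and your \MinPl-attractor is empty; $a$ is not captured.

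The paper's characterisation avoids this by working with \emph{transitions}: a configuration has value $-\infty$ iff \MinPl can ensure the LTL condition $(\mathrm G\,\neg\RTransF^\R\wedge \neg \mathrm F\mathrm G\,\RTransK)\vee \mathrm F\,\RTransF^{-\infty}$, i.e.\ either reach a $-\infty$ target, or avoid finite-weight targets while taking a non-kernel transition infinitely often. This is a B\"uchi-type (not safety) condition, and it is exactly what makes the forward direction work via Proposition~\ref{prop:kernelprop}: any path using infinitely many non-kernel transitions contains infinitely many non-$0$-cycles, hence cumulated weight $\to-\infty$. Your converse sketch is also problematic: even if \MaxPl forces entry into $\RStatesK$, nothing prevents \MinPl from exiting again via a non-kernel transition and closing a negative cycle, so ``forcing entry into $\Kernel$'' does not by itself bound the value. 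You need instead the complementary condition that \MaxPl can ensure $\mathrm F\,\RTransF^\R\vee \mathrm F\mathrm G\,\RTransK$, which is what the paper exploits.
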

\begin{proof}[Sketch of proof]
If the SCC is non-negative, the cumulated weight cannot
decrease along a cycle,
  thus,
the only way to obtain value $-\infty$ is to jump in a final state
with final weight $-\infty$. We can therefore compute this set of
states with an attractor for~\MinPl.

If the SCC is non-positive,
  we let~$\RStatesF^\R$ (resp. $\RStatesF^{-\infty}$) be the set
  of target states where~\weightT is
  bounded (resp. has value $-\infty$). We also define $\RTransF^\R$
  (resp.~$\RTransF^{-\infty}$), the set of transitions of \rgame whose
  end state belongs to $\RStatesF^\R$ (resp.~$\RStatesF^{-\infty}$).
  Notice that the kernel cannot contain target states since
  they do not have outgoing transitions.
  We can prove that a configuration has value $-\infty$ iff it belongs to
  a state where player~\MinPl can ensure the LTL formula on
  transitions:
  $( \mathrm G\,\neg\RTransF^\R\wedge \neg \mathrm F \mathrm G
 \, \RTransK )\vee \mathrm F\, \RTransF^{-\infty}$.
  The procedure to detect $-\infty$
  states thus consists of four attractor computations, which can be
  done in time linear in $|\rgame|$.
\end{proof}

We can now assume that no states of $\game$ have value $-\infty$, and
that the output weight function maps all configurations to $\R$.
Since $\weightT$ is piecewise linear with finitely many pieces,
$\weightT$ is bounded. Let $\sup|\weightT|$ denote the bound of
$|\weightT|$, ranging over all target configurations.

We now explain how to build the semi-unfolding $\tgame$. We only build
the semi-unfolding \tgame of an SCC of $\game$ starting from some
state $(\loc_0,r_0)\in S$ of the region game, since it is then easy to
glue all the semi-unfoldings together to get the one of the full game.
Since every configuration has finite value, we can prove that values
of the game are bounded by $|\rgame|\wmax^e+\sup|\weightT|$. As a
consequence, we can find a bound $\gamma$ linear in $|\rgame|$,
$\wmax^e$ and $\sup|\weightT|$ such that a play that visits some state
outside the kernel more than $\gamma$ times has weight strictly above
$|\rgame|\wmax^e+\sup|\weightT|$, hence is useless for the value
computation.
This leads to considering the semi-unfolding $\tgame$ of $\game$
(nodes in the kernel are not unfolded, see Figure~\ref{fig:schema})
such that each node not in the kernel is encountered at most $\gamma$
times along a branch: the end of each branch is called a \emph{stop
  leaf} of the semi-unfolding. In particular, the depth of $\tgame$ is
bounded by $|\rgame| \gamma$, and thus is polynomial in $|\rgame|$,
$\wmax^e$ and $\sup|\weightT|$.  Leaves of the semi-unfolding are thus
of two types: target leaves that are copies of target locations of
$\game$ for which we set the target weight as in $\game$, and stop
leaves for which we set their target weight as being constant to
$+\infty$ if the SCC $\game$ is non-negative, and $-\infty$ if the SCC
is non-positive.

\begin{proposition}\label{prop:semi-unfolding}
  Let $\game$ be an almost-divergent \WTG, and let $(\loc_0,r_0)\in S$
  be some state of the region game. The semi-unfolding $\tgame$ with
  initial state $(\tilde{\loc}_0,r_0)$ (a copy of state
  $(\loc_0,r_0)$) is equivalent to $\game$, i.e.~for all
  $\val_0\in r_0$,
  $\Val_{\game}(\loc_0,\val_0) =
  \Val_{\tgame}((\tilde{\loc}_0,r_0),\val_0)$.
\end{proposition}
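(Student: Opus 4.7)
The plan is to prove the equality by transferring strategies across the natural projection $\pi\colon \tgame\to\game$ that forgets both the visit counters added to non-kernel copies and the kernel/non-kernel distinction. Every history-based strategy in one game lifts (respectively projects) canonically to a history-based strategy in the other, and every play $\play$ of $\game$ starting in $(\loc_0,\val_0)$ admits a unique maximal lift $\tilde\play$ starting in $(\tilde{\loc}_0,\val_0)$: the lift coincides with $\play$ when $\play$ reaches a target without visiting any non-kernel state more than $\gamma$ times---in which case $\weight_\tgame(\tilde\play)=\weight_\game(\play)$---and is otherwise a finite prefix of $\play$ ending at a stop leaf with final weight $+\infty$ (non-negative SCC) or $-\infty$ (non-positive SCC). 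Setting $B=|\rgame|\wmax^e+\sup|\weightT|$, the bound on $\Val_\game$ recalled above gives $|\Val_\game|\leq B$ everywhere; moreover, enlarging $\gamma$ by a constant if needed, almost-divergence together with Proposition~\ref{prop:kernelprop} ensures that every play of $\game$ visiting some non-kernel state more than $\gamma$ times has weight strictly greater than $B+1$ (non-negative case) or strictly less than $-B-1$ (non-positive case), since such a play traverses at least $\gamma$ non-$0$ cycles while kernel excursions contribute bounded cumulated weight by Lemma~\ref{lm:cornerabstract}.

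Assume the SCC is non-negative, so stop leaves carry final weight $+\infty$; the non-positive case is symmetric by duality. For $\Val_\tgame\geq \Val_\game$, fix any Min strategy $\tilde\sigma$ in $\tgame$, let $\sigma$ be its projection to $\game$, and pick a Max response $\tau$ in $\game$ with $\weight_\game(\outcome_\game(\sigma,\tau))\geq \Val_\game-\varepsilon$; lifting $\tau$ to $\tilde\tau$, the outcome $\outcome_\tgame(\tilde\sigma,\tilde\tau)$ either reaches a target leaf with the same weight as $\outcome_\game(\sigma,\tau)$, hence $\geq \Val_\game-\varepsilon$, or reaches a stop leaf with weight $+\infty\geq \Val_\game-\varepsilon$. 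Conversely, for $\Val_\tgame\leq \Val_\game$, lift an $\varepsilon$-optimal Min strategy $\sigma$ in $\game$ (with $\varepsilon<1$) to $\tilde\sigma$ and consider any Max strategy $\tilde\tau$ in $\tgame$ with projection $\tau$: since $\Val_\game$ is finite, the $\game$-outcome $\play=\outcome_\game(\sigma,\tau)$ must reach a target---an infinite play would contribute weight $+\infty$ and contradict $\varepsilon$-optimality---and then $\weight_\game(\play)\leq \Val_\game+\varepsilon<B+1$; the calibration of $\gamma$ thus forbids visiting any non-kernel state more than $\gamma$ times, so its lift $\tilde\play$ reaches a target leaf with the same weight, yielding $\weight_\tgame(\outcome_\tgame(\tilde\sigma,\tilde\tau))\leq \Val_\game+\varepsilon$. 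Letting $\varepsilon\to 0$ in both directions concludes.

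The main obstacle is the avoidance argument in this second (harder) inequality: we must ensure that lifting a good strategy of the ``friendly'' player (Min in the non-negative case, Max in the non-positive case) cannot allow the adversary to drive the play into the unfavorable stop leaves. This rests entirely on calibrating $\gamma$ so as to create a gap between the value range $[-B,B]$ and the weights achievable by plays that over-visit non-kernel states, with a constant margin to absorb the additive error $\varepsilon$. Handling the kernel portion of plays---which may be arbitrarily long, or even infinite---requires only observing, via almost-divergence and Lemma~\ref{lm:cornerabstract}, that kernel excursions accumulate bounded cumulated weight, so that the $\gamma$-visit threshold on non-kernel states remains the only combinatorial quantity governing when a lifted play is forced to a stop leaf.
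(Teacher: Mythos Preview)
Your approach is essentially the same as the paper's: both establish the equality by lifting/projecting strategies through the natural map $\tgame\to\rgame$ and arguing that any play consistent with the ``friendly'' player's $\varepsilon$-optimal strategy has weight within $[-B-1,B+1]$, hence cannot over-visit a non-kernel node and therefore lifts without hitting a stop leaf. The paper packages the quantitative part as three lemmas (a global lower bound $-|\rgame|\wmax^e$ on cumulated weight of any finite play, a lower bound on plays reaching stop leaves, and the resulting mimicking lemma), whereas you argue directly; the only minor imprecisions are that the relevant ingredient for bounding prefix/suffix contributions is this global play bound rather than Lemma~\ref{lm:cornerabstract}, and that stop leaves arise not only from over-visiting a non-kernel state but also from re-entering the same kernel $\gamma$ times---the latter case is handled identically since the intervening cycle uses an output transition and is therefore not a $0$-cycle.
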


\section{Approximation of almost-divergent \WTG{s}}\label{sec:approx}

\paragraph{Approximation of kernels}\label{sec:approx-kernels}
We start by approximating a kernel $\game$
by extending the region-based approximation schema of
\cite{BJM15}. In their setting, all runs in kernels had weight 0, allowing
a simple reduction to a finite weighted game. In our setting, we
have to approximate the timed dynamics of runs, and therefore
resort to the corner-point abstraction (as shown to the right of~\figurename~\ref{fig:schema}).

Since output weight functions are piecewise linear with a finite number
of pieces and continuous on regions, they are \lipconst-Lipschitz-continuous\footnote{The
  function $\weightT$ is said to be \lipconst-Lipschitz-continuous when
  $|\weightT(s,\val)-\weightT(s,\val')|\leq \lipconst \|\val-\val'\|_\infty$
  for all valuations $\val,\val'$, where
  $\|v\|_\infty=\max_{x\in \Clocks} |v(x)|$ is the $\infty$-norm of
  vector $v\in\R^\Clocks$.
  The function $\weightT$ is said to be
  Lipschitz-continuous if it is \lipconst-Lipschitz-continuous, for some
  \lipconst.}, for a
given constant $\lipconst\geq 0$.
We let $\myconst = \wmax^\Locs\,|\Locs|
|\regions\Clocks\clockbound| + \lipconst$.

Let $N$ be an integer. Consider the game $\Ncgame{N}$ described in the
preliminary section, with locations of the form $(\loc,r,v)$ with $v$ a
corner of the $1/N$-region $r$. Two plays $\play$ of $\game$ and
$\play'$ of $\Ncgame{N}$ are said to be \emph{$1/N$-close} if
they follow the same path \rpath in $\Nrgame N$.
In particular, at each step the configurations $(\loc,\val)$ in $\play$ and $(\loc',r',v')$
in $\play'$ (with $v'$ a corner of the $1/N$-region $r'$)
satisfy $\loc=\loc'$ and $\val\in r'$, and the transitions taken in both plays
have the same discrete weights. Close plays have \emph{close} weights,
in the following sense:

\begin{lemma}\label{lm:close-plays}
  For all $1/N$-close plays $\play$ of $\game$ and $\play'$ of $\Ncgame{N}$,
  $|\weight_\game(\play)-\weight_{\Ncgame N}(\play')|\leq \myconst/N$.
\end{lemma}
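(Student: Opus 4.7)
The plan is to bound $|\weight_\game(\play)-\weight_{\Ncgame N}(\play')|$ by splitting it into three contributions and controlling each separately: the cumulative transition weights, the cumulative continuous weights $\sum_i (d_i-d_i')\weight(\loc_i)$, and the final target weight. Since $\play$ and $\play'$ follow the same path $\rpath=\rtrans_0\cdots\rtrans_{k-1}$ in $\Nrgame N$, they traverse the same sequences of locations $\loc_0,\ldots,\loc_k$ and of transition weights $\weight(\trans_i)$, so the discrete contributions cancel exactly in the difference.

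For the per-step estimates, let $\val_i$ (resp.~$v_i'$) denote the valuation at step $i$ in $\play$ (resp.~the corner at step $i$ in $\play'$). Both lie in $\overline{r_i}$, and after the delay both $\val_i+d_i$ and $v_i'+d_i'$ lie in the same pre-reset $1/N$-region; evaluating the difference on any single clock then yields $|d_i-d_i'|\leq 2/N$. For the target term, both plays terminate in the same target region, so $\|\val_k-v_k'\|_\infty\leq 1/N$, and the $\lipconst$-Lipschitz continuity of $\weightT$ on that region yields $|\weightT(\loc_T,\val_k)-\weightT((\loc_T,r_k,v_k'),v_k')|\leq\lipconst/N$.

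The central step, where the kernel hypothesis enters crucially, is to bound $\sum_i (d_i-d_i')\weight(\loc_i)$ by a quantity independent of the length of $\play$. I will decompose $\rpath$ using the classical walk-reduction algorithm into a simple subpath (visiting each state of $\rgame$ at most once, hence of length at most $|\Locs|\,|\regions\Clocks\clockbound|$) together with a disjoint family of simple cycles of $\rgame$. Since every cycle lies in the kernel, Proposition~\ref{prop:kernelprop} guarantees that each is a $0$-cycle, and the defining property of $0$-cycles (together with Lemma~\ref{lm:exist0}) implies that the sub-plays of both $\play$ and $\play'$ restricted to any such cycle have cumulative weight exactly~$0$. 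Only the simple-subpath indices survive in the difference, and combining with the per-step bound yields $\bigl|\sum_i(d_i-d_i')\weight(\loc_i)\bigr|\leq 2\wmax^\Locs|\Locs|\,|\regions\Clocks\clockbound|/N$, so the whole difference is of the shape $\myconst/N$ up to a multiplicative factor of $2$ in the dominant term. The hard part will be to sharpen the residual factor of $2$ in the per-step estimate---for instance by choosing, at each non-cyclic step, a reference clock whose fractional-part ordering makes $\val_i(x)-v_i'(x)$ and $(\val_i+d_i)(x)-(v_i'+d_i')(x)$ cancel rather than add---in order to recover the exact constant $\myconst/N$ claimed in the lemma.
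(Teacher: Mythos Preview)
Your approach mirrors the paper's: decompose the region path into a short simple path plus cycles of $\rgame$, use the kernel hypothesis to make the cycle contributions vanish in both plays, and bound the remaining difference on the at most $|\Locs|\,|\regions\Clocks\clockbound|$ simple-path steps together with the $\lipconst/N$ target term.

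Two points deserve attention. First, your justification that the sub-play of $\play'$ along a cycle has cumulated weight $0$ appeals to Lemma~\ref{lm:exist0}, but that lemma concerns corner plays in $\Ncgame 1$, whereas $\play'$ lives in $\Ncgame N$ and the relevant segment need not even be a cycle at the $1/N$-granularity (the $1/N$-corners at its endpoints may differ). The paper closes this via Lemma~\ref{lm:cornerabstract}: any $1/N$-corner play following a path in $\Nrgame N$ is a limit of real plays of $\game$ following that same path; those real plays also follow the projected $\rgame$-cycle and therefore all have cumulated weight $0$, hence so does the corner segment. You should replace the appeal to Lemma~\ref{lm:exist0} by this limit argument.

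Second, on the factor $2$: the paper simply asserts that ``$1/N$ is the largest difference possible in time delays between plays that stay $1/N$-close'' and does not spell out why; your $2/N$ is indeed what the triangle inequality on a single clock gives, and your proposed sharpening via a clever choice of reference clock is not obviously workable in general. The exact constant is immaterial for the sequel---one picks $N$ larger by a factor of $2$ when invoking the lemma, or redefines $\myconst$---so this is not a genuine obstacle. Finally, add the one-line observation (present in the paper) that if the common path never reaches $\LocsT$ then both weights are $+\infty$ and the inequality holds trivially.
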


In particular, if we start in configurations $(\loc_0,\val_0)$ of
$\game$, and $((\loc_0,r_0,v_0),v_0)$ of $\Ncgame N$, with
$\val_0\in r_0$, since both players have the ability to stay
$1/N$-close all along the plays, a bisimulation argument permits to
obtain that the values of the two games are also close in
$(\loc_0,\val_0)$ and $((\loc_0,r_0,v_0),v_0)$:

\begin{lemma}\label{lm:bisimulation}
  For all locations $\loc\in \Locs$, $1/N$-regions $r$, $\val\in r$ and
  corners $v$ of $r$,
  $|\Val_\game(\loc,\val)-\Val_{\Ncgame N}((\loc,r,v),v)|\leq \myconst/N$.
\end{lemma}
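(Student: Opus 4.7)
The plan is to prove the two inequalities $\Val_\game(\loc,\val) \leq \Val_{\Ncgame N}((\loc,r,v),v) + \myconst/N$ and $\Val_\game(\loc,\val) \geq \Val_{\Ncgame N}((\loc,r,v),v) - \myconst/N$ symmetrically, by transferring strategies between the two games while preserving that the plays in $\game$ and $\Ncgame N$ follow the same path in the region game $\Nrgame N$, so that Lemma~\ref{lm:close-plays} applies directly at the end.

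For the upper bound I would fix $\varepsilon > 0$ and pick an $\varepsilon$-optimal strategy $\stratmin'$ for \MinPl in $\Ncgame N$ starting from $((\loc,r,v),v)$. Then I would build a strategy $\stratmin$ for \MinPl in $\game$ from $(\loc,\val)$ that maintains, as private memory, a phantom play $\play'$ in $\Ncgame N$ kept $1/N$-close to the real play $\play$ in $\game$: when it is \MinPl's turn in $\game$, she consults $\stratmin'(\play')$ to obtain a transition $\trans$, fires the same $\trans$ in $\game$ with any valid delay, and extends $\play'$ by the move prescribed by $\stratmin'$; when \MaxPl plays some transition $\trans$ in $\game$, \MinPl updates the phantom play by selecting any corner transition in $\Ncgame N$ that uses the same underlying $\trans$ and extends the common path in $\Nrgame N$.

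Any play $\play$ in $\game$ produced against an adversarial \MaxPl is thus paired with a play $\play'$ in $\Ncgame N$ that conforms to $\stratmin'$ against some \MaxPl behaviour, so $\weight_{\Ncgame N}(\play') \leq \Val_{\Ncgame N}((\loc,r,v),v) + \varepsilon$. Since $\play$ and $\play'$ follow the same path by construction, Lemma~\ref{lm:close-plays} gives $\weight_\game(\play) \leq \weight_{\Ncgame N}(\play') + \myconst/N$, and letting $\varepsilon \to 0$ yields the first inequality. The lower bound follows by a symmetric argument, starting from an $\varepsilon$-optimal strategy of \MaxPl in $\game$ and shadowing it in $\Ncgame N$.

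The main obstacle I anticipate is to check that this shadowing is always feasible on both sides: whenever one player fires a transition with some delay in one of the games, the other game must offer a matching move leading to a configuration in the same $1/N$-region, so that the common path can actually be extended at every step. This feasibility is precisely what the non-spuriousness clause built into the definition of $\Ncgame N$ provides, since every corner transition of $\Ncgame N$ corresponds to at least one concrete edge of $\game$ following the same region-game transition, and conversely any edge of $\game$ has a matching corner move in $\Ncgame N$.
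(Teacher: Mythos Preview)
Your approach is essentially the paper's bisimulation argument: shadow a strategy from one game in the other while maintaining $1/N$-closeness of the two plays, then invoke Lemma~\ref{lm:close-plays}. The paper formalises the shadowing via explicit mappings $f,g$ between finite plays rather than a ``phantom play in memory'', but this is only a difference in presentation.

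There is, however, a slip in your symmetric argument. For the lower bound $\Val_\game(\loc,\val)\geq \Val_{\Ncgame N}((\loc,r,v),v)-\myconst/N$ you propose to start from an $\varepsilon$-optimal strategy of \MaxPl in $\game$ and shadow it in $\Ncgame N$. Carrying this through yields: \MaxPl in $\Ncgame N$ can guarantee at least $\Val_\game(\loc,\val)-\varepsilon-\myconst/N$, i.e.\ $\Val_{\Ncgame N}\geq \Val_\game-\myconst/N$, which is the \emph{upper} bound again, not the lower one. The correct symmetry swaps only one of the two ingredients (player or game), not both: either start from an $\varepsilon$-optimal strategy of \MaxPl in $\Ncgame N$ and shadow it into $\game$, or start from an $\varepsilon$-optimal strategy of \MinPl in $\game$ and shadow it into $\Ncgame N$. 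The paper does the former (in its quantifier formulation: ``for all $\maxstrategy'$, there exists $\maxstrategy$ \dots''). Once this direction is fixed, your argument goes through.
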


Using this result, picking $N$ an integer larger than
$\myconst/ \varepsilon$, we
can thus obtain
$|\Val_\game(\loc,\val)-\Val_{\Ncgame N}((\loc,r,v),v)|\leq
\varepsilon$.
Recall that $\Ncgame N$ can be considered as an untimed weighted game
(with reachability objective). Thus we can apply the result
of~\cite{BGHM16}, where it is shown that the optimal values of such
games can be computed in pseudo-polynomial time (i.e.~polynomial time
with weights encoded in unary, instead of binary). We then
define an $\varepsilon$-approximation of $\Val_\game$, named $\Val_N'$,
 on each $1/N$-region by interpolating the values of its $1/N$-corners in $\Ncgame N$
 with a piecewise linear function:
therefore,
we can control the Lipschitz constant of the approximated value for
further use.
\begin{lemma}\label{lm:kernel-approx-regular}
  $\Val_N'$ is an $\varepsilon$-approximation of
  $\Val_\game$, that is piecewise linear with a finite number of
  pieces and
  $2\myconst$-Lipschitz-continuous over regions.
\end{lemma}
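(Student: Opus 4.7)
The plan is to prove in sequence the three properties of $\Val_N'$ announced in the statement. All three rely on the construction of $\Val_N'$ as the piecewise-affine interpolation, over each $1/N$-region~$r$, of the exact values $\Val_{\Ncgame N}((\loc,r,v),v)$ computed at the $1/N$-corners~$v$ of~$r$.

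Piecewise linearity with finitely many pieces is essentially immediate from the definition. A $1/N$-region has at most $|\Clocks|+1$ corners, forming a (possibly degenerate) simplex; affine interpolation on a simplex produces an affine function. Since there are only finitely many pairs $(\loc,r)$ with $\loc\in\Locs$ and $r$ a $1/N$-region, we obtain finitely many affine pieces covering the whole domain of $\Val_N'$.

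For the $\varepsilon$-approximation, I would start from Lemma~\ref{lm:bisimulation}, which already gives $|\Val_\game(\loc,v)-\Val_N'(\loc,v)|\leq\myconst/N$ at every corner~$v$ of every $1/N$-region (since $\Val_N'$ coincides with $\Val_{\Ncgame N}$ at corners). For an arbitrary $\val$ in a $1/N$-region~$r$, I would write $\val=\sum_i\lambda_i v_i$ as a convex combination of the corners of~$r$, so that $\Val_N'(\loc,\val)=\sum_i\lambda_i\Val_{\Ncgame N}((\loc,r,v_i),v_i)$ by construction. Using in addition a $\myconst$-Lipschitz-continuity property of $\Val_\game$ on the closure of~$r$ (see below), one obtains $|\Val_\game(\loc,\val)-\sum_i\lambda_i\Val_\game(\loc,v_i)|\leq\myconst/N$. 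Combining these two bounds through the triangle inequality yields $|\Val_\game(\loc,\val)-\Val_N'(\loc,\val)|\leq 2\myconst/N\leq\varepsilon$, upon choosing $N$ at least $2\myconst/\varepsilon$. For the Lipschitz bound, I would bound the variation of $\Val_N'$ between two adjacent corners $v,v'$ of the same $1/N$-region, which are at $\infty$-distance~$1/N$: a bound of $2\myconst/N$ on $|\Val_N'(\loc,v)-\Val_N'(\loc,v')|$ yields a slope of at most~$2\myconst$ for the affine piece over~$r$. Such a bound follows again by combining Lemma~\ref{lm:bisimulation} at $v$ and $v'$ with the $\myconst$-Lipschitz continuity of $\Val_\game$ on~$\bar r$.

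The main obstacle is the auxiliary sub-lemma that $\Val_\game$ is $\myconst$-Lipschitz-continuous on the closure of each region of the kernel. Because the kernel contains 0-cycles, plays may loop arbitrarily many times without changing weight, so a naive per-step propagation of Lipschitz constants does not suffice. The argument has to exploit the 0-cycle structure: when comparing optimal plays starting at two close valuations inside a region, one can mimic strategies while correcting delays only at the finitely many moments where a play exits its current region or fires an output transition of the kernel. The total correction cost is then bounded by the location rates accumulated along a witness of length at most $|\Locs||\regions\Clocks\clockbound|$, i.e.~by $\wmax^\Locs|\Locs||\regions\Clocks\clockbound|$, plus the Lipschitz constant~$\lipconst$ of~$\weightT$ at the exit point, recovering exactly the definition $\myconst=\wmax^\Locs|\Locs||\regions\Clocks\clockbound|+\lipconst$.
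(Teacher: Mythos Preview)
Your proof is workable but takes an unnecessary detour, because you underuse Lemma~\ref{lm:bisimulation}. You read that lemma as giving a bound only \emph{at} corners, i.e.~taking $\val=v$, and therefore you believe you need a separate sub-lemma establishing that $\Val_\game$ itself is $\myconst$-Lipschitz-continuous on the closure of each region in order to pass from corners to interior points. But Lemma~\ref{lm:bisimulation} is stated for \emph{any} $\val\in r$ and \emph{any} corner $v$ of $r$: it already compares the value of $\game$ at an arbitrary interior valuation against the value of $\Ncgame N$ at any of the surrounding corners.

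The paper exploits this directly. For the $\varepsilon$-approximation: fix $\val\in r$; by Lemma~\ref{lm:bisimulation}, every corner value $\Val_{\Ncgame N}((\loc,r,v_i),v_i)$ lies within $\myconst/N$ of $\Val_\game(\loc,\val)$, hence so does any convex combination of them, in particular $\Val_N'(\loc,\val)$. Taking $N>\myconst/\varepsilon$ suffices (you needlessly double $N$). For the Lipschitz bound: pick any $\val\in r$ and two corners $v,v'$ of $r$, apply Lemma~\ref{lm:bisimulation} twice with the same $\val$, and use the triangle inequality to get $|\Val_{\Ncgame N}((\loc,r,v),v)-\Val_{\Ncgame N}((\loc,r,v'),v')|\leq 2\myconst/N = 2\myconst\,\|v-v'\|_\infty$, which bounds the slope of the affine interpolant on $r$.

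So your auxiliary sub-lemma on the Lipschitz-continuity of $\Val_\game$ over kernels is simply not needed; the bisimulation argument behind Lemma~\ref{lm:bisimulation} (itself resting on Lemma~\ref{lm:close-plays}, which already handles the 0-cycle structure you worry about) does all the work. Note also a small wrinkle in your version: you invoke Lemma~\ref{lm:bisimulation} with $\val=v$ a corner, but corners need not belong to the open $1/N$-region $r$, so that instantiation is not literally licensed by the statement; you would have to argue by limits. The paper's route avoids this issue entirely by keeping $\val$ in the interior throughout.
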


\paragraph{Approximation of almost-divergent
  \WTG{s}}\label{sec:approx-AD}
We now explain how to approximate the value of an
almost-divergent \WTG $\game$, thus proving Theorem~\ref{thm:almost-div}.
First, we compute a semi-unfolding $\tgame$ as
described in the previous section.
Then we perform a bottom-up computation of the approximation.
As already recalled, techniques of~\cite{ABM04} allow us to compute
exact values of a tree-shape \WTG.
In consequence, we know how to compute the value of a non-kernel node of $\tgame$,
depending of the values of its children. There is no approximation needed here, so that
if all children are $\varepsilon$-approximation, we can compute an
$\varepsilon$-approximation of the node.
Therefore, the only approximation lies in the kernels, and
we explained before how to compute arbitrarily close
an approximation of a kernel's value.
We crucially rely on the fact that the value function is
1-Lipschitz-continuous\footnote{Indeed, $\inf$ and $\sup$ are
  1-Lipschitz-continuous functions, and with a fixed play \play, the
  mapping $\weightT\to\weightC(\play)+\weightT(\last(\play))$ is
  1-Lipschitz-continuous.}.  This entails that imprecisions will sum
up along the bottom-up computations, as computing an
$\varepsilon$-approximation of the value of a game whose output
weights are $\varepsilon'$-approximations yields an
$(\varepsilon+\varepsilon')$-approximation.  Therefore we compute
approximations with threshold $\varepsilon'=\varepsilon/\alpha$ for
kernels in $\tgame$, where $\alpha$ is the maximal number of kernels
along a branch of $\tgame$: $\alpha$ is smaller than the depth of
$\tgame$, which is bounded by Proposition~\ref{prop:semi-unfolding}.

The subregion granularity considered before for kernel approximation
crucially depends on the Lipschitz constant of output weights.  The
growth of these constants is bounded for kernels in $\tgame$ by
Lemma~\ref{lm:kernel-approx-regular}. For non-kernel nodes of
$\tgame$, using a careful analysis of the algorithm of~\cite{ABM04},
we obtain the following bound:
\begin{lemma}\label{lm:tree-lipschitz}
  If all the output weights of a WTG \game are \lipconst-Lipschitz-continuous over regions
  (and piecewise linear, with finitely many pieces),
  then $\Val^i_\game$ is %
  $\lipconst\lipconst'$-Lipschitz-continuous over regions,
  with $\lipconst'$ polynomial in $\wmax^\Locs$ and $|\Clocks|$
  and exponential in $i$.
\end{lemma}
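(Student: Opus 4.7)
The plan is to proceed by induction on $i$, showing that each application of the value iteration operator $\mathcal F$ multiplies the Lipschitz constant by a bounded factor. The base case $i=0$ follows directly from the hypothesis: $\Val^0_\game$ coincides with $\weightT$ on target locations, which is $\lipconst$-Lipschitz by assumption, and equals $+\infty$ on non-target locations, so the Lipschitz claim holds vacuously there (and we may restrict attention to configurations whose value is finite at step $i$).

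For the inductive step, suppose $\Val^i_\game$ is $\lipconst\lipconst'_i$-Lipschitz over regions. Fix a region $r$ and two valuations $\val_1,\val_2\in r$; I analyze the defining expression of $\Val^{i+1}_{\loc,\val}$ in Equation~(\ref{eq:operator}). For a fixed transition $\trans=(\loc,g,Y,\loc')$ and a fixed piece $P$ on which $\Val^i_{\loc'}$ is affine, the quantity $d\cdot\weight(\loc)+\weight(\trans)+\Val^i_{\loc',(\val+d)[Y:=0]}$ is affine in $(d,\val)$ over the subdomain $\{\val+d\models g,\ (\val+d)[Y:=0]\in P\}$. Optimising over $d$ yields a piecewise affine function of $\val$; by an envelope-type argument, whenever the optimum is attained in the interior of the valid range, its Lipschitz constant in $\val$ is bounded by that of $\Val^i_{\loc'}$ composed with the 1-Lipschitz reset $[Y:=0]$, hence by $\lipconst\lipconst'_i$. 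When the optimum is attained at a boundary of the form $\val(x)+d=c$, the optimal delay $d^\ast(\val)=c-\val(x)$ is itself affine in $\val$, so the term $d^\ast(\val)\cdot\weight(\loc)$ contributes an extra Lipschitz factor bounded by $\wmax^\Locs$, while the substitution $(\val+d^\ast)[Y:=0]$ inside $\Val^i_{\loc'}$ remains at most $2$-Lipschitz in $\val$, contributing a constant multiple of $\lipconst\lipconst'_i$. Taking the $\inf$ (for \MinPl) or $\sup$ (for \MaxPl) over transitions and over pieces of $\Val^i_{\loc'}$ preserves Lipschitz-continuity with the same constant, and continuity across the boundaries of the refined subregions---required in order to deduce Lipschitz-continuity on the whole region $r$ rather than on each individual piece---follows from the continuity-preservation properties of $\mathcal F$ established in~\cite{ABM04}.

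Collecting these pieces, one application of $\mathcal F$ multiplies the Lipschitz constant by a factor polynomial in $\wmax^\Locs$ and $|\Clocks|$, where the dependency on $|\Clocks|$ accounts for how many clock constraints can simultaneously be tight at a boundary of the optimisation domain. Iterating $i$ times yields a Lipschitz constant bounded by $\lipconst\cdot\lipconst'_i$ with $\lipconst'_i$ polynomial in $\wmax^\Locs$ and $|\Clocks|$ and exponential in $i$, as claimed.

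The main obstacle is the careful treatment of the boundary case in the optimisation over $d$: this is precisely where the factor $\wmax^\Locs$ enters the Lipschitz constant and where the analysis is sensitive to the number of clock components that can simultaneously activate a boundary constraint. A case analysis on the type of optimum (interior, boundary from the guard $g$, boundary of a subregion piece of $\Val^i$, and degenerate cases when the relevant clock is reset by $Y$) is needed to verify that the growth per step stays polynomial in the stated parameters. Most of the underlying geometric machinery is already developed in~\cite{ABM04}; the remaining verification is that the envelope argument and the piecewise-affine structure are both robust to the presence of negative weights in our setting.
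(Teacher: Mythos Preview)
Your approach is essentially the same as the paper's: induction on $i$, with a case split on whether the optimum over the delay $d$ is attained in the interior of the feasible range or at a boundary of the form $\nu(x)+d=c$. The paper carries this out by bounding individual partial derivatives of $\mathcal F(V)_\ell$ (using the nested-tube-partition machinery of~\cite{ABM04}), obtaining the per-step recursion $\lipconst_{i+1}\leq\max(\lipconst_i,\,\wmax^\Locs+(|\Clocks|-1)\lipconst_i)$, whence the base of the exponential is $|\Clocks|-1$. Your argument, working instead with the $\infty$-norm Lipschitz constant and using the envelope/composition viewpoint, yields $L_{i+1}\leq \wmax^\Locs+2L_i$: the substitution map $\nu\mapsto(\nu+d^\ast)[\reset:=0]$ with $d^\ast=c-\nu(x)$ is indeed $2$-Lipschitz in the $\infty$-norm, so your per-step factor is~$2$ rather than $|\Clocks|-1$. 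Both give the qualitative statement of the lemma.

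Two remarks on presentation. First, your phrase ``multiplies the Lipschitz constant by a factor polynomial in $\wmax^\Locs$ and $|\Clocks|$'' is not quite what your own analysis shows: the recursion is affine ($L\mapsto \wmax^\Locs+2L$), not multiplicative, and the $|\Clocks|$-dependence in your boundary case is in fact absent. Second, your ``interior'' case tacitly relies on the infimum over a $\nu$-dependent domain $D(\nu)$ still satisfying an envelope bound; this is fine precisely because at an interior optimum the same $d^\ast$ remains feasible for nearby $\nu$, but it is worth saying explicitly, since this is exactly the point where the paper's explicit computation (showing the partial derivative at a kink between two affine pieces is sandwiched between the two pieces' derivatives) does the work.
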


The overall time complexity of this method is doubly-exponential in
the size of the input game and polynomial in $1/\varepsilon$.

\section{Symbolic approximation algorithm}\label{sec:symbolic}

The previous approximation result suffers from several drawbacks. It
relies on the SCC decomposition of the region automaton. Each of these
SCCs have to be analysed in a sequential way, and their analysis
requires an a priori refinement of the granularity of regions.  This
approach is thus not easily amenable to implementation. We instead
prove in this section that the symbolic approach based on the value
iteration paradigm, i.e.~the computation of iterates of the
operator~$\mathcal F$ recalled in page~\pageref{eq:operator}, is an
approximation schema.  This is stated in~Theorem~\ref{thm:symbolic},
for which we now sketch a proof in this section.

Notice that configurations with value $+\infty$ are stable through
value iteration, and do not affect its other computations.  Since
Theorem~\ref{thm:symbolic} assumes the absence of configurations of
value $-\infty$, we will therefore consider in the following that all
configurations have finite value in \game.

Consider first a game \game that is a kernel. By the results of
Section~\ref{sec:approx-kernels}, there exists an integer~$N$ such
that solving the untimed weighted game $\Ncgame N$ computes an
$\varepsilon/2$-approximation of the value of $1/N$ corners.
Using the results of~\cite{BGHM16} for untimed weighted games, we know
that those values are obtained after a finite number of steps of (the
untimed version of) the value iteration operator.  More precisely, if
one considers a number of iterations
$P = |\Locs| |\Nregions N\Clocks\clockbound| (|\Clocks|+1) (2 (|\Locs|
|\Nregions N\Clocks\clockbound| (|\Clocks|+1) -1) \wmax^e+1)$, then
$\Val^P_{\Ncgame N}((\loc,r,v),v)=\Val_{\Ncgame N}((\loc,r,v),v)$.
From this observation, we deduce the following property of $P$:
\begin{lemma}\label{lm:symbolic-kernel}
  If $\game$ is a kernel with no configurations of infinite value,
  then
  $|\Val_\game(\loc,\val)-\Val^P_\game(\loc,\val)|\leq \varepsilon$
  for all configurations $(\loc,\val)$ of $\game$.
\end{lemma}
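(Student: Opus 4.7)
The plan is to combine three ingredients: the bisimulation bound of Lemma~\ref{lm:bisimulation}, the finite convergence of value iteration on the untimed game $\Ncgame N$ guaranteed by the choice of $P$ taken from~\cite{BGHM16}, and a new finite-horizon analogue of Lemma~\ref{lm:bisimulation} that bounds the gap between $\Val^i_\game$ and $\Val^i_{\Ncgame N}$ uniformly in $i$.

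Concretely, I first fix $N$ large enough so that $\myconst/N\leq\varepsilon/2$ (the same granularity used in Section~\ref{sec:approx-kernels}, but for precision $\varepsilon/2$ instead of $\varepsilon$). By Lemma~\ref{lm:bisimulation}, this gives $|\Val_\game(\loc,\val)-\Val_{\Ncgame N}((\loc,r,v),v)|\leq\varepsilon/2$ for any $\val\in r$ and any corner $v$ of the $1/N$-region $r$ containing $\val$. The choice of $P$ from~\cite{BGHM16} then yields $\Val^P_{\Ncgame N}((\loc,r,v),v)=\Val_{\Ncgame N}((\loc,r,v),v)$ at every corner state, so two of the three sides of the triangle inequality I plan to use are already in hand.

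The remaining step, and the main obstacle, is to prove a finite-horizon bisimulation $|\Val^i_\game(\loc,\val)-\Val^i_{\Ncgame N}((\loc,r,v),v)|\leq\myconst/N$ for every $i\geq 0$. I would revisit the bisimulation argument underlying Lemma~\ref{lm:bisimulation}: both players can mirror each other's strategies across $\game$ and $\Ncgame N$ by keeping the plays $1/N$-close, and two $1/N$-close plays necessarily follow the same path in the region game $\Nrgame N$, so in particular they have the same number of transitions. Consequently, restricting to plays that reach a target in at most $i$ steps on one side corresponds exactly to the same restriction on the other side, and the weight gap between $1/N$-close plays is still controlled by $\myconst/N$ via Lemma~\ref{lm:close-plays}. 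The delicate point is to check that the strategy produced by the mirroring preserves the step bound despite the fact that $\Ncgame N$ closes guards and fixes corner delays; this is where I would spend most of the care, but it follows from the synchronous nature of the bisimulation on the region-game path.

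Applying this finite-horizon bound at $i=P$ and combining with the two previous inequalities through the triangle inequality yields $|\Val_\game(\loc,\val)-\Val^P_\game(\loc,\val)|\leq \varepsilon/2+0+\varepsilon/2=\varepsilon$, as required.
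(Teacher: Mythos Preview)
Your proposal is correct and follows essentially the same route as the paper's proof: both use the triangle inequality through $\Ncgame N$, invoke Lemma~\ref{lm:bisimulation} for the infinite-horizon gap, the choice of $P$ from~\cite{BGHM16} for $\Val^P_{\Ncgame N}=\Val_{\Ncgame N}$, and then a finite-horizon variant of Lemma~\ref{lm:bisimulation} justified by the fact that Lemma~\ref{lm:close-plays} is length-independent and $1/N$-close plays reach the target at the same step. Your identification of the ``delicate point'' (that the mirroring preserves the step count because close plays follow the same region path) is exactly the observation the paper relies on.
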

\begin{proof}
  We already know that
  $\Val^P_{\Ncgame N}((\loc,r,v),v)=\Val_{\Ncgame N}((\loc,r,v),v)$
  for all configurations $((\loc,r,v),v)$ of $\Ncgame N$. Moreover,
  Section~\ref{sec:approx-kernels} ensures
  $|\Val_\game(\loc,\val)-\Val_{\Ncgame N}((\loc,r,v),v)|\leq
  \varepsilon/2$ whenever $\val$ is in the $1/N$-region
  $r$. Therefore, we only need to prove that
  $|\Val^P_\game(\loc,\val)-\Val^P_{\Ncgame N}((\loc,r,v),v)|\leq
  \varepsilon/2$ to conclude. This is done as for
  Lemma~\ref{lm:bisimulation}, since Lemma~\ref{lm:close-plays} (that
  we need to prove Lemma~\ref{lm:bisimulation}) does not depend on the
  length of the plays $\play$ and $\play'$, and both runs reach the
  target state in the same step, i.e.~both before or after the horizon
  of $P$ steps.
\end{proof}

Once we know that value iteration converges on kernels, we can use the semi-unfolding
of Section~\ref{sec:unfolding} to prove that it also converges on non-negative SCCs
when all values are finite.
\begin{lemma}\label{lm:symbolic-scc-plus}
  If $\game$ is a non-negative SCC with no configurations of infinite
  value, we can compute $P_+$ such that
  $|\Val_\game(\loc,\val)-\Val^{P_+}_\game(\loc,\val)|\leq \varepsilon$
  for all configurations $(\loc,\val)$ of $\game$.
\end{lemma}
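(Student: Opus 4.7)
The plan is to combine the semi-unfolding of Section~\ref{sec:unfolding} with the kernel convergence of Lemma~\ref{lm:symbolic-kernel}, and show that value iteration on $\game$ is able to simulate, up to a controlled error, the bottom-up resolution of the semi-unfolding. By Proposition~\ref{prop:semi-unfolding}, $\Val_\game = \Val_\tgame$, where $\tgame$ is the semi-unfolding of $\game$ starting from the relevant state. Since $\game$ is a non-negative SCC with no configurations of value $-\infty$ (or $+\infty$, by the standing assumption), the stop leaves of $\tgame$ carry weight $+\infty$ and are never used by an optimal play. Let $D$ be the depth of $\tgame$, polynomial in $|\rgame|$, $\wmax^e$ and $\sup|\weightT|$ by Section~\ref{sec:unfolding}, and let $\alpha\leq D$ be the maximal number of kernel nodes along any branch of $\tgame$.

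First I would process $\tgame$ from the leaves upwards. For each kernel node $\Kernel$ in $\tgame$, whose effective output weight function is the (already approximated) value of the subtree hanging below it, I would apply Lemma~\ref{lm:symbolic-kernel} with precision $\varepsilon/\alpha$ to obtain an integer $P_\Kernel$ such that $P_\Kernel$ iterations of $\mathcal F$ on this kernel give an $\varepsilon/\alpha$-approximation of the kernel's value. The output weights fed into the kernel remain piecewise linear with finitely many pieces and Lipschitz over regions by Lemma~\ref{lm:kernel-approx-regular} and Lemma~\ref{lm:tree-lipschitz}, so the hypotheses of Lemma~\ref{lm:symbolic-kernel} are met. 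At non-kernel nodes, the tree structure of $\tgame$ and the standard property of value iteration on tree-shaped \WTG{s} (recalled after Lemma~\ref{lm:cornerabstract}) mean that one iteration per level suffices to propagate values exactly.

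I would then set $P_+ = D + \alpha\cdot \max_{\Kernel} P_\Kernel$ and prove by induction on the height in $\tgame$ that, after $P_+$ iterations of $\mathcal F$ on $\game$, every configuration $(\loc,\val)$ whose semi-unfolding counterpart sits at height $h$ in $\tgame$ satisfies $|\Val_\game(\loc,\val) - \Val^{P_+}_\game(\loc,\val)| \leq k \cdot \varepsilon/\alpha$, where $k$ is the number of kernel nodes encountered between this node and the leaves. Since $\mathcal F$ is $1$-Lipschitz with respect to the terminal weight function, the $\varepsilon/\alpha$ error introduced at each kernel simply adds up; as $k\leq \alpha$ along any branch, the final error is bounded by $\varepsilon$.

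The main obstacle is the bookkeeping in the last step: value iteration is performed globally on $\game$, not locally on each kernel of $\tgame$, so one has to argue that $P_+$ global iterations of $\mathcal F$ on $\game$ dominate the nested count of kernel iterations and non-kernel tree steps along the semi-unfolding. The key observation making this work is that (i) within a kernel, the length of plays matters only up to $P_\Kernel$ iterations by Lemma~\ref{lm:symbolic-kernel}, and (ii) outside kernels, the semi-unfolding is a tree of depth at most $D$, so once the output weights of the next kernel below have been accurately computed, at most $D$ further iterations push the information up to the root. Combining these two facts yields the required bound $P_+$ and concludes the proof.
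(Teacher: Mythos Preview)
Your proposal follows the same overall strategy as the paper: take the semi-unfolding $\tgame$, apply Lemma~\ref{lm:symbolic-kernel} to each kernel with precision $\varepsilon/\alpha$, and use the $1$-Lipschitz dependence on output weights to sum the errors. The difference, and the gap, is in the step you yourself flag as ``the main obstacle''.

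Your proposed induction on the height in $\tgame$ does not type-check as stated: a single configuration $(\loc,\val)$ of $\game$ corresponds to many nodes of $\tgame$ at different heights, so there is no well-defined ``height of $(\loc,\val)$'' to induct on, and $\Val^{P_+}_\game(\loc,\val)$ is one number that cannot simultaneously satisfy height-dependent bounds. Your observations (i) and (ii) are correct intuitions, but they do not by themselves relate the global iterate $\Val^{P_+}_\game$ to the local per-node values in $\tgame$.

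The paper fills this gap with a sandwich argument instead of an induction. It introduces two auxiliary trees: $\mathcal T'(\game)$, obtained from $\tgame$ by replacing every kernel node by its depth-$P'$ unfolding (so that $|\Val_{\mathcal T'(\game)}-\Val_\tgame|\leq\varepsilon$ by $1$-Lipschitzness), and $\mathcal T''(\game)$, the \emph{complete} unfolding of $\rgame$ to the resulting depth $P$. One then has $\Val_{\mathcal T''(\game)}=\Val^P_\game$ directly, and since $\mathcal T'$ is a prefix of $\mathcal T''$ (both have stop-leaf weight $+\infty$ in the non-negative case) monotonicity gives
\[
\Val_\game \;\leq\; \Val^P_\game \;=\; \Val_{\mathcal T''(\game)} \;\leq\; \Val_{\mathcal T'(\game)} \;\leq\; \Val_\tgame+\varepsilon \;=\; \Val_\game+\varepsilon.
\]
The left inequality is the general fact $\Val^k_\game\geq\Val_\game$ (value iteration from $+\infty$ descends). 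This one-sided chain is exactly what replaces your height induction; note that it relies on the stop leaves carrying $+\infty$, which is why the non-positive case (Lemma~\ref{lm:symbolic-scc-minus}) needs extra work.
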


The idea is to unfold every kernel of the semi-unfolding game \tgame
according to its bound in Lemma~\ref{lm:symbolic-kernel}. More
precisely, let $\alpha$ be the maximum number of kernels along one of
the branches of $\tgame$. In a bottom-up fashion, we can find for each
kernel \Kernel in \tgame a bound~$P_\Kernel$ such that, for all
configurations $(\loc,\val)$,
$|\Val_\Kernel(\loc,\val)-\Val^{P_\Kernel}_\Kernel(\loc,\val)|\leq
\varepsilon/\alpha$. We thus unfold \Kernel in \tgame with depth up
to~$P_\Kernel$.  After each kernel has been replaced this way, \tgame is no
longer a semi-unfolding, it is instead a (complete) unfolding of
\rgame, of a certain bounded depth~$P_+$. This new bound $P_+$ is
bounded by the former depth of \tgame to which is added $\alpha$ times
the biggest bound $P_\Kernel$ we need for the kernels.  Now, \tgame is a
tree of depth $P_+$ whose value at its root is $\varepsilon$-close to
the value of \game.  Finally, the value computed by $\Val^{P_+}_{\game}$
is bounded between $\Val_\game$ and $\Val_\tgame$, which allows us to
conclude.

The bound $P_\Kernel$ for a kernel \Kernel depends linearly in \lipconst,
the Lipschitz constant of value functions on locations of \tgame
reachable from \Kernel. Once \Kernel has been replaced by its
unfolding of depth $P_\Kernel$, the Lipschitz constant of the value function
at the root of \tgame are thus bounded exponentially in \lipconst.
This means that we ensure a bound for $P_+$ that is at most polynomial
in $1/\varepsilon$, and that is of the order of a tower of $\alpha$
exponentials.

Proving the same property on non-positive SCCs requires more work,
because the semi-unfolding gives output weight $-\infty$ to stop
leaves, %
which doesn't integrate well with value iteration (initialisation at
$+\infty$ on non-target states). However, by unfolding those SCCs
slightly more (at most $|\rgame|$ more steps), we can obtain the
desired property with a similar bound~$P_-$.
\begin{lemma}\label{lm:symbolic-scc-minus}
  If $\game$ is a non-positive SCC with no configurations of infinite
  value, we can compute $P_-$ such that
  $|\Val_\game(\loc,\val)-\Val^{P_-}_\game(\loc,\val)|\leq
  \varepsilon$ for all configurations $(\loc,\val)$ of $\game$.
\end{lemma}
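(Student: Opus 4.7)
The plan is to adapt the argument of Lemma~\ref{lm:symbolic-scc-plus} to the non-positive setting while overcoming the mismatch between the $-\infty$ target weight placed on stop leaves of the semi-unfolding $\tgame$ and the $+\infty$ initialisation used by the value iteration operator $\mathcal F$. Since $V^0$ is $+\infty$ on non-target configurations, the iterates $\Val^i_\game$ form a non-increasing sequence converging to $\Val_\game$ from above, so $\Val^{P_-}_\game \geq \Val_\game$ is automatic; the content of the lemma is the matching inequality $\Val^{P_-}_\game \leq \Val_\game + \varepsilon$.

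I would first reuse the kernel-by-kernel approximation from Lemma~\ref{lm:symbolic-scc-plus} almost verbatim: compute the semi-unfolding $\tgame$ of $\game$, and for each kernel $\Kernel$ appearing in $\tgame$ invoke Lemma~\ref{lm:symbolic-kernel} with a precision $\varepsilon/(2\alpha)$ to obtain a bound $P_\Kernel$, where $\alpha$ is the maximum number of kernels on any branch of $\tgame$. Replacing each kernel by its unfolding of depth $P_\Kernel$ yields a finite tree $\tgame'$ of depth $P_+$, polynomial in $1/\varepsilon$ and of tower-of-exponentials height in $\alpha$, whose root value is within $\varepsilon/2$ of $\Val_\game$ by Proposition~\ref{prop:semi-unfolding} combined with the kernel approximation bound.

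The new step is to extend the construction by unfolding $|\rgame|$ further steps of $\game$ beyond each stop leaf of $\tgame'$, setting $P_- := P_+ + |\rgame|$. The crucial point is that, under the standing assumption that all configurations have finite value (guaranteed for non-positive SCCs by Lemma~\ref{lm:-infty}), from every state $s$ reachable as a stop leaf of $\tgame'$, player \MinPl admits a reachability strategy in $\game$ that forces a target location within at most $|\rgame|$ steps and accumulates a weight controlled by the uniform bound $|\rgame|\wmax^e + \sup|\weightT|$ derived in the proof of Proposition~\ref{prop:semi-unfolding}. Chaining such a reachability tail to \MinPl's near-optimal play in $\tgame'$ produces, from the initial configuration, a strategy of \MinPl in $\game$ that reaches a target in at most $P_-$ steps with weight at most $\Val_\game + \varepsilon$, after tightening the precision used in the kernel approximations to absorb the constant slack coming from the tail. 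Consequently $\Val^{P_-}_\game \leq \Val_\game + \varepsilon$, which combined with the automatic reverse inequality yields the lemma.

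The main obstacle I anticipate is this last step: rigorously controlling the prefix weight in $\tgame'$ against \MaxPl strategies that would originally have driven the play toward a stop leaf, and proving that chaining the reachability tail does not overshoot the bound $\Val_\game + \varepsilon$. This is precisely where the non-positive hypothesis (cycles of weight in $(-\infty,-1]\cup\{0\}$) is essentially used: together with the finiteness of $\Val_\game$, it forces a uniform a priori bound on all reachable intermediate values, ensuring that the correction to $P_+$ needed to recover a valid value-iteration bound remains polynomial in $|\rgame|$ while the overall structure (and complexity) of the bound from Lemma~\ref{lm:symbolic-scc-plus} is preserved.
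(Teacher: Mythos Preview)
Your overall strategy is essentially the paper's: extend the semi-unfolding by $|\rgame|$ additional steps beyond every stop leaf and use a \MinPl attractor strategy on this extra slice, so that the game can be analysed with $+\infty$ (rather than $-\infty$) on its leaves, after which the argument of Lemma~\ref{lm:symbolic-scc-plus} goes through. The paper carries out these operations in the opposite order---first extend $\tgame$ by $|\rgame|$ levels (keeping kernels intact) and prove that replacing $-\infty$ by $+\infty$ on stop leaves leaves the value unchanged, and only then invoke the non-negative proof verbatim---but this reordering is cosmetic.

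There is, however, a real gap in your weight accounting for the stop-leaf branches. You write that one must ``tighten the precision used in the kernel approximations to absorb the constant slack coming from the tail'', but the attractor tail contributes a weight of order $|\rgame|\wmax^e+\sup|\weightT|$, a fixed constant independent of~$\varepsilon$; no refinement of the kernel precision can absorb it. The correct mechanism, which the paper invokes explicitly, is Lemma~\ref{lm:rootleafplay}: in the non-positive case, every play reaching a stop leaf of $\tgame$ (and hence of your $\tgame'$, since the projected path in $\rgame$ is the same) already has cumulated weight at most $-2|\rgame|\wmax^e-2\sup|\weightT|-1$. Adding the attractor tail therefore yields a total weight at most $-|\rgame|\wmax^e-\sup|\weightT|-1$, which by Lemma~\ref{lm:pathbound} is strictly below $\Val_\game$. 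So the stop-leaf branches are automatically dominated by $\Val_\game$ with no precision adjustment whatsoever; the entire $\varepsilon$ budget is spent only on branches that reach genuine target leaves via the kernel approximations. Once you replace your ``absorb the slack'' sentence by this observation, your argument is complete and coincides with the paper's.
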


Now, if we are given an almost-divergent game \game and a precision $\varepsilon$,
we can add the bounds for value iteration obtained from each SCC by Lemmas~\ref{lm:symbolic-scc-plus} and~\ref{lm:symbolic-scc-minus},
and obtain a final bound $P$ such that for all $k\geq P$,
$\Val^k_\game$ is an $\varepsilon$-approximation of $\Val_\game$.

\smallskip
\paragraph{Discussion}
Overall, this leads to an upper bound complexity that is polynomial in
$1/\varepsilon$ and of the order of a tower of $n$ exponentials, with
$n$ polynomial in the size of the input \WTG.
However, we argue that this symbolic procedure is more amenable to
implementation than the previous approximation schema. First, it
avoids the three already mentioned drawbacks (SCC decomposition,
sequential analysis of the SCCs, and refinement of the granularity of
regions) of the previous approximation schema. Then, it allows one to
directly launch the value iteration algorithm on the game $\game$, and
we can stop the computation whenever we are satisfied enough by the
approximation computed: however, there are no guarantees whatsoever on
the quality of the approximation before the number of steps $P$ given
above. Finally, this schema allows one to easily obtain an
almost-optimal strategy with respect to the computed value.

If $\game$ is not guaranteed to be free of configurations of value
$-\infty$, then we must first perform the SCC decomposition of \rgame,
and, as \game is almost-divergent, identify and remove regions whose
value is $-\infty$, by Lemma~\ref{lm:-infty}. Then, we can apply the
value iteration algorithm.

As a final remark, notice that our correctness proof strongly relies
on Section~\ref{sec:approx-kernels}, and thus would not hold with the
approximation schema of \cite{BJM15} (which
does not preserve the continuity on regions of the computed value
functions, in turn needed to define output weights on $1/N$-corners).

\section{Conclusion}

We have given an approximation procedure for a large
class of weighted timed games with unbounded number of clocks and
arbitrary integer weights
that can be executed in doubly-exponential time with respect to
the size of the game.
In addition, we proved the correctness of a symbolic approximation schema,
that does not start by splitting exponentially every region, but only
does so when necessary (as dictated by~\cite{ABM04}). We argue that this paves
the way towards an implementation of value approximation for weighted timed games.

Another perspective is to extend this work to the concurrent setting,
where both players play simultaneously and the shortest delay is
selected.  We did not consider this setting in this work because
concurrent \WTG{s} are not determined, and several of our proofs rely
on this property for symmetrical arguments (mainly to lift results of
non-negative SCCs to non-positive ones). Another extension of this work is the
exploration of the effect of almost-divergence in the case of multiple weight
dimensions, and/or with mean-payoff objectives.

\bibliographystyle{plain}

\newpage
\appendix

\section{Proofs of Section~\ref{sec:prelim}}

\begin{proof}[Proof of Lemma~\ref{lm:cornerabstract}]
  The set
  $\{\weightC(\play) \mid \play \text{ finite play following } \rpath
  \}$ is an interval as the image of a convex set by an linear
  function (see \cite[Sec.~3.2]{BouBri07} for an explanation).  The
  good properties of the corner-point abstraction allows us to
  conclude, since for every play~\play following~\rpath, one can find
  a corner play following~\rpath of smaller weight and one of larger
  weight, and for every corner play~\play following~\rpath and
  every~$\varepsilon>0$, one can find a play following~\rpath whose
  weight is at most~$\varepsilon$ away from~$\weightC(\play)$
  \cite{BouBri08a}.
\end{proof}

\subsection{Undecidability of value $-\infty$}\label{app:-infty_undec}

We prove that given a \WTG $\game$ (not necessarily almost-divergent)
and an initial location~$\loc_0$, it is undecidable whether
$\Val_\game(\loc_0,\valnull)=-\infty$. We reduce it to the existence
problem on turn-based \WTG: given a \WTG \game (without output weight
function), an integer threshold $\alpha$ and a starting location
$\loc_0$, does there exist a strategy for \MinPl that can guarantee
reaching the unique target location $\loc_t$ from $\loc_0$ with weight
$<\alpha$. In the non-negative setting, it is proved
in~\cite{BBM06} that the problem is undecidable for the comparison
$\leq \alpha$. In the negative setting, formal proofs are given for
all comparison signs in~\cite{BGNK+14}.

Consider $\game'$ the \WTG built from \game by adding a transition
from $\loc_t$ to $\loc_0$, without guards and resetting all the
clocks, of discrete weight $-\alpha$. We add a new target location
$\loc_t'$, and add transitions of weight $0$ from $\loc_t$ to
$\loc_t'$.  Location $\loc_t$ is then given to \MinPl. Let us prove
that $\Val_{\game'}(\loc_0,\valnull)=-\infty$ if and only if \MinPl
has a strategy to guarantee a weight $<\alpha$ in $\game$.

Assume first $\Val_{\game'}(\loc_0,\valnull)=-\infty$. If
$\Val_\game(\loc_0,\valnull)=-\infty$, we are done. Otherwise, \MinPl
must follow in $\game'$ the new transition from $\loc_t$ to $\loc_0$
to enforce a cycle of negative value, and thus enforce a play from
$(\loc_0,\valnull)$ to $\loc_t$ with weight less than $\alpha$.
Therefore, there exists a strategy for \MinPl in $\game$ that can
guarantee a weight $<\alpha$.

Reciprocally, if there exists a strategy for \MinPl that can guarantee
a weight $<\alpha$, then \MinPl can force a negative cycle play and
$\Val_{\game'}(\loc_0,\valnull)=-\infty$.

\subsection{Decision of the almost-divergence of a \WTG}\label{app:class-decision}

First, we state that a \WTG \game is not almost-divergent if and only
if \rgame contains an SCC with either both a positive play following
one of its cycles and a negative play following one of its cycles, or
a play with weight in $(-1,0)\cup(0,1)$ following one of its cycles.
We will now explain how we can test both of those properties (and thus
if a game is not almost-divergent) in \PSPACE.

A corner play following a cycle of the region game is said to be
simple if it does not visit the same corner twice (but the first and
last corners can be the same).  A simple corner play following a cycle
has length bounded by $|\RStates|\times(|\Clocks|+1)$.  By
Lemma~\ref{lm:cornerabstract}, \rgame contains an SCC with either both
a positive play following one of its cycles and a negative play
following one of its cycles if and only if \rgame contains both a
positive corner play following one of its cycles and a negative corner
play following one of its cycles. We will extend this to simple corner
plays.

\begin{lemma}
  \rgame contains an SCC with either both a positive play following
  one of its cycles and a negative play following one of its cycles if
  and only if \rgame contains an SCC with both a positive simple
  corner play following one of its cycles and a negative simple corner
  play following one of its cycles.
\end{lemma}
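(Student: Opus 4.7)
The plan is to prove both directions of the equivalence; the right-to-left direction is routine, and the left-to-right direction is the crux. For right-to-left, a simple corner play is in particular a corner play, so the existence of positive (resp. negative) simple corner plays following some cycle $\rpath$ in an SCC $S$ forces the maximum (resp. minimum) over corner-play weights following $\rpath$ to be $\geq 1$ (resp. $\leq -1$), and Lemma~\ref{lm:cornerabstract} immediately exhibits positive (resp. negative) plays following $\rpath$.

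For left-to-right, I first apply Lemma~\ref{lm:cornerabstract} to turn the positive and negative plays into positive and negative corner plays $\play^+$ and $\play^-$ of the same SCC $S$, each following a cycle of $\rgame$; these corner plays may however be non-simple. To produce simple versions, I would pick $\play^+$ of minimum length among corner plays of weight $\geq 1$ that follow a cycle inside $S$, and argue by contradiction that it must be simple. If it were not, its corner sequence $v_0, \ldots, v_n$ would contain a repetition $v_i = v_j$ with $(i,j) \neq (0, n)$, and I would cut $\play^+$ into an inner segment (corners $v_i, \ldots, v_j$) and an outer segment (corners $v_0, \ldots, v_i, v_{j+1}, \ldots, v_n$, concatenated through $v_i = v_j$). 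The inner segment follows a cycle of $\rgame$ since $v_i = v_j$ agree in location, region and corner; the outer segment follows a cycle of $\rgame$ because the original play already did. Both cycles lie in $S$ because every corner state traversed by $\play^+$ is simultaneously reachable from and co-reachable to the starting state of the cycle along $\play^+$.

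The decisive step is then an integrality argument on weights in $\cgame$: at granularity $N=1$, all corners have integer coordinates and delays between corners are integers, so every corner-play weight lies in $\Z$. The inner and outer weights $a, b \in \Z$ satisfy $a + b = \weightC(\play^+) \geq 1$, which forces $\max(a, b) \geq 1$. This contradicts the minimality of $\play^+$, since both segments are strictly shorter. A symmetric argument, using $a + b \leq -1$ and hence $\min(a, b) \leq -1$, produces a simple negative corner play inside $S$, completing the extraction. The main obstacle is ensuring that the two segments produced by the cut really are corner plays following cycles \emph{within $S$}, so that the minimality argument is applied to a well-defined family; this relies precisely on the observation that all intermediate states of $\play^+$ belong to the same SCC as its endpoints. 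The integrality argument is then the clean technical core that makes the splitting work despite the strict inequalities ($\geq 1$, $\leq -1$) defining positivity and negativity.
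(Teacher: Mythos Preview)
Your proof is correct and follows the same overall strategy as the paper: pass to corner plays via Lemma~\ref{lm:cornerabstract}, then run a shortest-counterexample argument with loop removal inside the SCC. The one technical difference lies in how the non-simple corner play is decomposed. You cut at an arbitrary repeated corner into an inner and an outer segment and invoke integrality of corner-play weights at granularity $N=1$ to conclude that $a+b\geq 1$ with $a,b\in\Z$ forces $\max(a,b)\geq 1$. The paper instead extracts a \emph{simple} inner loop and works via the contrapositive (``all simple corner plays following a cycle are non-negative''): the simple loop then has weight $\geq 0$ by hypothesis, so removing it leaves a strictly shorter corner play that is still negative, with no appeal to integrality. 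Both arguments are valid; the paper's variant is marginally more economical in that it does not rely on the arithmetic of the corner abstraction, while yours makes explicit a property (integer corner weights) that the paper states but does not use here.
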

\begin{proof}
  All that is left to prove is that, in an SCC of \rgame, if all
  simple corner plays following a cycle have non-negative weight
  (resp.~non-positive weight), then all corner plays following
  a cycle have non-negative weight (resp.~non-positive weight).

  By contradiction, we consider \play, the shortest corner play
  following a cycle \rpath, such that $\weightC(\play)<0$
  (resp.~$\weightC(\play)>0$). Corner play \play cannot be
  simple, so it must contain a simple loop.  That loop is a simple
  corner play following a cycle of \rgame, so it must have
  non-negative weight (resp.~non-positive weight).  This means
  that \play without that loop satisfies $\weightC(\play)<0$
  (resp.~$\weightC(\play)>0$), and therefore was not the
  shortest corner play with the desired property.
\end{proof}

We can test the existence of such simple corner plays in a SCC of
\rgame in \NPSPACE, by guessing them corner after corner and by
keeping the cumulated weight in memory.  The check that both plays are
in the same SCC is a reachability check in a timed automaton, which
can be done in \PSPACE.  We described a similar procedure
in~\cite{BMR17a} where we were testing the existence of a non-negative
corner play and a non-positive one in the same SCC instead of a
negative one and a positive one.

Now, we will assume in this second part that this test failed, so
every SCC of \rgame either satisfies that all plays following a cycle
have non-negative weight or satisfies that they all have non-positive
weight.  We will now explain how to check if \rgame contains a play
with weight in $(-1,0)\cup(0,1)$ following one of its cycles.  Let
$B=(|\RStates|\times(|\Clocks|+1))^2$.

\begin{lemma}
  \rgame contains a play with weight in $(-1,0)\cup(0,1)$ following
  one of its cycles if and only if \rgame contains a cycle \rpath of
  length at most $B$ such that there is a corner play following \rpath
  with weight zero and another one with non-zero weight.
\end{lemma}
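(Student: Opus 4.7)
The plan is to prove the biconditional in two directions; the backward direction follows directly from Lemma~\ref{lm:cornerabstract}, while the forward direction requires a pumping argument to enforce the length bound $B$.

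For the backward direction, given a cycle $\rpath$ (the length bound plays no role here) carrying corner plays of weights $0$ and $w\neq 0$, Lemma~\ref{lm:cornerabstract} guarantees that the set of weights of plays following $\rpath$ forms an interval containing both $0$ and $w$. WLOG $w>0$; then any real number in $(0,\min(1,w)) \subseteq (0,1)$ is realised by some play following $\rpath$, producing a play with weight in $(-1,0)\cup(0,1)$.

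For the forward direction, suppose $\play$ is a play following a cycle $\rpath^*$ with $\weightC(\play) = w \in (-1,0)\cup(0,1)$. By Lemma~\ref{lm:cornerabstract}, $w$ lies in the interval of weights of plays following $\rpath^*$, whose endpoints are corner-play weights and are therefore integers (delays in corner plays of $\rgame$ are integer, and all rates and transition weights are integers). Because every SCC of $\rgame$ is mono-sign by the standing hypothesis, I assume WLOG that the ambient SCC is non-negative, so this interval has the form $[0,b]$ with integer $b\geq 1$. Hence there exist corner plays $\play^*_1$ and $\play^*_2$ following $\rpath^*$ of weights $0$ and $b$ respectively. Now I pump: at each position $i$ along $\rpath^*$, record the triple $(s^i, v^i_1, v^i_2)$ where $s^i$ is the region state and $v^i_k$ is the corner of $s^i$ visited by $\play^*_k$. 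There are at most $|\RStates|(|\Clocks|+1)^2 \leq B$ such triples, so if the length of $\rpath^*$ exceeds $B$, pigeonhole yields a non-trivial collision at positions $i<j$. Decompose $\rpath^* = \rpath_1\rpath_2\rpath_3$ with $\rpath_2$ spanning positions $i$ to $j$; both $\rpath_2$ and $\rpath_1\rpath_3$ are cycles of $\rgame$, and each corner play splits consistently as $\play^*_k = \play^1_k \play^2_k \play^3_k$ thanks to the matching corners at positions $i$ and $j$. By non-negativity of the SCC, $\weightC(\play^2_1)$ and $\weightC(\play^1_1\play^3_1)$ are both non-negative and sum to $0$, hence both vanish; meanwhile the corresponding weights for $\play^*_2$ are non-negative and sum to $b>0$, so at least one is strictly positive. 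The resulting shorter cycle, with its pair of corner plays, still carries the ``weight $0$ and weight nonzero'' property; iterating terminates at length at most $B$.

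The main obstacle is the careful execution of the pumping step. The decomposition is valid only because the recorded triple encodes both corners at each position rather than just the region state, which is exactly what produces a bound quadratic in $|\RStates|(|\Clocks|+1)$; the harmless subtlety that a collision between the first and last triple would be trivial is absorbed by the slack between my counting $|\RStates|(|\Clocks|+1)^2$ and the looser bound $B$. Equally essential is the mono-sign SCC hypothesis coming from the preceding test: it is invoked first to force the corner-play weight interval to have an integer endpoint at $0$, and again to guarantee that partial weights on sub-cycles are non-negative; otherwise the ``weight $0$ and weight nonzero'' property would not transfer to one of the sub-cycles.
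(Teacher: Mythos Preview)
Your proof is correct and follows essentially the same approach as the paper: both directions use Lemma~\ref{lm:cornerabstract}, and the forward direction in each case extracts corner plays of weight $0$ and non-zero from the mono-sign SCC hypothesis, then pumps on pairs of corners along the cycle to shorten it. The one notable difference is in how the non-zero corner play is handled after decomposition. The paper argues by contradiction from a \emph{shortest} counterexample: minimality forces the middle sub-cycle (which already carries a weight-$0$ corner play) to have \emph{all} corner plays of weight $0$, so the middle segment of the non-zero corner play is $0$ and the outer segment inherits the full non-zero weight. You instead descend directly, observing that since both segments of the non-zero corner play are non-negative (by the mono-sign hypothesis applied to corner plays, which are limits of plays) and sum to something positive, at least one is positive; you then recurse on whichever sub-cycle that is. Your direct descent is arguably cleaner and avoids the minimality trick, at the cost of a small case split; the paper's version always lands on the outer sub-cycle without branching.
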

\begin{proof}
  By Lemma~\ref{lm:cornerabstract}, \rgame contains a play with weight
  in $(-1,0)\cup(0,1)$ following one of its cycles if and only if that
  cycle satisfies that there is a corner play following it with weight
  zero and another one with non-zero weight.

  We only need to show that if there are no such cycles of length at
  most $B$, then there are no such cycles of any length.  Therefore,
  we assume that no cycle of length less than $B$ allows a play with
  weight in $(-1,0)\cup(0,1)$.  By contradiction, let \rpath be the
  shortest cycle such that there exist two corner plays \play and
  $\play'$ following \rpath, with $\weightC(\play)=0$ and
  $\weightC(\play')\neq 0$. Then $|\rpath|>B$.  Let $\corv_i$ be the
  $i$-th corner of \play, and $\corv'_i$ be the $i$-th corner of
  $\play'$. There are at most $(|\RStates|\times(|\Clocks|+1))^2$
  different pairs $(\corv_i,\corv'_i)$, which implies that there must
  be two indexes, $j$ and $k$, such that
  $(\corv_j,\corv'_j)=(\corv_k,\corv'_k)$ and $j<k$.  The portion of
  \play between indexes $j$ and $k$ follows a cycle, and have opposite
  weight to the play constructed by considering \play and removing the
  loop between indexes $j$ and $k$.  Since the sum of their weight is
  0 and they both follow cycles of \rgame in the same SCC, both of
  those plays have weight 0.  The portion of \rpath between indexes
  $j$ and $k$ is a cycle shorter than \rpath, and it contains a corner
  play of weight 0, therefore all of its corner plays have weight 0,
  and the portion of $\play'$ between indexes $j$ and $k$ has weight 0
  too.  But then the cycle defined by taking \rpath and removing the
  loop between indexes $j$ and $k$ contains a corner play of weight 0
  (derived from \play), and a corner play of weight non-zero (derived
  from $\play'$), and that contradicts \rpath being the shortest cycle
  with that property.
\end{proof}

Once again, we can check the existence of such a cycle of length
bounded by $B$ in \NPSPACE\ by guessing it and its two relevant corner
plays on-the-fly and storing the cumulated weight of each.  This imply
that deciding if a game \game is almost divergent is decidable in
$\coNPSPACE=\NPSPACE=\PSPACE$ (using the theorems of
Immerman-Szelepcs\'enyi~\cite{Imm88,Sze88} and Savitch~\cite{Sav70}).

Let us now show the $\PSPACE$-hardness (indeed the $\co\PSPACE$, which
is identical) by a reduction from the reachability problem in a timed
automaton. We consider a timed automaton
with a starting state and a different target state without outgoing
transitions. We construct from it a weighted timed game by
distributing all states to \MinPl, and equipping all transitions with
weight $0$, and all states with weight $0$. We also add a loop with
weight $1$ on the initial state, one with weight $-1$ on the target state, and a transition from the target
state to the initial state with weight $0$, all three resetting all
clocks and with no guard. Then, the weighted timed game
is not almost-divergent if and only if the target can be reached from the
initial state in the timed automaton.

\section{Proofs of the kernel characterisation (Section~\ref{sec:kernels-in-wtg})}\label{app:kernels}

\begin{lemma}\label{lm:cyclescombi}
  If $\rtrans_1\cdots\rtrans_n$ is a path in~\Kernel, then
  $\rtrans_1\rtrans_2\cdots\rtrans_n\rpath_{\rtrans_n}
  \cdots\rpath_{\rtrans_2}\rpath_{\rtrans_1}$ is a 0-cycle
  of~\rgame.
\end{lemma}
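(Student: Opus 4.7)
The plan is to induct on the length $n$ of the path $\rtrans_1\cdots\rtrans_n$ in $\Kernel$. The base case $n=1$ is immediate: by the very choice of the map $\rtrans \mapsto \rpath_\rtrans$ made earlier in the proof of Proposition~\ref{prop:kernelprop}, the cycle $\rtrans_1\rpath_{\rtrans_1}$ is a simple 0-cycle.

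For the inductive step, let $C_n = \rtrans_1\rtrans_2\cdots\rtrans_n\rpath_{\rtrans_n}\cdots\rpath_{\rtrans_1}$. I would rotate it using Lemma~\ref{lm:rotatcycle}, so that $C_n$ is a 0-cycle iff
\[
\widetilde{C}_n \;=\; [\rpath_{\rtrans_1}\rtrans_1]\,[\rtrans_2\cdots\rtrans_n\rpath_{\rtrans_n}\cdots\rpath_{\rtrans_2}]
\]
is. The first factor is a rotation of the simple 0-cycle $\rtrans_1\rpath_{\rtrans_1}$, hence a 0-cycle by Lemma~\ref{lm:rotatcycle}. The second factor is exactly the cycle associated by the statement of the lemma to the shorter kernel path $\rtrans_2\cdots\rtrans_n$, hence a 0-cycle by the inductive hypothesis. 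Crucially, both factors start and end at the common state $\last(\rtrans_1)=\first(\rtrans_2)$, so $\widetilde{C}_n$ is a well-formed concatenation.

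It then suffices to establish the auxiliary claim that the concatenation $D_1D_2$ of two 0-cycles sharing the same starting/ending state~$s$ is itself a 0-cycle. By Lemma~\ref{lm:exist0} this reduces to producing a single corner play following $D_1D_2$ of cumulated weight~$0$. Pick any corner~$\corv_0$ of the region at~$s$ and extract a corner play $\play_1$ of $\Ncgame 1$ following $D_1$ from $\corv_0$ to some corner $\corv_1$ at~$s$; as observed in the proof of Lemma~\ref{lm:exist0}, every corner play following a 0-cycle has weight~$0$, so $\weightC(\play_1)=0$. Chain it with a corner play $\play_2$ following $D_2$ from $\corv_1$ to some corner~$\corv_2$ at~$s$, which again has $\weightC(\play_2)=0$. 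The concatenation $\play_1\play_2$ is a corner play of weight~$0$ following $D_1D_2$, and the claim follows; applying it to~$\widetilde{C}_n$ closes the induction. The only subtle point to verify is that at each chosen corner a corner-play continuation of the remaining cycle actually exists in $\Ncgame 1$, which is a standard property of the corner-point abstraction already used to derive Lemma~\ref{lm:cornerabstract}, and where the ``pumping'' flavour of the argument lies.
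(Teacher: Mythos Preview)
Your proof is correct and takes a genuinely different---and simpler---route than the paper's.

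The paper proves the inductive step by an explicit pumping construction: it iterates corner plays along the innermost 0-cycle $\rtrans_{n+1}\rpath_{\rtrans_{n+1}}$ until a corner repeats, applies the induction hypothesis at two distinct corners to obtain two auxiliary corner plays, and then assembles two corner plays following cycles in the same SCC whose weights are opposite, forcing both to be~$0$ by almost-divergence. Your argument sidesteps all of this by first rotating (via Lemma~\ref{lm:rotatcycle}) so that $C_n$ factors as two 0-cycles based at the same state, and then invoking the clean auxiliary fact that a concatenation of two 0-cycles is again a 0-cycle. This auxiliary fact is essentially a one-liner once you recall (from the proof of Lemma~\ref{lm:exist0}) that \emph{every} corner play following a 0-cycle has weight~$0$: pick any corner, chain a weight-$0$ corner play along $D_1$ with a weight-$0$ corner play along $D_2$, and apply Lemma~\ref{lm:exist0}. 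What you gain is a transparent induction that localises all the work in that reusable auxiliary claim; what the paper's argument buys is self-containment (it does not need to appeal to the observation inside the proof of Lemma~\ref{lm:exist0}).

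One small remark: the point you flag as ``subtle'' and of ``pumping flavour''---the existence of a corner-play continuation from any chosen corner along a given region path---is not really a pumping argument. It is the forward-simulation property of the corner-point abstraction (from any corner of a region, one can follow any transition of the region game to some corner of the successor region), and it is used without comment in the paper, e.g.\ in the proof of Lemma~\ref{lm:rotatcycle}. No iteration over corners is required in your argument; the pumping lives only in the paper's version.
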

\begin{proof}%
  We prove the property by induction on~$n$. For~$n=1$, the property
  is immediate since $\rtrans_1\rpath_{\rtrans_1}$ is a 0-cycle.
  Consider then $n$ such that the property holds for $n$, and prove it
  for $n+1$. We will exhibit two corner plays following
  $\rtrans_1\cdots\rtrans_{n+1}\rpath_{\rtrans_{n+1}}\cdots\rpath_{\rtrans_1}$
  of opposite weight and conclude with Lemma~\ref{lm:exist0}.

  Let~$\corv_0$ be a corner of~$\last(\rtrans_{n+1})$. Since
  $\rtrans_{n+1}\rpath_{\rtrans_{n+1}}$ is a 0-cycle, there
  exists~$w\in\Z$, a corner play~$\play_{0}$ following~$\rtrans_{n+1}$
  ending in~$\corv_0$ with weight~$w$ and a corner play~$\play'_{0}$
  following~$\rpath_{\rtrans_{n+1}}$ beginning in~$\corv_0$ with
  weight~$-w$. We name~$\corv'_0$ the corner of~$\last(\rtrans_n)$
  where ends $\play'_{0}$. We consider any corner play $\play_1$
  following $\rtrans_{n+1}$ from corner $\corv'_0$. The corner play
  $\play'_0\play_1$ follows the path
  $\rpath_{\rtrans_{n+1}}\rtrans_{n+1}$ that is also a $0$-cycle by Lemma~\ref{lm:rotatcycle},
  therefore $\play_1$ has weight $w$. We denote by $\corv_1$ the
  corner where ends $\play_1$. By iterating this construction, we
  obtain some corner plays $\play_0,\play_1,\play_2,\ldots$ following
  $\rtrans_{n+1}$ and $\play'_0,\play'_1,\play'_2,\ldots$ following
  $\rpath_{\rtrans_{n+1}}$ such that $\play'_{i}$ goes from corner
  $\corv_i$ to $\corv'_i$, and $\play_{i+1}$ from corner $\corv'_i$ to
  $\corv_{i+1}$, for all $i\geq 0$. Moreover, all corner plays
  $\play_i$ have weight $w$ and all corner plays $\play'_i$ have
  weight $-w$. Consider the first index $l$ such that
  $\corv_l=\corv_k$ for some $k<l$, which exists because the number of
  corners is finite.

  We apply the induction to find a corner play following
  $\rtrans_1\cdots\rtrans_{n}\rpath_{\rtrans_{n}}\cdots\rpath_{\rtrans_1}$,
  going through the corner $\corv'_k$ in the middle: more formally,
  there exists $w_\alpha$, a corner play~$\play_\alpha$ following
  $\rtrans_1\cdots\rtrans_n$ ending in~$\corv'_k$ with
  weight~$w_\alpha$ and a corner play~$\play'_\alpha$
  following~$\rpath_{\rtrans_n}\cdots\rpath_{\rtrans_1}$ beginning
  in~$\corv'_k$ with weight~$-w_\alpha$. We apply the induction a
  second time with corner $\corv'_{l-1}$: there exists $w_\beta$, a
  corner play~$\play_\beta$ following $\rtrans_1\cdots\rtrans_n$
  ending in~$\corv'_{l-1}$ with weight~$w_\beta$ and a corner
  play~$\play'_\beta$
  following~$\rpath_{\rtrans_n}\cdots\rpath_{\rtrans_1}$ beginning
  in~$\corv'_{l-1}$ with weight~$-w_\beta$.

  The corner play
  $\play_\alpha\play_{k+1}\play'_{k+1}\play_{k+2}\play'_{k+2}
  \cdots\play'_{l-1}\play'_\beta$, of weight
  $w_\alpha+(w-w)^{l-k}-w_\beta=w_\alpha-w_\beta$, follows the cycle
  $\rtrans_1\cdots\rtrans_n(\rtrans_{n+1}\rpath_{\rtrans_{n+1}})^{l-k}
  \rpath_{\rtrans_n}\cdots\rpath_{\rtrans_1}$. The corner play
  $\play_\beta\play_l\play'_k\play'_\alpha$, of weight
  $w_\beta+w-w-w_\alpha=w_\beta-w_\alpha$, follows the cycle
  $\rtrans_1\cdots\rtrans_n\rtrans_{n+1}
  \rpath_{\rtrans_{n+1}}\rpath_{\rtrans_n}\cdots\rpath_{\rtrans_1}$. Since
  the game is almost-divergent, and those two corner plays are in the
  same SCC, both have weight 0. The second corner play of weight 0
  ensures that the cycle
  $\rtrans_1\cdots\rtrans_{n+1}
  \rpath_{\rtrans_{n+1}}\cdots\rpath_{\rtrans_1}$ is a 0-cycle, by
  Lemma~\ref{lm:exist0}.
\end{proof}

\section{Proofs of the semi-unfolding  (Section~\ref{sec:unfolding})}\label{app:unfolding}

\begin{proof}[Proof of Lemma~\ref{lm:-infty}]
  We detail the case of non-negative SCCs.
  Let us prove that a configurations has value $-\infty$ if and only if it
  belongs to a state where player~\MinPl can ensure the LTL
  formula on transitions:
  $\phi = ( \mathrm G(\neg\RTransF^\R)\wedge \neg \mathrm F \mathrm G
  \RTransK )\vee \mathrm F \RTransF^{-\infty}$.
  Since $\omega$-regular
  games are determined, this is equivalent to saying that a
  configuration has finite value if and only if it belongs to a state
  where~\MaxPl can ensure $\lnot \phi$.

  If \rstate is a state where~\MinPl can ensure $\phi$, he can ensure
  $-\infty$ value from all configurations in~\rstate by either
  reaching~$\RStatesF^{-\infty}$ or avoiding $\RStatesF^\R$ for as
  long as he desires, while not getting stuck in~\Kernel, and thus
  going through an infinite number of negative cycles by
  Proposition~\ref{prop:kernelprop}.  This proves that a state
  where~\MaxPl cannot ensure $\lnot \phi$ contains only valuations of value
  $-\infty$.  Conversely, if \rstate is a state where~\MaxPl can
  ensure
  $\lnot \phi =(\mathrm F \RTransF^\R \vee \mathrm F \mathrm G
  \RTransK) \wedge \mathrm G \neg \RTransF^{-\infty}$, then from~$s$,
  \MaxPl must be able to avoid $\RStatesF^{-\infty}$, and eventually
  enforce either $\RStatesF^\R$ reachability or staying in \Kernel
  forever. In both cases, \MaxPl can ensure a value above $-\infty$.
\end{proof}

\subsection{Semi-unfolding construction}

In order to prove Proposition~\ref{prop:semi-unfolding}, we will construct
the desired semi-unfolding \tgame of a (non-negative or non-positive SCC) \game.

If $(\loc,r)$ is in~\Kernel,
we let~$\Kernel_{\loc,r}$ be the part of~\Kernel accessible
from~$(\loc,r)$ (note that $\Kernel_{\loc,r}$ is an SCC as \Kernel is
a disjoint set of SCCs).  We define the output transitions
of~$\Kernel_{\loc,r}$ as being the output transitions of~\Kernel
accessible from~$(\loc,r)$.  If~$(\loc,r)$ is not in~\Kernel, the
output transitions of~$(\loc,r)$ are the transitions of~\rgame
starting in~$(\loc,r)$.

Formally, we define a tree $T$ whose nodes will either be labelled by
region graph states $(\loc,r)\in\RStates\backslash\RStatesK$ or by
kernels $\Kernel_{\loc,r}$, and whose edges will be labelled by output
transitions in~\rgame. The root of the tree~$T$ is labelled
with~$(\loc_0,r_0)$, or~$\Kernel_{\loc_0,r_0}$ (if $(\loc_0,r_0)$
belongs to the kernel), and the successors of a node of $T$ are then
recursively defined by its output transitions. When a state~$(\loc,r)$
is reached by an output transition, the child is labelled
by~$\Kernel_{\loc,r}$ if~$(\loc,r)\in \Kernel$, otherwise it is
labelled by~$(\loc,r)$. Edges in~$T$ are labelled by the transitions
used to create them. Along every branch, we stop the construction when
either a final state is reached (i.e.~a state not inside the current
SCC) or the branch contains $3|\rgame|\wmax^e+2\sup|\weightT|+2$ nodes labelled by the same
state ($(\loc,r)$ or~$\Kernel_{\loc,r}$). Since $\rgame$ has a finite
number of states, $T$ is finite. Leaves of $T$ with a location
belonging to~\LocsT are called \emph{target leaves}, others are called
\emph{stopped leaves}.

We now transform~$T$ into a \WTG~\tgame, by replacing every node
labelled by a state~$(\loc,r)$ by a different copy $(\tilde{\loc},r)$
of $(\loc,r)$. Those states are said to inherit from~$(\loc,r)$. Edges
of~$T$ are replaced by the transitions labelling them, and have a
similar notion of inheritance. Every non-leaf node labelled by a
kernel~$\Kernel_{\loc,r}$ is replaced by a copy of the \WTG
$\Kernel_{\loc,r}$, output transitions being plugged in the expected
way.  We deal with stopped leaves labelled by a
kernel~$\Kernel_{\loc,r}$ by replacing them with a single node copy
of~$(\loc,r)$, like we dealt with node labelled by a state~$(\loc,r)$.
State partition between players and weights are inherited from the
copied states of~\rgame.  The only initial state of~\tgame is the
state denoted by $(\tilde{\loc}_0,r_0)$ inherited from $(\loc_0,r_0)$
in the root of $T$ (either $(\loc_0,r_0)$ or $\Kernel_{\loc_0,r_0}$).
The final states of~\tgame are the states derived from leaves of
$T$. If \rgame is a non-negative (resp.~non-positive) SCC, the
output weight function \weightT is inherited from~\rgame on target
leaves and set to $+\infty$ (resp.~$-\infty$) on stopped
leaves.

\subsection{Semi-unfolding correctness}

We will now prove that Proposition~\ref{prop:semi-unfolding} holds
on this \tgame.

\begin{lemma}\label{lm:pathbound}
  All finite plays in~\rgame have cumulated weight (ignoring output
  weights) at least $-|\rgame|\wmax^e$ in the non-negative case, and at most
  $|\rgame|\wmax^e$ in the non-positive case. Moreover, values of the game
  are bounded by $|\rgame|\wmax^e+\sup|\weightT|$.
\end{lemma}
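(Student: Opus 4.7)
My plan is to prove the cumulative-weight bound by a cycle-decomposition argument, and then derive the value bound from it together with the standing hypothesis that no configuration has value $+\infty$ (and, in the non-positive case, also no $-\infty$).

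For the cumulative bound, take any finite play $\play$ in $\rgame$ following a region-game path $\rpath$, and greedily peel off simple cycles from $\rpath$: whenever $\rpath$ revisits a state, extract the enclosed sub-cycle and delete it, then iterate. After finitely many steps only a simple region-game path $\rpath^\star$ of length at most $|\rgame|$ remains, and $\play$ decomposes accordingly into a sub-play along $\rpath^\star$ plus one sub-play per extracted cycle, all staying inside a single SCC of $\rgame$. By the non-negative (resp.\ non-positive) SCC hypothesis, each cyclic sub-play has weight $\geq 0$ (resp.\ $\leq 0$), while the sub-play along $\rpath^\star$ has absolute weight at most $|\rgame|\wmax^e$, since it traverses at most $|\rgame|$ edges each of absolute weight at most $\wmax^e$. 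Summing yields $\weightC(\play)\geq -|\rgame|\wmax^e$ (resp.\ $\leq |\rgame|\wmax^e$).

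For the upper bound $\Val_\rgame\leq |\rgame|\wmax^e+\sup|\weightT|$, I use that $\Val_\rgame<+\infty$ everywhere: in $\rgame$ viewed as the underlying (untimed) turn-based reachability game, \MinPl wins from every state and so admits a memoryless attractor strategy forcing reaching a target within $|\rgame|$ transitions, regardless of \MaxPl. Lifted to the \WTG by picking any compatible delay at each step, this yields a \MinPl strategy $\minstrategy^\star$ under which every conforming play reaches a target within $|\rgame|$ steps and has total weight in $[-|\rgame|\wmax^e-\sup|\weightT|,\ |\rgame|\wmax^e+\sup|\weightT|]$. Taking the supremum over \MaxPl gives the claimed upper bound. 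The matching lower bound $\Val_\rgame\geq -|\rgame|\wmax^e-\sup|\weightT|$ is immediate in the non-negative case, since every play has weight either $+\infty$ or at least $-|\rgame|\wmax^e-\sup|\weightT|$ by the cumulative bound and the $\sup|\weightT|$ bound on the target.

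The main obstacle is the non-positive lower bound: cumulative weights of individual plays are not bounded below there, so one must genuinely exploit $\Val_\rgame>-\infty$. The argument is the symmetric counterpart of the attractor construction: using the LTL characterisation of $-\infty$ valued configurations from Lemma~\ref{lm:-infty}, \MaxPl admits a memoryless strategy $\maxstrategy^\star$ which, from every configuration, either forces reaching a finite-weight target within $|\rgame|$ transitions or traps the play forever inside the kernel, in which case its weight is $+\infty$. In both cases the outcome is at least $-|\rgame|\wmax^e-\sup|\weightT|$, so $\inf_\minstrategy \weight(\minstrategy,\maxstrategy^\star)\geq -|\rgame|\wmax^e-\sup|\weightT|$ and hence $\Val_\rgame\geq -|\rgame|\wmax^e-\sup|\weightT|$. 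A naive ``negate weights and swap players'' duality does not directly apply because the convention that non-reaching plays have weight $+\infty$ is not preserved under sign inversion, which is why the symmetric attractor argument above, rather than literal duality, is what I would use.
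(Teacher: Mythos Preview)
Your proposal is correct and follows essentially the same approach as the paper: cycle decomposition for the cumulative-weight bound, a \MinPl attractor strategy for the upper value bound, and for the non-positive lower bound the \MaxPl strategy extracted from the LTL characterisation of Lemma~\ref{lm:-infty}. Your presentation is in fact slightly more explicit than the paper's, which treats the two SCC signs in parallel and leaves the ``other'' bound in each case implicit; your observation that naive weight-negation duality fails because of the $+\infty$ convention on non-reaching plays is exactly the reason the paper also resorts to the explicit \MaxPl strategy rather than a symmetry argument.
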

\begin{proof}
  Suppose first that $\rgame$ is a non-negative SCC.  Consider a
  play~\play following a path~\rpath. \rpath can be decomposed into
  $\rpath=\rpath_1\rpath^c_1\cdots\rpath_n\rpath^c_n$ such that
  every~$\rpath^c_i$ is a cycle, and $\rpath_1\dots\rpath_n$ is a
  simple path in \rgame (thus $\sum_{i=1}^n |\rpath_i|\leq|\rgame|$).
  Let us define all plays~$\play_i$ and~$\play^c_i$ as the
  restrictions of~\play on~$\rpath_i$ and~$\rpath^c_i$.  Now, since
  all plays following cycles have cumulated weight at least~$0$,
  $\weightC(\play)=\sum_{i=1}^n \weightC(\play_i)+\weightC(\play^c_i)\geq
  \sum_{i=1}^n -\wmax^e|\play_i| + 0\geq -|\rgame|\wmax^e$. Similarly, we can show
  that every play in a non-positive SCC has cumulated weight at most $|\rgame|\wmax^e$.

  For the bound on the values, consider again two cases. If~\rgame is
  non-negative, consider any memoryless attractor strategy~\stratmin
  for~\MinPl toward~\RStatesF. Since all states have values below $+\infty$,
  all plays obtained from strategies of~\MaxPl will
  follow simple paths of~\rgame, that have cumulated weight at most
  $|\rgame|\wmax^e$ in absolute value. Similarly, if~\rgame is non-positive,
  following the proof of Lemma~\ref{lm:-infty}, since all values are above $-\infty$,
  $\MaxPl$ can ensure $\lnot \phi\Rightarrow
  \mathrm F \RTransF^\R \vee \mathrm F \mathrm G \RTransK$ on all states.
  Then we can construct a strategy~\stratmax for~\MaxPl combining an attractor
  strategy toward~\RStatesF on states satisfying $\mathrm F \RTransF^\R$, a safety
  strategy on states satisfying $G \RTransK$, and an attractor strategy toward the latter
  on all other states.
  Then, all plays obtained from strategies of~\MinPl will
  either not be winning ($G \RTransK$) or follow simple paths of~\rgame.
  Both cases imply that the values of the game
  are bounded by $|\rgame|\wmax^e+\sup|\weightT|$.
\end{proof}

\begin{lemma}\label{lm:rootleafplay}
  All plays in~\tgame from the initial state to a stopped leaf have
  cumulated weight at least $2|\rgame|\wmax^e+2\sup|\weightT|+1$ if the SCC \rgame is
  non-negative, and at most $-2|\rgame|\wmax^e-2\sup|\weightT|-1$ if it is non-positive.
\end{lemma}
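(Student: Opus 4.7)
The plan is to show that the stopping criterion forces $\play$ to traverse enough positive cycles (in the non-negative case) to accumulate weight at least $2|\rgame|\wmax^e+2\sup|\weightT|+1$; the non-positive case is entirely symmetric. By Lemma~\ref{lm:cornerabstract}, it suffices to lower-bound the weight of an arbitrary corner play $\play'$ following the projected path $\rpath$ of $\play$ in $\rgame$. By the stopping criterion, the branch of $T$ traversed by $\play'$ contains $m := 3|\rgame|\wmax^e+2\sup|\weightT|+2$ nodes bearing the same label $L$, which is either a state $(\loc,r)\notin\Kernel$ or a kernel $\Kernel_{\loc,r}$.

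If $L=(\loc,r)\notin\Kernel$, I decompose $\play'$ into a prefix reaching $(\loc,r)$ for the first time, followed by $m-1$ cycles around $(\loc,r)$. Each such cycle visits the non-kernel state $(\loc,r)$, so by Proposition~\ref{prop:kernelprop} it is not a $0$-cycle; the non-negativity of the SCC then forces it to be positive, contributing weight at least $1$. Combined with the prefix bound $\geq -|\rgame|\wmax^e$ from Lemma~\ref{lm:pathbound}, this already yields $\weightC(\play')\geq -|\rgame|\wmax^e+(m-1) = 2|\rgame|\wmax^e+2\sup|\weightT|+1$.

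If $L=\Kernel_{\loc,r}$, let $a_0,\ldots,a_{m-1}$ be the corner states at which $\play'$ successively enters the kernel and $b_0,\ldots,b_{m-2}$ the corresponding exit corners. The crucial point is that inside the kernel, any two corner plays between the same pair of corner states carry the same weight: closing them with a common return path gives two cycles lying entirely in $\Kernel$, hence $0$-cycles by Proposition~\ref{prop:kernelprop}, so by Lemma~\ref{lm:exist0} the two forward weights must coincide. This defines a weight function $\kappa$ on pairs of kernel corners. For each $i\leq m-2$, concatenating the outside sub-play of $\play'$ from $b_i$ to $a_{i+1}$ with a kernel corner path from $a_{i+1}$ back to $b_i$ yields a cycle of $\rgame$ not entirely in $\Kernel$, hence positive, so the outside weight $w_i$ satisfies $w_i\geq 1+\kappa(b_i,a_{i+1})$. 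Telescoping the $\kappa$ contributions along the in-kernel segments $\kappa(a_i,b_i)$ then gives
\[
\weightC(\play')\geq w_0^{\mathrm{pref}}+(m-1)+\kappa(a_0,a_{m-1}),
\]
and since $w_0^{\mathrm{pref}}+\kappa(a_0,a_{m-1})$ is itself the weight of a single corner path in $\rgame$ from the initial state to $a_{m-1}$, Lemma~\ref{lm:pathbound} bounds it below by $-|\rgame|\wmax^e$, yielding once more the target bound $2|\rgame|\wmax^e+2\sup|\weightT|+1$.

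The main obstacle lies in the kernel case: one must carefully justify the well-definedness of $\kappa$ and the existence of the reverse corner paths inside the kernel connecting the corner pairs that actually arise from $\play'$. This hinges on a pumping argument on corners in the spirit of Lemma~\ref{lm:rotatcycle}, exploiting that the kernel is a disjoint union of SCCs of $\rgame$ in which all cycles are $0$-cycles. The state case, by contrast, is a direct cycle decomposition, and the non-positive case reverses all inequalities, replacing positive cycles by negative ones and the bound from Lemma~\ref{lm:pathbound} by its symmetric upper-bound counterpart.
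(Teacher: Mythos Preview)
Your overall strategy is sound, and the non-kernel case matches the paper's argument exactly. However, you are making the kernel case vastly harder than it needs to be, and in doing so you introduce an obstacle (well-definedness of $\kappa$ and existence of reverse corner paths) that is genuinely non-trivial and that you leave unresolved.

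The simplification you are missing is this: by construction of the tree $T$, a node is labelled $\Kernel_{\loc,r}$ precisely when the play enters the kernel \emph{via the region state $(\loc,r)$}. Hence two nodes on a branch carrying the \emph{same} label $\Kernel_{\loc,r}$ correspond to two visits of the region path $\rpath$ to the \emph{same} region state $(\loc,r)$. The segment of $\rpath$ between them is therefore already a cycle of $\rgame$, with no need to close anything by hand at the corner level. Moreover, to go from one kernel node to the next node in $T$ one must take an output transition of the kernel, so this cycle contains at least one transition outside $\Kernel$. By Proposition~\ref{prop:kernelprop} it is not a $0$-cycle, hence it is positive, and \emph{every} play (not just corner plays) following it has weight at least~$1$. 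This is exactly the same argument as in your state case, and it is precisely what the paper does: it treats both cases uniformly in two lines, decomposing $\rpath$ as $\rpath'\rpath_1\cdots\rpath_{m-1}$ with each $\rpath_i$ a non-$0$-cycle, and then invoking Lemma~\ref{lm:pathbound} only for the prefix $\rpath'$.

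In particular, the whole detour through corner plays via Lemma~\ref{lm:cornerabstract} is unnecessary: the classification ``positive cycle'' already gives a lower bound of $1$ for \emph{all} plays following the cycle, so you can argue directly about $\play$. Your telescoping/$\kappa$ machinery can be made to work (the pumping needed is essentially Lemma~\ref{lm:cyclescombi}), but it buys you nothing here and costs you a page of delicate corner bookkeeping that the paper avoids entirely.
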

\begin{proof}
  Note that by construction, all finite paths in \tgame from the
  initial state to a stopped leaf can be decomposed as
  $\rpath'\rpath_1\cdots\rpath_{3|\rgame|\wmax^e+2\sup|\weightT|+1}$ with all $\rpath_i$
  being cycles. Additionally, those cycles cannot be 0-cycles by
  Proposition~\ref{prop:kernelprop}, since they take at least one
  transition outside of~\Kernel. Therefore the restriction of~\play
  to $\rpath_1\cdots\rpath_{3|\rgame|\wmax^e+1}$ has weight at
  least~$3|\rgame|\wmax^e+2\sup|\weightT|+1$ (in the non-negative case) and at most
  $-3|\rgame|\wmax^e-2\sup|\weightT|-1$ (in the non-positive case). The beginning of the
  play, following $\rpath'$, has cumulated weight at
  least~$-|\rgame|\wmax^e$ (in the non-negative case) and at most
  $|\rgame|\wmax^e$ (in the non-positive case), by Lemma~\ref{lm:pathbound}.
\end{proof}

Two plays
$\play=((\loc_1,r_1),\val_1)\xrightarrow{d_1,\rtrans_1}\cdots
\xrightarrow{d_{n-1},\rtrans_{n-1}}((\loc_n,r_n),\val_n)$ and
$\tilde{\play}=((\tilde{\loc}_1,r_1),\val_1)\xrightarrow{d_1,
  ,\tilde{\rtrans}_1}\cdots \xrightarrow{d_{n-1},
  \tilde{\rtrans}_{n-1}}((\tilde{\loc}_n,r_n),\val_n)$ in $\rgame$ and
$\tgame$, respectively, are said to \emph{mimic} each other if every
$(\tilde{\loc}_i,r_i)$ is inherited from $(\loc_i,r_i)$ and every
transition $\tilde{\rtrans}_i$ is inherited from the transition
$\trans_i$. Combining Lemmas~\ref{lm:rootleafplay} and
\ref{lm:pathbound}, we obtain
\begin{lemma}\label{lm:mimickedplays}
  If \rgame is a non-negative (resp.~non-positive) SCC, every
  play from the initial state and with cumulated weight less than
  $|\rgame|\wmax^e+2\sup|\weightT|+1$ (resp.~greater than $-|\rgame|\wmax^e-2\sup|\weightT|-1$) can be
  mimicked in~\tgame without reaching a stopped leaf. Conversely,
  every play in~\tgame reaching a target leaf can be mimicked
  in~\rgame.
\end{lemma}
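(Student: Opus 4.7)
The plan is to handle the two directions separately, treating the non-negative case in detail and obtaining the non-positive case by a symmetric argument (using that the weights have been flipped in sign but the structural construction of \tgame is symmetric).

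For the forward direction in the non-negative case, I would proceed by contradiction. Take a play $\play$ in \rgame from the initial state with $\weightC(\play)<|\rgame|\wmax^e+2\sup|\weightT|+1$. By induction on the length, I build a mimicking play $\tilde\play$ in \tgame that follows, step by step, the transitions of \tgame inherited from those of $\play$; this is well-defined at any non-leaf node of \tgame because the construction of \tgame preserves all outgoing transitions of \rgame at non-leaf nodes. Suppose for contradiction that $\tilde\play$ reaches a stopped leaf at some prefix of length $k$. Then Lemma~\ref{lm:rootleafplay} applied to this prefix gives cumulated weight at least $2|\rgame|\wmax^e+2\sup|\weightT|+1$; since $\tilde\play$ and $\play$ have identical discrete and continuous weights step by step, the prefix of $\play$ of length $k$ has the same cumulated weight. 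Applying Lemma~\ref{lm:pathbound} to the remainder of $\play$ after step $k$, this remainder has cumulated weight at least $-|\rgame|\wmax^e$, so $\weightC(\play)\geq|\rgame|\wmax^e+2\sup|\weightT|+1$, contradicting our assumption.

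The non-positive case follows the same structure: if the mimic hit a stopped leaf, the prefix would have cumulated weight at most $-2|\rgame|\wmax^e-2\sup|\weightT|-1$ by Lemma~\ref{lm:rootleafplay}, the remainder would add at most $|\rgame|\wmax^e$ by Lemma~\ref{lm:pathbound}, giving $\weightC(\play)\leq-|\rgame|\wmax^e-2\sup|\weightT|-1$, contradiction.

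For the converse direction, the argument is direct from the construction of \tgame. A play $\tilde\play$ in \tgame reaching a target leaf uses only transitions inherited from transitions of \rgame (whether these transitions come from a non-kernel node or from inside a kernel copy $\Kernel_{\loc,r}$, which is itself a subgraph of \rgame). Reading off the corresponding transitions of \rgame yields a play $\play$ in \rgame that mimics $\tilde\play$, with the same delays and the same discrete transitions at each step; that the final node is a target leaf guarantees that $\play$ ends in a genuine target state of \rgame. No obstacle is expected in this direction.

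The main obstacle of the proof lies in correctly pinning down the bound in the forward direction, that is, ensuring that the slack $|\rgame|\wmax^e$ left between the threshold $|\rgame|\wmax^e+2\sup|\weightT|+1$ of Lemma~\ref{lm:mimickedplays} and the lower bound $2|\rgame|\wmax^e+2\sup|\weightT|+1$ of Lemma~\ref{lm:rootleafplay} exactly absorbs the worst possible drop of cumulated weight on the tail of $\play$ beyond the stopped leaf, as quantified by Lemma~\ref{lm:pathbound}. Everything else is bookkeeping along the construction of \tgame.
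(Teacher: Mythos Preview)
Your proposal is correct and matches the paper's own proof essentially line for line: the paper also argues by contradiction in the non-negative case, invokes Lemma~\ref{lm:rootleafplay} on the prefix reaching a stopped leaf and Lemma~\ref{lm:pathbound} on the suffix to obtain $\weightC(\play)\geq|\rgame|\wmax^e+2\sup|\weightT|+1$, handles the non-positive case symmetrically, and dismisses the converse as immediate from the construction. Your version simply spells out more explicitly how the mimic is built by induction and why the arithmetic on the bounds closes.
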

\begin{proof}
  We prove only the non-negative case. Let~\play be a play of~\rgame
  with cumulated weight less than $|\rgame|\wmax^e+2\sup|\weightT|+1$.  Consider
  the branch of the unfolded game it follows. If~\play cannot be
  mimicked in~\tgame, then a prefix of~\play reaches the stopped leaf
  of that branch when mimicked in~\tgame.  In this situation, \play
  starts by a prefix of weight at least $2|\rgame|\wmax^e+2\sup|\weightT|+1$ by
  Lemma~\ref{lm:rootleafplay} and then ends with a suffix play of
  weight at least~$-|\rgame|\wmax^e$ by Lemma~\ref{lm:pathbound}, and that
  contradicts the initial assumption. The non-positive case is proved
  exactly the same way, and the converse is true by construction.
\end{proof}

Then, the plays of~\rgame starting in an initial
configuration that cannot be mimicked in~\tgame are not useful for
value computation, which is formalised by Proposition~\ref{prop:unfolding}:
\begin{proposition}\label{prop:unfolding}
 For all valuations $\val_0\in r_0$,
 $\Val_{\game}(\loc_0,\val_0) = \Val_{\tgame}((\tilde{\loc}_0,r_0),\val_0)$.
\end{proposition}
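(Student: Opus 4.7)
The plan is to establish both inequalities $\Val_\tgame \leq \Val_\game$ and $\Val_\tgame \geq \Val_\game$ by a mimicking argument between $\game$ (equivalently $\rgame$, via Lemma~\ref{lm:region-game}) and $\tgame$. Every play in $\tgame$ projects canonically to a play in $\rgame$ by forgetting the tree memory of the semi-unfolding, and conversely every play in $\rgame$ of controlled cumulated weight lifts uniquely into $\tgame$ without hitting a stopped leaf, thanks to Lemma~\ref{lm:mimickedplays}. The key technical leverage will come from comparing the range of $\Val_\game$ provided by Lemma~\ref{lm:pathbound}, namely $[-|\rgame|\wmax^e - \sup|\weightT|,\;|\rgame|\wmax^e + \sup|\weightT|]$, with the strictly larger threshold $2|\rgame|\wmax^e + 2\sup|\weightT| + 1$ of Lemma~\ref{lm:rootleafplay} required to reach a stopped leaf.

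I would first treat the case of a non-negative SCC, where stopped leaves carry weight $+\infty$. For $\Val_\tgame \leq \Val_\game$, fix $\varepsilon \in (0,1)$ and an $\varepsilon$-optimal strategy \stratmin of \MinPl in $\game$, then lift it to $\tilde\stratmin$ in $\tgame$ by applying \stratmin to the \rgame-projection of each \tgame history. For any counter-strategy of \MaxPl in $\tgame$, the outcome $\tilde\play$ projects to a play $\play$ in $\game$ conforming to \stratmin, so $\weight_\game(\play) \leq \Val_\game + \varepsilon$; in particular $\play$ reaches a target (otherwise its weight would be $+\infty$), which yields $\weightC(\play) \leq |\rgame|\wmax^e + 2\sup|\weightT| + \varepsilon$, strictly below the threshold of Lemma~\ref{lm:mimickedplays}. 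That lemma then guarantees $\tilde\play$ also avoids stopped leaves, giving $\weight_\tgame(\tilde\play) = \weight_\game(\play) \leq \Val_\game + \varepsilon$. The reverse inequality $\Val_\tgame \geq \Val_\game$ is easier: I would lift an $\varepsilon$-optimal \stratmax similarly; every outcome then either reaches a target leaf (weight preserved), a stopped leaf, or an infinite kernel loop, and the latter two give $+\infty \geq \Val_\game - \varepsilon$.

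The non-positive SCC case is handled symmetrically by swapping the roles of \MinPl and \MaxPl and replacing $+\infty$ by $-\infty$ at stopped leaves. The delicate direction is then $\Val_\tgame \geq \Val_\game$, where \MaxPl's mimicked strategy must avoid driving the outcome into a stopped leaf of weight $-\infty$; this is done using the symmetric weight estimate combining Lemmas~\ref{lm:pathbound} and~\ref{lm:rootleafplay}. The hard part of the argument will be to line up these weight inequalities in the non-trivial direction of each case so that the mimicked strategy cannot exploit the constructed discrepancy between $\game$ and $\tgame$ at stopped leaves; this relies crucially on the standing assumption that $\game$ has no configuration of value $\pm\infty$, which guarantees that $\varepsilon$-optimal strategies with finite-weight outcomes exist and can be lifted.
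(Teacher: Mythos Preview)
Your approach is essentially the paper's: both directions are obtained by mimicking $\varepsilon$-optimal strategies between $\rgame$ and $\tgame$, using Lemma~\ref{lm:pathbound} to bound values, Lemma~\ref{lm:rootleafplay} to bound weights of plays reaching stopped leaves, and Lemma~\ref{lm:mimickedplays} to transfer plays. The one organisational difference is that the paper, for the direction $\Val_{\rgame} \leq \Val_\tgame$, lifts an $\varepsilon$-optimal strategy of the relevant player \emph{from} $\tgame$ \emph{to} $\rgame$ (which forces it to first establish, via separate claims, that $\Val_\tgame$ is finite), whereas you always lift from $\game$ to $\tgame$; your variant is slightly more direct and avoids those auxiliary claims.

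One point to tighten: your ``easy'' direction in the non-positive case is not obtained by pure symmetry. In the non-negative case, lifting $\stratmax$ works because outcomes that reach a stopped leaf \emph{or} loop forever in a kernel both get weight $+\infty$, which is favourable to $\MaxPl$. After swapping signs, stopped leaves carry $-\infty$ (favourable to $\MinPl$) but infinite kernel plays still carry $+\infty$ (unfavourable to $\MinPl$), so the symmetry breaks. The fix is the one you allude to in your last sentence: since $\Val_\game < +\infty$, an $\varepsilon$-optimal $\stratmin$ has only target-reaching outcomes in $\rgame$, hence its lift cannot produce an infinite kernel play in $\tgame$. You should state this explicitly rather than leave it to symmetry. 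Similarly, in the ``hard'' direction you should make clear that the bound $\weightC(\play)\leq|\rgame|\wmax^e+2\sup|\weightT|+\varepsilon$ applies to the full $\rgame$-outcome, and then invoke Lemma~\ref{lm:mimickedplays} on that outcome to exclude the lifted play from ever reaching a stopped leaf (the lemma's proof already handles the prefix argument via Lemma~\ref{lm:rootleafplay}).
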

\begin{proof}
  By Lemma~\ref{lm:region-game}, we already know that
  $\Val_\game(\loc_0,\val_0) = \Val_{\rgame}((\loc_0,r_0),\val_0)$.
  Recall that we only left finite values in \rgame (in the final
  weight functions, in particular), and more precisely
  $|\Val_{\rgame}((\loc_0,r_0),\val_0)|\leq |\rgame|\wmax^e+\sup|\weightT|$
  by Lemma~\ref{lm:pathbound}.
  We first show that the value is also finite in \tgame.
  Indeed, if $\Val_{\tgame}((\tilde{\loc}_0,r_0),\val_0)=+\infty$,
  since we assumed all output weights of \rgame bounded,
  we are necessarily in the non-negative case, and $\MaxPl$ is able to ensure
  stopped leaves reachability.

  \paragraph{Claim 1}%
  If $\Val_{\tgame}((\tilde{\loc}_0,r_0),\val_0)=+\infty$, then
  there are no winning strategies in~\rgame for~\MinPl ensuring weight less than
  $|\rgame|\wmax^e+\sup|\weightT|+1$ from~$(\loc_0,r_0)$.

  Thus, we can obtain the contradiction $\Val_{\rgame}((\loc_0,r_0),\val_0)>
  |\rgame|\wmax^e+\sup|\weightT|$.
  \begin{proof}[Proof of Claim 1]%
    By contradiction, consider a strategy \stratmin of \MinPl ensuring weight
    $A\leq|\rgame|\wmax^e+\sup|\weightT|+1$ in \rgame.
    Then, for all \stratmax, the cumulated weight of\\
    $\outcomes_{\rgame}(((\tilde{\loc}_0,r_0),\val_0),\stratmin,\stratmax)$
     (reaching target configuration $(\loc,\val)$)
    is at most $A-\weightT(\loc,\val)\leq |\rgame|\wmax^e+2\sup|\weightT|+1$, and by Lemma~\ref{lm:mimickedplays}
    this play does not reach a stopped leaf when mimicked in \tgame, which is absurd.
  \end{proof}
  If $\Val_{\tgame}((\tilde{\loc}_0,r_0),\val_0)=-\infty$,
  we are necessarily in the non-positive case, and by construction
  this implies having~\MinPl ensuring stopped leaves reachability
  in~\tgame.

  \paragraph{Claim 2}%
  If $\Val_{\tgame}((\tilde{\loc}_0,r_0),\val_0)=-\infty$, then
  there are no winning strategies in~\rgame for~\MaxPl ensuring weight above
  $-|\rgame|\wmax^e-\sup|\weightT|-1$ from~$(\loc_0,r_0)$.

  Thus, we can obtain the contradiction $\Val_{\rgame}((\loc_0,r_0),\val_0)<
  -|\rgame|\wmax^e-\sup|\weightT|$.
  \begin{proof}[Proof of Claim 2]
    By contradiction, consider a strategy \stratmax of \MaxPl ensuring weight
    $A\geq-|\rgame|\wmax^e-\sup|\weightT|-1$ in \rgame.
    Then, for all \stratmin, the cumulated weight of \\ $\outcomes_{\rgame}(((\tilde{\loc}_0,r_0),\val_0),\stratmin,\stratmax)$ (reaching target configuration $(\loc,\val)$)
    is at least $A-\weightT(\loc,\val)\geq -|\rgame|\wmax^e -2\sup|\weightT| -1$, and by Lemma~\ref{lm:mimickedplays}
    this play does not reach a stopped leaf when mimicked in \tgame, which is absurd.
  \end{proof}

  Then, strategies and plays of~\tgame starting from~$(\tilde{\loc}_0,r_0)$
  can be mimicked in~\rgame, therefore
  $\Val_{\rgame}((\loc_0,r_0),\val_0) \leq \Val_\tgame(\tilde
  s_0,\val_0)$:
  If~\rgame is non-negative, for all $\varepsilon>0$ we can fix an
  $\varepsilon$-optimal strategy~\stratmin for~\MinPl in~\tgame.
  It is a winning strategy, so every play derived from \stratmin in \tgame reaches a target leaf,
  and can be mimicked in \rgame by Lemma~\ref{lm:mimickedplays}. Therefore,
  \stratmin can be mimicked in \rgame, where it is also winning, with the same weight.
  From this we deduce $\Val_{\rgame}((\loc_0,r_0),\val_0) \leq \Val_\tgame(\tilde
  s_0,\val_0)$. If~\rgame is non-positive, the same reasoning applies by considering
  an $\varepsilon$-optimal strategy for $\MaxPl$ in~$\tgame$.

  Let us now show that
  $\Val_\tgame((\tilde{\loc}_0,r_0),\val_0) \leq
  \Val_{\rgame}((\loc_0,r_0),\val_0)$.
  If~\rgame is non-negative, let us fix $0<\varepsilon<1$, an
  $\varepsilon$-optimal strategy~\stratmin for~\MinPl in~\rgame, and a
  strategy~\stratmax of~\MaxPl in~\rgame.
  Let $\play$ be their outcome $\outcomes_{\rgame}(((\loc_0,r_0),\val_0),\stratmin,\stratmax))$,
  $\play_k$ be the finite prefix of $\play$ defining its cumulative weight
  and $(\loc_k,\val_k)$ be the configuration defining its output weight, such that
  $\weight_{\rgame}(\play)=\weightC(\play_k)+\weightT(\loc_k,\val_k)$.
  Then, $\weight_{\rgame}(\play)\leq \Val_{\rgame}((\loc_0,r_0),\val_0)+\varepsilon <
  |\rgame|\wmax^e+\sup|\weightT|+1$, therefore
  $\weightC(\play_k)<|\rgame|\wmax^e+\sup|\weightT|+1-\weightT(\loc_k,\val_k)
  \leq |\rgame|\wmax^e+2\sup|\weightT|+1$ and by Lemma~\ref{lm:mimickedplays}
  all such plays $\play$ can be mimicked in~\tgame, and
  $\Val_\tgame((\tilde{\loc}_0,r_0),\val_0) \leq
  \Val_{\rgame}((\loc_0,r_0),\val_0)$.
  Once again, if~\rgame is non-positive, the same reasoning applies by considering
  an $\varepsilon$-optimal strategy for $\MaxPl$ in~$\rgame$.
\end{proof}

This proof not only holds on an SCC, but also on full almost-divergent
\WTG{s}, by simply stacking the semi-unfoldings of each SCC on top of each others.

Note that the semi-unfolding procedure of an SCC depends on $\sup|\weightT|$,
where $\weightT$ can be the value function of an SCCs under the current one.
Assuming all configurations have finite value, we can extend the reasoning of Lemma~\ref{lm:pathbound}
and bound all values in the full game by $|\rgame|\wmax^e+\sup|\weightT|$, which
let us bound uniformly the unfolding depth of each SCC and gives us
a bound on the depth of the complete semi-unfolding tree:
$|\rgame|(5|\rgame|\wmax^e+2\sup|\weightT|+2)+1$

\section{Proofs of the approximation schema  (Section~\ref{sec:approx})}\label{app:approx}

\subsection{Proofs of the approximation of kernels}

\begin{proof}[Proof of Lemma~\ref{lm:close-plays}]
  Since $\play$ and $\play'$ follow the same locations $\loc$ of
  $\game$, one reaches a target location if and only if the other
  does. In the case where they do not reach a target location, both
  weights are infinite, and thus equal. We now look at the case where
  both plays reach a target location, moreover in the same step.

  Consider the region path $\rpath$ of the run $\play$: $\rpath$ can
  be decomposed into a simple path with maximal cycles in it. The
  number of such maximal cycles is bounded by
  $|\Locs\times\regions \Clocks \clockbound|$ and the remaining simple
  path has length at most $|\Locs\times\regions \Clocks
  \clockbound|$. Since all cycles of a kernel are $0$-cycles, the
  parts of $\play$ that follow the maximal cycles have weight exactly
  0.

  Consider the same decomposition for the play $\play'$. Cycles of
  $\rpath$ do not necessarily map to cycles over locations of
  $\Ncgame N$, since the $1/N$-regions could be distinct. However,
  Lemma~\ref{lm:cornerabstract} shows that, for all those cycles of
  $\rpath$, there exists a sequence of finite plays of $\game$ whose
  weight tends to the weight of $\play'$. Since all those finite plays follow a cycle
  of the region game $\rgame$ (with $\game$ being a kernel), they all
  have weight 0. Hence, the parts of $\play'$ that follow the maximal
  cycles of $\rpath$ have also weight exactly 0.

  Therefore, the difference
  $|\weight_\game(\play)-\weight_{\Ncgame N}(\play')|$ is concentrated
  on the remaining simple path of $\rpath$: on each transition of this
  path, the maximal weight difference is $1/N\times \wmax^\Locs$ since $1/N$
  is the largest difference possible in time delays between plays that
  stay $1/N$-close (since they stay in the same
  $1/N$-regions). Moreover, the difference between the output weight
  functions is bounded by $\lipconst/N$, since the output weight function
  $\weightT$ is \lipconst-Lipschitz-continuous and the output weight
  function of $\Ncgame N$ is obtained as limit of $\weightT$.
  Summing the two contributions, we obtain as upper bound
  the constant $\myconst/N$.
\end{proof}

\begin{proof}[Proof of Lemma~\ref{lm:bisimulation}]
  Let us prove that both $\Val_\game(\loc,\val)\leq \Val_{\Ncgame N}((\loc,r,v),v)+\alpha$
  and
  $\Val_{\Ncgame N}((\loc,r,v),v)\leq \Val_\game(\loc,\val)+\alpha$,
  with $\alpha = \myconst/N$. By definition and determinacy
  of turn-based \WTG, this is equivalent to proving these two inequalities:
  \[\inf_{\minstrategy}\sup_{\maxstrategy}
    \weight_\game(\outcomes((\loc,\val),\maxstrategy,\minstrategy))
    \leq \inf_{\minstrategy'}\sup_{\maxstrategy'} \weight_{\Ncgame
      N}(\outcomes(((\loc,r,v),v),\maxstrategy',\minstrategy'))
    +\alpha\]
  \[\sup_{\maxstrategy'}\inf_{\minstrategy'} \weight_{\Ncgame
    N}(\outcomes(((\loc,r,v),v),\maxstrategy',\minstrategy'))
    \leq \sup_{\maxstrategy}\inf_{\minstrategy}
    \weight_\game(\outcomes((\loc,\val),\maxstrategy,\minstrategy))
    +\alpha\]
  Let $(\beta)$ denote $|\weight_\game(\outcomes((\loc,\val),\maxstrategy,\minstrategy))
  - \weight_{\Ncgame
    N}(\outcomes(((\loc,r,v),v),\maxstrategy',\minstrategy'))|\leq
  \alpha$.
  To show the first inequality, it suffices to show that
  for all $\minstrategy'$, there exists $\minstrategy$ such that for
  all $\maxstrategy$, there is $\maxstrategy'$ verifying $(\beta)$.
  For the second, it suffices to show that
  for all $\maxstrategy'$, there exists $\maxstrategy$ such that for
  all $\minstrategy$, there is $\minstrategy'$ verifying $(\beta)$.
  We will detail the proof for the first, the second being syntactically the same,
  with both players swapped.

  Equation $(\beta)$ can be obtained from Lemma~\ref{lm:close-plays},
  under the condition that
  the plays \\
  $\outcomes((\loc,\val),\maxstrategy,\minstrategy)$ and
  $\outcomes(((\loc,r,v),v),\maxstrategy',\minstrategy')$ are
  $1/N$-close.
  Therefore, we fix a strategy $\minstrategy'$ of \MinPl in the game
  $\Ncgame N$, and we construct a strategy $\minstrategy$ of \MinPl in
  $\game$, as well as two mappings
  $f\colon \FPlaysMin_\game\to \FPlaysMin_{\Ncgame N}$ and
  $g\colon \FPlaysMax_{\Ncgame N}\to \FPlaysMax_\game$ such~that:
  \begin{itemize}
  \item for all $\play\in \FPlaysMin_\game$, $\play$ and $f(\play)$
    are $1/N$-close, and if $\play$ is consistent with $\minstrategy$
    and starts in $(\loc,\val)$, then $f(\play)$ is consistent with
    $\minstrategy'$ and starts in $((\loc,r,v),v)$;
  \item for all $\play'\in \FPlaysMax_{\Ncgame N}$, $g(\play')$ and
    $\play'$ are $1/N$-close, and if $\play'$ is consistent with
    $\minstrategy'$ and starts in $((\loc,r,v),v)$, then $g(\play')$
    is consistent with $\minstrategy$ and starts in $(\loc,\val)$.
  \end{itemize}
  We build $\minstrategy$, $f$, and $g$ by induction on the length $n$ of plays, over
  prefixes of plays of length $n-1$, $n$ and $n$, respectively.
  For $n=0$ (plays of length 0 are those restricted to a single
  configuration), we let $f(\loc,\val)=((\loc,r,v),v)$ and $g((\loc,r,v),v)=(\loc,\val)$,
  leaving the other values arbitrary (since we will not use them).

  Then, we suppose $\minstrategy$, $f$,
  and $g$ built until length $n-1$, $n$ and $n$, respectively
  (if $n=0$, $\minstrategy$ has not been build yet), and we define
  them on plays of length $n$, $n+1$ and $n+1$, respectively.
  For every $\play\in \FPlaysMin_\game$ of length $n$, we note
  $\play'=f(\play)$. Consider the decision
  $(d',\trans')=\minstrategy'(\play')$ and $\play'_+$ the prefix
  $\play'$ extended with the decision $(d',\trans')$. By timed
  bisimulation, there exists $(d,\trans)$ such that the prefix
  $\play_+$ composed of $\play$ extended with the decision
  $(d,\trans)$ builds $1/N$-close plays $\play_+$ and $\play'_+$.
  We let $\minstrategy(\play)=(d,\trans)$.
  If $\play_+\in \FPlaysMin_\game$, we also let $f(\play_+)=\play'_+$,
  and otherwise we let $g(\play'_+)=\play_+$.
  Symmetrically,consider $\play'\in\FPlaysMax_{\Ncgame N}$ of length $n$, and
  $\play=g(\play')$. For all possible decisions $(d',\trans')$, by
  timed bisimulation, there exists a decision $(d,\trans)$ in the
  prefix $\play$ such that the respective extended plays $\play'_+$
  and $\play_+$ are $1/N$-close. We then let $g(\play'_+)=\play_+$ if
  $\play_+\in\FPlaysMax_\game$ and $f(\play_+)=\play'_+$ otherwise.
  We extend the definition of $f$ and $g$ arbitrarily for other
  prefixes of plays. The properties above are then trivially verified.

  We then fix a strategy $\maxstrategy$ of \MaxPl in the game $\game$,
  which determines a unique play\\
  $\outcomes((\loc,\val),\maxstrategy,\minstrategy)$. We construct a
  strategy $\maxstrategy'$ of \MaxPl in the game $\Ncgame N$ by
  building the unique play
  $\outcomes(((\loc,r,v),v),\maxstrategy',\minstrategy')$ we will be
  interested in, such that each of its prefixes is in relation, via
  $f$ or $g$, to the associated prefix of
  $\outcomes((\loc,\val),\maxstrategy,\minstrategy)$. Thus, we only
  need to consider a prefix of play $\play'\in \FPlaysMax_{\Ncgame N}$
  that starts in $((\loc,r,v),v)$ and is consistent with
  $\minstrategy'$, and $\maxstrategy'$ built so far. Consider the play
  $\play = g(\play')$, starting in $(\loc,\val)$ and consistent with
  $\minstrategy$, and $\maxstrategy$ (by assumption). For the decision
  $(d,\trans)=\maxstrategy(\play)$ (letting $\play_+$ be the extended
  prefix), the definition of $f$ and $g$ ensures that there exists a
  decision $(d',\trans')$ after $\play'$ that results in an extended
  play $\play'_+$ that is $1/N$-close, via $f$ or $g$, with
  $\play_+$. We thus can choose $\maxstrategy'(\play')=(d',\trans')$.

  We finally have built two plays
  $\outcomes((\loc,\val),\maxstrategy,\minstrategy)$ and
  $\outcomes((\loc',\val'),\maxstrategy',\minstrategy')$ that are
  $1/N$-close, as needed, which concludes this proof.
\end{proof}

\begin{proof}[Proof of Lemma~\ref{lm:kernel-approx-regular}]
  By construction, the approximated value is piecewise linear with one
  piece per $1/N$-region. To prove the Lipschitz constant, it is then
  sufficient to bound the difference between
  $\Val_{\Ncgame N}((\loc,r,v),v)$ and
  $\Val_{\Ncgame N}((\loc,r,v'),v')$, for $v$ and $v'$ two corners of
  a $1/N$-region $r$. We can  pick any valuation \val in $r$ and
  apply Lemma~\ref{lm:bisimulation} twice, between \val and $v$, and
  between \val and $v'$. %
  We obtain $|\Val_{\Ncgame N}((\loc,r,v),v)-\Val_{\Ncgame N}((\loc,r,v'),v')|
  \leq 2  \myconst /N = 2 \| v-v'\|_\infty \myconst$.
\end{proof}

\subsection{Computing the value of an acyclic WTG}\label{app:value-ite}

Note that for a piecewise linear functions with finitely many pieces,
being \lipconst-Lipschitz-continuous over regions is equivalent to being
continuous over regions and having all partial derivatives bounded
by \lipconst in absolute value.

\begin{lemma}\label{lm:value-ite-partial-derivatives}
  If for all $\loc\in\Locs$, $V_\loc$ is piecewise linear with
  finitely many pieces %
  that have all their partial derivatives bounded
  by \lipconst in absolute value,
  then for all $\loc\in\Locs$,
  $\mathcal{F}(V)_\loc$ is continuous over regions and piecewise linear with
  finitely many pieces %
  that have all their partial derivatives bounded
  by $\max(\lipconst,|\weight(\loc)| + (n-1) \lipconst)$ in absolute value.
\end{lemma}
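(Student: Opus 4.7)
The plan is to apply the operator $\mathcal F$ of Equation~\eqref{eq:operator} transition by transition and then take the $\sup$ or $\inf$ over the finitely many outgoing transitions of $\loc$. The $\sup$ or $\inf$ of finitely many piecewise linear functions continuous over regions is again piecewise linear with finitely many pieces and continuous over regions, and the partial derivatives of each resulting piece are bounded by the maximum of the individual bounds. It therefore suffices to analyse the contribution of a single transition $\trans=(\loc,\guard,\reset,\loc')$, namely
\[
\psi_\trans(\val) \;=\; \mathrm{opt}_{d\in D_\trans(\val)} \big[ d\,\weight(\loc) + \weight(\trans) + V_{\loc',(\val+d)[\reset:=0]} \big],
\]
where $\mathrm{opt}\in\{\sup,\inf\}$ according to the owner of $\loc$, and $D_\trans(\val)=\{d\in\Rplus\mid\val+d\models\guard\}$ is a finite union of intervals whose endpoints are of the form $c-\val(x)$ for clocks $x\in\Clocks$ and integer constants $c$.

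First I would show that $\psi_\trans$ is piecewise linear with finitely many pieces. For fixed $\val$, the map $d\mapsto d\,\weight(\loc)+\weight(\trans)+V_{\loc',(\val+d)[\reset:=0]}$ is piecewise affine in $d$: its breakpoints occur when the moving valuation $(\val+d)[\reset:=0]$ crosses a piece boundary of $V_{\loc'}$, which forces $d=c-\val(x)$ for some non-reset clock $x$ and some constant $c$. Hence the optimal delay $d^\star(\val)$ is attained either at $d=0$, at an endpoint of $D_\trans(\val)$, or at such a breakpoint, and in every case has the form $d^\star(\val)=c-\val(x^\star)$ for some clock $x^\star\in\Clocks$ (possibly degenerating to a constant in $\val$). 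Partitioning the valuation space into finitely many polyhedral pieces on which the optimum is realised by the same formula yields the piecewise-linear decomposition. Continuity of $\psi_\trans$ over each region follows from the continuity of the optimum of a continuous family of continuous functions over the compact set $\overline{D_\trans(\val)}$, using the assumed continuity of $V_{\loc'}$ on each region.

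Then I would bound the partial derivatives on a piece where $d^\star(\val)=c-\val(x^\star)$. Writing $\val'=(\val+d^\star(\val))[\reset:=0]$, one has $\val'(x)=0$ for $x\in\reset$, $\val'(x^\star)$ independent of $\val(x^\star)$ (either $0$ or $c$), and $\val'(x)=\val(x)-\val(x^\star)+c$ for $x\notin\reset$ with $x\neq x^\star$. Differentiating $\psi_\trans(\val)=(c-\val(x^\star))\weight(\loc)+\weight(\trans)+V_{\loc',\val'}$ with respect to $\val(x^\star)$ combines the $-\weight(\loc)$ term with at most $|\Clocks|-1$ partials of $V_{\loc'}$ (each bounded by $\lipconst$), giving a bound of $|\weight(\loc)|+(|\Clocks|-1)\lipconst$. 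For any other clock $y\neq x^\star$, since $d^\star$ does not depend on $\val(y)$, the partial reduces to $\partial V_{\loc'}/\partial \val'(y)$ if $y\notin\reset$ and to $0$ otherwise, hence is bounded by $\lipconst$. The degenerate piece where $d^\star$ is constant in $\val$ gives partials bounded simply by $\lipconst$. Taking $\sup$ or $\inf$ over transitions preserves these bounds. The main obstacle will be the careful case analysis pinning down the affine form $d^\star=c-\val(x^\star)$ at every optimum, together with the open-guard situation in which the optimum is not attained; this is resolved by passing to the closed guard $\overline\guard$, which exhibits the same boundary piece and hence the same derivative bounds.
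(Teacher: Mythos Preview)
Your decomposition by transitions and your treatment of the case where the optimum is reached on a region frontier (a hyperplane $x_k=c$, coming from a guard or from a region boundary crossed by the time-successor line) are fine and coincide with the paper's analysis; this is indeed the case that produces the $|\weight(\loc)|+(n-1)\lipconst$ term. The gap is in the other case.

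You claim that every breakpoint of $d\mapsto V_{\loc',(\val+d)[\reset:=0]}$ forces $d=c-\val(x)$ for a single non-reset clock~$x$. This is false: the pieces of $V_{\loc'}$ \emph{inside} a region are separated by arbitrary hyperplanes $\sum_i a_i x_i=b$, not only axis-parallel ones, so the delay at which the diagonal $\val+d$ hits such a border is $d^\star=(b-\sum_i a_i\val(x_i))/\sum_i a_i$, a genuine affine combination of all the non-reset clocks. Your subsequent derivative computation, which relies on $d^\star$ depending on a single coordinate, therefore does not apply here. Moreover, substituting the correct $d^\star$ into your scheme gives no bound by itself: writing the two adjacent pieces as $\sum_i a_i x_i+b$ and $\sum_i a'_i x_i+b'$ and letting $A=\sum_i(a'_i-a_i)$, one obtains
\[
\frac{\partial\psi_\trans}{\partial\val(x_k)}=\frac{a_k-a'_k}{A}\Big(\weight(\loc)+\sum_i a_i\Big)+a_k,
\]
and $(a_k-a'_k)/A$ is not a priori bounded. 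The missing idea is the first-order optimality condition at the interior border: for a $\MinPl$ location the one-variable slope satisfies $\weight(\loc)+\sum_i a_i\leq 0$ just before the border and $\weight(\loc)+\sum_i a'_i\geq 0$ just after it (whence $A>0$), and these two inequalities sandwich the partial derivative between $a'_k$ and $a_k$, hence bound it by~$\lipconst$. This is exactly how the paper handles interior borders; without this optimality argument your proof does not establish the stated bound.
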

\begin{proof}
  We will show that for every region $r$, $\mathcal{F}(V)$ restricted
  to $r$ has those properties. Note that they are transmitted over finite $\min$
  and $\max$ operations.
  The continuity over regions is easy to prove because it is stable by $\inf$ and $\sup$.
  We now use the notations and definitions of
  \cite{ABM04} to bound the partial derivatives.%
  There exists a partition cost function $(P,F)$ that
  represents $V$, with $P$ an $n$-dimensional nested tube partition
  and $F$ a mapping from the leaf nodes of $P$ to linear expressions
  over variables in \Clocks. Intuitively, $P$ defines a finite
  arborescence of convex spaces, defined by linear inequalities, whose
  root is the whole region $r$ and whose leaves partition $r$ into
  \emph{cells}. A crucial property of those cells
  (\cite[Theorem~4]{ABM04}) is that, for a given valuation $\val$, the
  delays $t$ that need to be considered in the $\sup$ or $\inf$
  operation of $\mathcal{F}(V)_{(\loc,\val)}$ correspond to the
  intersection points of the diagonal half line containing the time
  successors of $\val$ and borders of cells (if $\val^b$ is such a
  valuation, $t=\|\val^b-\val\|_\infty$ is the associated delay).
  In particular, there is a finite number of such borders, and the final
  $\mathcal{F}(V)_\loc$ function can be written as a finite nesting of
  finite $\min$ and $\max$ operations over linear terms, %
  each corresponding to a choice of delay and a transition to take.
  Formally, there are several cases to consider to define those terms,
  depending on delay and transition choices. For each available transition $\trans$,
  those terms can either be:
  \begin{enumerate}
    \item If a delay $0$ is taken and all clocks in $Y\subseteq\Clocks$ are reset by $\trans$,
    then \\ $\weightC((\loc,\val)\xrightarrow{0}(\loc,\val)\xrightarrow{\trans}(\loc',\val[Y:=0]))=
    \weightC(\trans) + V_{(\loc',\val[Y:=0])}$
    \item If a delay $t>0$ (leading to valuation $\val^b$ on border $B$) is taken
    and the clocks in $Y$ are reset by $\trans$,
    then $\weightC((\loc,\val)\xrightarrow{t}(\loc,\val^b)\xrightarrow{\trans}(\loc',\val^b[Y:=0]))=
    \weightC(\loc)\times t + \weightC(\trans) + V_{(\loc',\val^b[Y:=0])}$
  \end{enumerate}

\begin{figure}[ht]
\centering
  \begin{tikzpicture}[scale=0.75,cap=round,>=latex]
      \draw[thick,->] (-1.1cm,-1.1cm) -- (6cm,-1.1cm) node[right,fill=white] {$x$};
      \draw[thick,->] (-1.1cm,-1.1cm) -- (-1.1cm,6cm) node[above,fill=white] {$y$};

      \draw[dotted,-] (0cm,-1cm) -- (6cm,5cm);

      \filldraw[black] (2cm,1cm) circle(1pt);
      \draw (2cm,1cm) node[below] {$\val$};

      \filldraw[black] (2cm,1.5cm) circle(1pt);
      \draw[->] (2cm,1cm) -- (2cm,1.5cm);

      \filldraw[black] (2.5cm,1cm) circle(1pt);
      \draw (2.5cm,1cm) node[right] {$\val'$};%
      \draw[->] (2cm,1cm) -- (2.5cm,1cm);

      \draw[dotted,-] (0.5cm,-1cm) -- (6.5cm,5cm);

      \draw[-] (0cm,6cm) -- (6cm,3cm);
      \draw (2cm,5cm) node[above] {$B$};

      \filldraw[black] (4.66cm,3.66cm) circle(1pt);
      \draw (4.66cm,3.66cm) node[above] {$\val^b$};

      \filldraw[black] (5cm,3.5cm) circle(1pt);
      \draw (5cm,3.5cm) node[below] {${\val'}^b$};

      \draw[dashed,-] (-1cm,1cm) -- (4cm,6cm);
      \draw[dashed,-] (2cm,-1cm) -- (6cm,3cm);
      \draw[dashed,-] (2cm,4cm) -- (4cm,1cm);
      \draw[dashed,-] (-0.25cm,1.75cm) -- (2.25cm,-0.75cm);
      \draw (2cm,3.9cm) node[below] {$c$};
    \end{tikzpicture}
   \caption{A tubular cell $c$ as described in the proof of Lemma~\ref{lm:value-ite-partial-derivatives}.
   Dashed lines bound the cell $c$, dotted lines are proof constructions.
   }\label{fig:value-ite-lipschitz}
\end{figure}
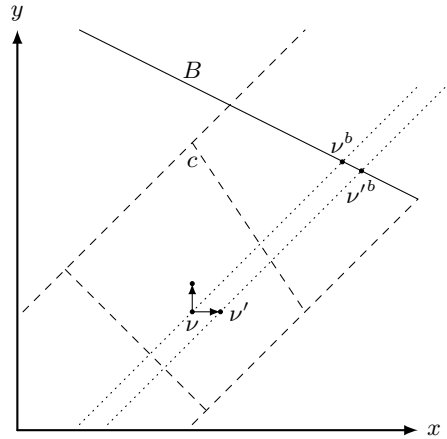

  In the first case, the resulting partial derivatives are $0$ for clocks in $Y$,
  and the same as the partial derivatives in $V_{\loc'}$ for all other clocks,
  which allows us to conclude that they are bounded by \lipconst.
  We now consider the second case.
  We argue that the second case could be decomposed as a delay followed by a transition
  of the first case, meaning that we can assume $Y=\emptyset$ without loss of generality.

  There are again two cases: the border $B$ being inside a region or on the frontier of a region.

  If the border is not the frontier of a region, it is the
  intersection points of two affine pieces of $V_{\loc'}$ whose
  equations (in the space $\R^{n+1}$ whose $n$ first coordinates are
  the clocks $(x_1,\ldots,x_n)$ and the last coordinate correspond to the value
  $V_{\loc'}(x_1,\ldots,x_n)$) can be written
  $y=\sum_{i=1}^n a_ix_i+b$ (before the border) and
  $y=\sum_{i=1}^n a'_ix_i+b'$ (after the border). Therefore, valuations
  of the borders all fulfil the equation
  \begin{equation}
    \sum_{i=1}^n(a'_i-a_i)x_i + b-b'=0\label{eq:1}
  \end{equation}
  We let $A=\sum_{i=1}^n(a'_i-a_i)$. Consider that $\loc$ is a
  location of \MinPl (the very same reasoning applies to the case of a
  location of \MaxPl). Since $\mathcal{F}$ computes an infimum, we
  know that the function mapping the delay $t$ to the weight obtained
  from reaching $\val+t$ is decreasing before the
  border and increasing after. These functions are locally affine
  which implies that their slopes verify:
  \begin{equation}
    \weight(\loc)+\sum_{i=1}^n a_i\leq 0\quad \text{ and } \quad
    \weight(\loc)+\sum_{i=1}^n a'_i\geq 0\,.\label{eq:2}
  \end{equation}
  We deduce from these two inequalities that $A\geq 0$. The case where
  $A=0$ would correspond to the case where the border contains a
  diagonal line, which is forbidden, and $A>0$.
  Consider now a valuation of coordinates $\val=(x_1,\ldots,x_n)$ and
  another valuation of coordinates
  $\val'=(x_1,\ldots,x_{k-1},x_k+\lambda,x_{k+1},\ldots,x_n)$. The
  delays $t$ and $t'$ needed to arrive to the border starting from
  these two valuations are such that $\val+t$ and $\val'+t'$ both
  verify \eqref{eq:1}. We can then deduce that
  $t'-t=\lambda\frac{a_k-a'_k}{A}$. It is now possible to compute the
  partial derivative of $\mathcal{F}(V)_\loc$ in the $k$-th coordinate
  using
  \[\frac{\mathcal{F}(V)_{\loc,\val'}-
      \mathcal{F}(V)_{\loc,\val}}{\lambda} = \frac{\weight(\loc)(t'-t')+
      V_{\loc',\val'+t'}-V_{\loc',\val+t}}{\lambda}.\] We may compute
  it by using the equations of the affine pieces before or after the
  border. We thus obtain
  \begin{align*}
    \frac{\mathcal{F}(V)_{\loc,\val'}-\mathcal{F}(V)_{\loc,\val}}{\lambda}
    &=
      \frac{a_k-a'_k} A (\weight(\loc)+\sum_{i=1}^n a_i) + a_k\\
    \frac{\mathcal{F}(V)_{\loc,\val'}-\mathcal{F}(V)_{\loc,\val}}{\lambda}
    &=
      \frac{a_k-a'_k} A (\weight(\loc)+\sum_{i=1}^n a'_i) + a'_k
  \end{align*}
  In the case where $a_k\geq a'_k$, the first equation, with
  \eqref{eq:2}, allows us to obtain that the partial derivative is at
  most $a_k$. We may then lower $\weight(\loc)$ by $-\sum_{i=1}^n a'_i$
  to obtain that the partial derivative is at least $a'_k$. Since
  $a_k$ and $a'_k$ are bounded in absolute value by \lipconst, so is the
  partial derivative. We get the same result by reasoning on the
  second equation if $a'_k\geq a_k$.

  We now come back to the case where the border is on the frontier of a region.
  Then, it is a segment of a line of equation $x_k=c$ for some $k$ and $c$.
  $V_{\loc'}$ contains at most three values for points of $B$: The limit coming from before the border,
  the value at the border, and the limit coming from after the border.
  The computation of $\mathcal{F}(V)$ computes values obtained from all three
  and takes the $\min$ (or the $\max$).

  Now, let $y=\sum_{i=1}^n a_ix_i+b$ be the equation
  defining the linear piece of $V_{\loc'}$ before the border (resp.~at the border, after the border).
  Consider now a valuation of coordinates $\val=(x_1,\ldots,x_n)$ and
  another valuation of coordinates
  $\val'=(x_1,\ldots,x_{j-1},x_j+\lambda,x_{j+1},\ldots,x_n)$. The
  delays $t$ and $t'$ needed to arrive to the border starting from
  these two valuations are such that $\val+t$ and $\val'+t'$ both
  verify $x_k=c$. We can then deduce that
  $t'-t=0$ if $j\neq k$ and $t'-t=-\lambda$ if $j=k$. It is now possible to compute the
  partial derivative of $\mathcal{F}(V)_\loc$ in the $j$-th coordinate
  using
  \[\frac{\mathcal{F}(V)_{\loc,\val'}-
      \mathcal{F}(V)_{\loc,\val}}{\lambda} = \frac{\weight(\loc)(t'-t')+
      V_{\loc',\val'+t'}-V_{\loc',\val+t}}{\lambda}.\] We may compute
  it by using the equations of the linear piece before the
  border (resp.~at the border, after the border).
  Then, $V_{\loc',\val+t}=\sum_{i=1}^n a_i(x_i+t) +b=(\sum_{i=1\neq k}^n a_i(x_i+t)) + a_k c+b +$
  and $V_{\loc',\val'+t'}=(\sum_{i=1\neq k}^n a_i(x_i+t')) + a_k c +b$.
  We thus obtain
  \begin{align*}
    \frac{\mathcal{F}(V)_{\loc,\val'}-\mathcal{F}(V)_{\loc,\val}}{\lambda}
    &=
      a_j\text{ if }j\neq k\\
    \frac{\mathcal{F}(V)_{\loc,\val'}-\mathcal{F}(V)_{\loc,\val}}{\lambda}
    &=
      -\weight(\loc)-\sum_{i=1,i\neq k}^n a_i\text{ otherwise}
  \end{align*}

  Then, the partial derivatives are bounded, in absolute value, by $|\weight(\loc)|+(n-1) \lipconst$.

\end{proof}

As a corollary, we can now obtain Lemma~\ref{lm:tree-lipschitz}, or more precisely:
\begin{lemma}\label{lm:tree-lipschitz-detailed}
  Consider an acyclic WTG \game of depth $i$ with all the output
  weights being
  \lipconst-Lipschitz-continuous over each region
  (and piecewise linear, with finitely many pieces).
  Then,
  \begin{itemize}
  \item if $|\Clocks|=1$, $\Val_\game=\Val^i_\game$ is
    $\max(\lipconst,\wmax^\Locs)$-Lipschitz-continuous over regions;
  \item if $|\Clocks|=2$, $\Val_\game=\Val^i_\game$ is
    $(i*\wmax^\Locs+\lipconst)$-Lipschitz-continuous over regions;
  \item otherwise, $\Val_\game=\Val^i_\game$ is
    $(\wmax^\Locs\frac{(|\Clocks|-1)^i-1}{|\Clocks|-2}+(|\Clocks|-1)^i
    \lipconst)$-Lipschitz-continuous over regions.
  \end{itemize}

\end{lemma}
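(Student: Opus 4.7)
The plan is to prove Lemma~\ref{lm:tree-lipschitz-detailed} by induction on the depth of the acyclic game $\game$, propagating per-region Lipschitz bounds from the target leaves upward through the value iteration operator $\mathcal F$. Since $\game$ is acyclic of depth $i$, we have $\Val_\game = \Val^i_\game$ by the property of $\mathcal F$ on trees recalled above Equation~\eqref{eq:operator}, so it suffices to control how the Lipschitz constant of the iterates $V^0, V^1, \ldots, V^i$ evolves.

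I would introduce a single scalar sequence $(\lipconst_d)_{d \geq 0}$ upper-bounding the per-region Lipschitz constant of each $V^d_\loc$, uniformly over all locations $\loc$. The base case $\lipconst_0 = \lipconst$ comes directly from the hypothesis on $\weightT$. For the inductive step, I would invoke Lemma~\ref{lm:value-ite-partial-derivatives} (after remarking that continuity over regions, which is needed to equate the partial-derivative bound with a Lipschitz constant, is preserved by $\mathcal F$ and by finite $\min$/$\max$ operations), together with the global bound $|\weight(\loc)| \leq \wmax^\Locs$. This yields the scalar recurrence
\begin{equation*}
\lipconst_{d+1} \;\leq\; \max\bigl(\lipconst_d,\; \wmax^\Locs + (|\Clocks|-1)\,\lipconst_d\bigr).
\end{equation*}

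It then remains to solve this recurrence in the three regimes stated in the lemma. When $|\Clocks|=1$, the right-hand side is $\max(\lipconst_d, \wmax^\Locs)$, which stabilises after one step at $\max(\lipconst, \wmax^\Locs)$. When $|\Clocks|=2$, the second term $\wmax^\Locs + \lipconst_d$ dominates (as $\wmax^\Locs \geq 0$), giving the arithmetic progression $\lipconst_i = i\,\wmax^\Locs + \lipconst$. When $|\Clocks|\geq 3$, the genuine linear recurrence $\lipconst_{d+1} = \wmax^\Locs + (|\Clocks|-1)\lipconst_d$ has the standard closed form, yielding exactly the geometric bound $\wmax^\Locs\frac{(|\Clocks|-1)^i-1}{|\Clocks|-2} + (|\Clocks|-1)^i \lipconst$.

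The only non-routine ingredient is Lemma~\ref{lm:value-ite-partial-derivatives} itself, whose case analysis distinguishing borders interior to a region from borders on its frontier is where the real geometric work lies; once that lemma is in hand, Lemma~\ref{lm:tree-lipschitz-detailed} is essentially bookkeeping on the recurrence above. The minor points to watch are that Lemma~\ref{lm:value-ite-partial-derivatives} is phrased per location with the local rate $|\weight(\loc)|$ (replacement by $\wmax^\Locs$ is sound and preserves the closed forms), and that the induction must carry along both the piecewise-linearity with finitely many pieces and the continuity over regions, both of which are invariants of $\mathcal F$ already recorded in the paragraph on value iteration.
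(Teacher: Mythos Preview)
Your proposal is correct and matches the paper's own approach: the paper presents Lemma~\ref{lm:tree-lipschitz-detailed} as a direct corollary of Lemma~\ref{lm:value-ite-partial-derivatives}, and your induction on the depth, yielding the recurrence $\lipconst_{d+1}\leq\max(\lipconst_d,\wmax^\Locs+(|\Clocks|-1)\lipconst_d)$ and then solving it in the three regimes, is exactly the intended derivation. If anything, your write-up is more explicit than the paper's, which does not spell out the recurrence or its solution.
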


\subsection{Example of an execution of the approximation schema}\label{app:example}

\begin{figure}[ht]
  \centering
\begin{tikzpicture}[node distance=4cm,auto,->,>=latex]
  \node[player2](0){\makebox[0mm][c]{$\mathbf{0}$}};
  \node()[below of=0,node distance=6mm]{$\loc_0$};
  \node[player1](1)[right of=0]{\makebox[0mm][c]{$\mathbf{1}$}};
  \node()[below of=1,node distance=6mm]{$\loc_1$};
  \node[player1](2)[above left of=1]{\makebox[0mm][c]{$\mathbf{-1}$}};
  \node()[above of=2,node distance=6mm]{$\loc_2$};
  \node[player1](3)[above right of=1]{\makebox[0mm][c]{$\mathbf{1}$}};
  \node()[above of=3,node distance=6mm]{$\loc_3$};
  \node[player2](4)[below right of=3]{\makebox[0mm][c]{$\mathbf{0}$}};
  \node()[below of=4,node distance=6mm]{$\loc_4$};
  \node[player1](5)[right of=3,accepting]{\makebox[0mm][c]{$\mathbf{}$}};
  \node()[above of=5,node distance=6mm]{$\loc_t$};
  \node()[below of=5,node distance=6mm]{$\weightT(x,y)=x$};

  \path
  (0) edge node[below]{$\begin{array}{c} 0<x<1\\ x:=0 \end{array}$} node[above]{$\mathbf{0}$} (1)
  (1) edge[bend left=20] node[left,xshift=-2mm,yshift=2mm]{$\begin{array}{c} y<2\\1<x<2\\ y:=0 \end{array}$} node[above]{$\mathbf{0}$} (2)
  (2) edge[bend left=20] node[right,xshift=-3mm,yshift=2mm]{$\begin{array}{c} 1<x<2\\ x:=0 \end{array}$} node[below]{$\mathbf{1}$} (1)
  (1) edge node[right]{$\begin{array}{c} y=1\\ y:=0 \end{array}$} node[left]{$\mathbf{1}$} (3)
  (3) edge node[below,xshift=-2mm]{$\begin{array}{c} x=1 \end{array}$} node[right]{$\mathbf{0}$} (4)
  (4) edge node[below]{$\begin{array}{c} 1<x<2, y<1\\x:=0 \end{array}$} node[above]{$\mathbf{-2}$} (1)
  (3) edge node[below]{$\begin{array}{c} y=0 \end{array}$} node[above]{$\mathbf{0}$} (5);
\end{tikzpicture}
  \\
  \centering
\begin{tikzpicture}[node distance=4cm,auto,->,>=latex]

  \node[draw,regular polygon,regular polygon sides=4,inner sep=-2mm,minimum size=10mm](0){\makebox[0mm][c]{
  \scalebox{0.5}{\begin{tikzpicture}
    \path[draw,->,thick](0,0) -> (2.3,0) node[above] {$x$};
    \path[draw,->,thick](0,0) -> (0,2.3) node[right] {$y$};
    \path[draw,-] (0,0) -- (2,2)
    (1,0) -- (1,2) -- (0,1) -- (2,1) -- (1,0)
    (0,2) -- (2,2) -- (2,0);
    \node () at (1,-0.2) {$1$};
    \node () at (2,-0.2) {$2$};
    \node () at (-0.2,-0.2) {$0$};
    \node () at (-0.2,1) {$1$};
    \node () at (-0.2,2) {$2$};
    \node[draw, fill=black,circle,inner sep=1mm, minimum size=1mm] () at (0,0) {};
  \end{tikzpicture}}}};
  \node()[below of=0,node distance=12mm]{$\loc_0, \mathbf{0}$};

  \node[player1,minimum size=10mm](1)[right of=0]{\makebox[0mm][c]{
  \scalebox{0.5}{\begin{tikzpicture}
    \path[draw,->,thick](0,0) -> (2.3,0) node[above] {$x$};
    \path[draw,->,thick](0,0) -> (0,2.3) node[right] {$y$};
    \path[draw,-] (0,0) -- (2,2)
    (1,0) -- (1,2) -- (0,1) -- (2,1) -- (1,0)
    (0,2) -- (2,2) -- (2,0);
    \node () at (1,-0.2) {$1$};
    \node () at (2,-0.2) {$2$};
    \node () at (-0.2,-0.2) {$0$};
    \node () at (-0.2,1) {$1$};
    \node () at (-0.2,2) {$2$};
    \path[draw, -, fill=black, line width=1.5mm] (0,0.1) -- (0,0.9);
  \end{tikzpicture}}}};
  \node()[below of=1,node distance=12mm]{$\loc_1, \mathbf{1}$};

  \node[player1,minimum size=10mm](2)[above left of=1]{\makebox[0mm][c]{
  \scalebox{0.5}{\begin{tikzpicture}
    \path[draw,->,thick](0,0) -> (2.3,0) node[above] {$x$};
    \path[draw,->,thick](0,0) -> (0,2.3) node[right] {$y$};
    \path[draw,-] (0,0) -- (2,2)
    (1,0) -- (1,2) -- (0,1) -- (2,1) -- (1,0)
    (0,2) -- (2,2) -- (2,0);
    \node () at (1,-0.2) {$1$};
    \node () at (2,-0.2) {$2$};
    \node () at (-0.2,-0.2) {$0$};
    \node () at (-0.2,1) {$1$};
    \node () at (-0.2,2) {$2$};
    \path[draw, -, fill=black, line width=1.5mm] (1.1,0) -- (1.9,0);
  \end{tikzpicture}}}};
  \node()[above of=2,node distance=12mm]{$\loc_2, \mathbf{-1}$};

  \node[player1,minimum size=10mm](3)[above right of=1]{\makebox[0mm][c]{
  \scalebox{0.5}{\begin{tikzpicture}
    \path[draw,->,thick](0,0) -> (2.3,0) node[above] {$x$};
    \path[draw,->,thick](0,0) -> (0,2.3) node[right] {$y$};
    \path[draw,-] (0,0) -- (2,2)
    (1,0) -- (1,2) -- (0,1) -- (2,1) -- (1,0)
    (0,2) -- (2,2) -- (2,0);
    \node () at (1,-0.2) {$1$};
    \node () at (2,-0.2) {$2$};
    \node () at (-0.2,-0.2) {$0$};
    \node () at (-0.2,1) {$1$};
    \node () at (-0.2,2) {$2$};
    \path[draw, -, fill=black, line width=1.5mm] (0.1,0) -- (0.9,0);
  \end{tikzpicture}}}};
  \node()[above of=3,node distance=12mm]{$\loc_3, \mathbf{1}$};

  \node[draw,regular polygon,regular polygon sides=4,inner sep=-2mm,minimum size=10mm](4)[below right of=3]{\makebox[0mm][c]{
  \scalebox{0.5}{\begin{tikzpicture}
    \path[draw,->,thick](0,0) -> (2.3,0) node[above] {$x$};
    \path[draw,->,thick](0,0) -> (0,2.3) node[right] {$y$};
    \path[draw,-] (0,0) -- (2,2)
    (1,0) -- (1,2) -- (0,1) -- (2,1) -- (1,0)
    (0,2) -- (2,2) -- (2,0);
    \node () at (1,-0.2) {$1$};
    \node () at (2,-0.2) {$2$};
    \node () at (-0.2,-0.2) {$0$};
    \node () at (-0.2,1) {$1$};
    \node () at (-0.2,2) {$2$};
    \path[draw, -, fill=black, line width=1.5mm] (1,0.1) -- (1,0.9);
  \end{tikzpicture}}}};
  \node()[below of=4,node distance=12mm]{$\loc_4, \mathbf{0}$};

  \node[player1,minimum size=10mm](5)[right of=3,accepting]{\makebox[0mm][c]{
  \scalebox{0.5}{\begin{tikzpicture}
    \path[draw,->,thick](0,0) -> (2.3,0) node[above] {$x$};
    \path[draw,->,thick](0,0) -> (0,2.3) node[right] {$y$};
    \path[draw,-] (0,0) -- (2,2)
    (1,0) -- (1,2) -- (0,1) -- (2,1) -- (1,0)
    (0,2) -- (2,2) -- (2,0);
    \node () at (1,-0.2) {$1$};
    \node () at (2,-0.2) {$2$};
    \node () at (-0.2,-0.2) {$0$};
    \node () at (-0.2,1) {$1$};
    \node () at (-0.2,2) {$2$};
    \path[draw, -, fill=black, line width=1.5mm] (0.1,0) -- (0.9,0);
  \end{tikzpicture}}}};
  \node()[above of=5,node distance=12mm]{$\loc_t$};
  \node()[below of=5,node distance=12mm]{$\weightT(x,y)=x$};

  \path
  (0) edge node[below]{\makebox[0mm][c]{
  \scalebox{0.3}{\begin{tikzpicture}
    \path[draw,->,thick](0,0) -> (2.3,0);
    \path[draw,->,thick](0,0) -> (0,2.3);
    \path[draw,-] (0,0) -- (2,2)
    (1,0) -- (1,2) -- (0,1) -- (2,1) -- (1,0)
    (0,2) -- (2,2) -- (2,0);
    \path[draw, -, fill=black, line width=1.5mm] (0.1,0.1) -- (0.9,0.9);
    \path[draw,->, fill=black, line width=1.5mm](2.3,-0.3) -> (0,-0.3);
  \end{tikzpicture}}}} node[above]{$\mathbf{0}$} (1)

  (1) edge[bend left=20] node[left,xshift=-5mm]{\makebox[0mm][c]{
  \scalebox{0.3}{\begin{tikzpicture}
    \path[draw,->,thick](0,0) -> (2.3,0);
    \path[draw,->,thick](0,0) -> (0,2.3);
    \path[draw,-] (0,0) -- (2,2)
    (1,0) -- (1,2) -- (0,1) -- (2,1) -- (1,0)
    (0,2) -- (2,2) -- (2,0);
    \path[draw, -, fill=black] (1.1,1.3) -- (1.7,1.9) -- (1.1,1.9) -- (1.1,1.3);
    \path[draw,->, fill=black, line width=1.5mm](-0.3,2.3) -> (-0.3,0);
  \end{tikzpicture}}}} node[above]{$\mathbf{0}$} (2)

  (2) edge[bend left=20] node[right,xshift=2mm,yshift=5mm]{\makebox[0mm][c]{
  \scalebox{0.3}{\begin{tikzpicture}
    \path[draw,->,thick](0,0) -> (2.3,0);
    \path[draw,->,thick](0,0) -> (0,2.3);
    \path[draw,-] (0,0) -- (2,2)
    (1,0) -- (1,2) -- (0,1) -- (2,1) -- (1,0)
    (0,2) -- (2,2) -- (2,0);
    \path[draw, -, fill=black] (1.3,0.15) -- (1.9,0.7) -- (1.9,0.15) -- (1.3,0.15);
    \path[draw,->, fill=black, line width=1.5mm](2.3,-0.3) -> (0,-0.3);
  \end{tikzpicture}}}} node[below]{$\mathbf{1}$} (1)

  (1) edge node[below right,xshift=1mm]{\makebox[0mm][c]{
  \scalebox{0.3}{\begin{tikzpicture}
    \path[draw,->,thick](0,0) -> (2.3,0);
    \path[draw,->,thick](0,0) -> (0,2.3);
    \path[draw,-] (0,0) -- (2,2)
    (1,0) -- (1,2) -- (0,1) -- (2,1) -- (1,0)
    (0,2) -- (2,2) -- (2,0);
    \path[draw, -, fill=black, line width=1.5mm] (0.1,1) -- (0.9,1);
    \path[draw,->, fill=black, line width=1.5mm](-0.3,2.3) -> (-0.3,0);
  \end{tikzpicture}}}} node[left]{$\mathbf{1}$} (3)

  (3) edge node[below,xshift=-2mm]{\makebox[0mm][c]{
  \scalebox{0.3}{\begin{tikzpicture}
    \path[draw,->,thick](0,0) -> (2.3,0);
    \path[draw,->,thick](0,0) -> (0,2.3);
    \path[draw,-] (0,0) -- (2,2)
    (1,0) -- (1,2) -- (0,1) -- (2,1) -- (1,0)
    (0,2) -- (2,2) -- (2,0);
    \path[draw, -, fill=black, line width=1.5mm] (1,0.1) -- (1,0.9);
  \end{tikzpicture}}}} node[right]{$\mathbf{0}$} (4)

  (4) edge node[below]{\makebox[0mm][c]{
  \scalebox{0.3}{\begin{tikzpicture}
    \path[draw,->,thick](0,0) -> (2.3,0);
    \path[draw,->,thick](0,0) -> (0,2.3);
    \path[draw,-] (0,0) -- (2,2)
    (1,0) -- (1,2) -- (0,1) -- (2,1) -- (1,0)
    (0,2) -- (2,2) -- (2,0);
    \path[draw, -, fill=black] (1.1,0.3) -- (1.7,0.85) -- (1.1,0.85) -- (1.1,0.3);
    \path[draw,->, fill=black, line width=1.5mm](2.3,-0.3) -> (0,-0.3);
  \end{tikzpicture}}}} node[above]{$\mathbf{-2}$} (1)

  (3) edge node[below]{\makebox[0mm][c]{
  \scalebox{0.3}{\begin{tikzpicture}
    \path[draw,->,thick](0,0) -> (2.3,0);
    \path[draw,->,thick](0,0) -> (0,2.3);
    \path[draw,-] (0,0) -- (2,2)
    (1,0) -- (1,2) -- (0,1) -- (2,1) -- (1,0)
    (0,2) -- (2,2) -- (2,0);
    \path[draw, -, fill=black, line width=1.5mm] (0.1,0) -- (0.9,0);
  \end{tikzpicture}}}} node[above]{$\mathbf{0}$} (5);

\end{tikzpicture}
  \caption{
  A weighted timed game \game with two clocks $x$ and $y$,
  and the portion of its region game \rgame accessible from configuration $(\loc_0,(0,0))$.
  Locations of \MinPl (resp. \MaxPl) are depicted as circles (resp. squares).
  The states of \rgame are labeled by their associated region, location and weight,
  and transitions are labeled by a representation of their guards and resets.
  Since each location \loc of \game leads to a unique states $(\loc,r)$ of \rgame,
  we will refer to states by their associated location label.}
  \label{fig:example-regions}
\end{figure}

We are given the \WTG \game in Figure~\ref{fig:example-regions} and
$\varepsilon\in\Q_{>0}$, and want to compute
an $\varepsilon$-approximation of its value in location $\loc_0$
for the valuation $(x{=}0,y{=}0)$, denoted $\Val_\game(\loc_0,(0,0))$.
In this example, we will use $\varepsilon{=}15$ because the computations would
not be readable with a smaller precision.
\rgame contains one SCC $\{\loc_1,\loc_2,\loc_3,\loc_4\}$, made of two simple cycles.
$\rpath_1=\loc_1\rightarrow\loc_2\rightarrow\loc_1$ is a positive cycle
(all plays following $\rpath_1$ have cumulated weight in the interval $(1,3)$)
and $\rpath_2=\loc_1\rightarrow\loc_3\rightarrow\loc_4\rightarrow\loc_1$
is a 0-cycle (all plays following $\rpath_2$ have cumulated weight $0$).
This can be checked by Lemma~\ref{lm:cornerabstract}.

Therefore, \rgame only contains non-negative SCCs and is almost-divergent.
Since all states are in the attractor of \MinPl towards $\LocsT$,
all cycles are non-negative and the output weight function is bounded (on all reachable regions),
there are no configurations in \rgame with value $+\infty$ or $-\infty$.

We let the kernel $\Kernel$ be the sub-game of \rgame defined by $\rpath_2$,
and we construct a semi-unfolding \tgame of \rgame of equivalent value.
Following Appendix~\ref{app:unfolding}, we should unfold the game until every
stopped branch contains a state seen at least $3|\rgame|\wmax^e+2\sup|\weightT|+2=3*3*4+2*1=38$ times.
We will unfold with bound $4$ instead of $38$ for readability (it is enough on this example).
Thus the infinite branch $(\loc_1\loc_2)^\omega$ is stopped when $\loc_1$ is reached for the fourth time,
as depicted in Figure~\ref{fig:example-unfolding}.

\begin{figure}[ht]
  \centering
  \begin{tikzpicture}[node distance=2cm,auto,->,>=latex]
  \begin{scope}
    \node[player1] (1) {$\loc_1'$};
    \node[player1] (3) [below right of=1] {$\loc_3'$};
    \node[player2] (4) [below left of=1] {$\loc_4'$};
    \node[] (5) [below right of=3] {$\loc_t'$};
    \node[] (2out) [below left of=4] {$\loc_2'$};
    \node[gray] (0) [above right of=1,xshift=-3mm] {$\loc_0$};
    \node[] (2in) [above left of=1] {$\loc_2$};
    \node[draw,dashed,regular polygon,regular polygon sides=6,inner sep=12mm]  [below of=1,yshift=10mm] {$\Kernel_{\loc_1}'$};

    \path
    (1) edge (3)
    (3) edge (4)
    (4) edge (1)
    (0) edge[dashed,gray,bend right=30] (1)
    (2in) edge[bend left=30] (1)
    (1) edge[bend right=30] (2out)
    (3) edge (5)
    ;
  \end{scope}

  \path[draw,gray,dashed] (1.7,1) to[bend right=10] (5.4,-0.8);
  \path[draw,gray,dashed] (1.7,-3) to[bend left=35] (5.4,-1.2);

  \begin{scope}[xshift=7cm,yshift=20mm,every
    node/.style={draw,shape=circle,minimum size=5mm,inner sep=0.3mm},
    level/.style={sibling distance=5cm/#1},level distance=10mm,->]
    \node[rectangle] (0) {$\loc_0$}
    child { node[regular polygon,regular polygon sides=6,inner sep=0.1mm] {$\Kernel_{\loc_1}$}
      child { node[] {$\loc_2$}
        child { node[regular polygon,regular polygon sides=6,inner sep=0.1mm] (kernel) {$\Kernel_{\loc_1}'$}
          child { node[] {$\loc_2'$}
            child { node[regular polygon,regular polygon sides=6,inner sep=0.1mm] {$\Kernel_{\loc_1}''$}
              child { node[] {$\loc_2''$}
                child { node[accepting] (stop) {$\loc_1'''$} }
              }
              child { node[accepting] (out3) {$\loc_t''$} }
            }
          }
          child { node[accepting] (out2) {$\loc_t'$} }
        }
      }
      child { node[accepting] (out1) {$\loc_t$} }
    };
    \node[node distance=5mm,below of=stop,draw=none] () {\smaller[2]{$\weightT(x,y)=+\infty$}};
    \node[node distance=5mm,below of=out1,draw=none] () {\smaller[2]{$\weightT(x,y)=x$}};
    \node[node distance=5mm,below of=out2,draw=none] () {\smaller[2]{$\weightT(x,y)=x$}};
    \node[node distance=5mm,below of=out3,xshift=2mm,draw=none] () {\smaller[2]{$\weightT(x,y)=x$}};
  \end{scope}
\end{tikzpicture}
  \caption{The kernel $\Kernel$ (with input state $\loc_1$),
  and a semi-unfolding \tgame such that $\Val_\game(\loc_0,(0,0))=\Val_\tgame(\loc_0,(0,0))$.
  We denote $\loc_i$, $\loc_i'$ and $\loc_i''$ the locations in $\Kernel$, $\Kernel'$ and $\Kernel''$.}
  \label{fig:example-unfolding}
\end{figure}
Let us now compute an approximation of $\Val_\tgame$.
Let us first remove the states of value $+\infty$: $\loc_1'''$ and $\loc_2''$.
Then, we start at the bottom and compute an $(\varepsilon/3)$-approximation
of the value of $\loc_1''$ in the game defined by
$\Kernel_{\loc_1}''$ and its output transition to $\loc_t''$.
Following Section~\ref{sec:approx-kernels}, we should use $N\geq 3(4+1)/\varepsilon$
and compute values in the $1/N$-corners game ${\cal C}_{N}(\Kernel_{\loc_1}'')$ in order to obtain
an $(\varepsilon/3)$-approximation of the value function.
For $\varepsilon=15$ we will use $N=1$ (in this case the computation happens
to be exact and would also hold with a small $\varepsilon$)
We construct this corner game, and obtain the finite (untimed) weighted game in Figure~\ref{fig:example-corner-game}.

\begin{figure}[ht]
\centering
\begin{tikzpicture}[node distance=3cm,auto,->,>=latex]

  \node[player1,minimum size=10mm](1top){\makebox[0mm][c]{
  \scalebox{0.5}{\begin{tikzpicture}
    \path[draw,->,thick](0,0) -> (2.3,0) node[above] {$x$};
    \path[draw,->,thick](0,0) -> (0,2.3) node[right] {$y$};
    \path[draw,-] (0,0) -- (2,2)
    (1,0) -- (1,2) -- (0,1) -- (2,1) -- (1,0)
    (0,2) -- (2,2) -- (2,0);
    \node () at (1,-0.2) {};%
    \node () at (2,-0.2) {};%
    \node () at (-0.2,-0.2) {};%
    \node () at (-0.2,1) {};%
    \node () at (-0.2,2) {};%
    \path[draw, -, fill=black, line width=1.5mm] (0,0.1) -- (0,0.9);
    \node[draw, fill=black,circle,inner sep=1mm, minimum size=1mm] () at (0,1) {};
  \end{tikzpicture}}}};
  \node()[above of=1top,node distance=12mm]{$c_1'$};

  \node[player1,minimum size=10mm](1bot)[below of=1top,yshift=5mm]{\makebox[0mm][c]{
  \scalebox{0.5}{\begin{tikzpicture}
    \path[draw,->,thick](0,0) -> (2.3,0) node[above] {$x$};
    \path[draw,->,thick](0,0) -> (0,2.3) node[right] {$y$};
    \path[draw,-] (0,0) -- (2,2)
    (1,0) -- (1,2) -- (0,1) -- (2,1) -- (1,0)
    (0,2) -- (2,2) -- (2,0);
    \node () at (1,-0.2) {};%
    \node () at (2,-0.2) {};%
    \node () at (-0.2,-0.2) {};%
    \node () at (-0.2,1) {};%
    \node () at (-0.2,2) {};%
    \path[draw, -, fill=black, line width=1.5mm] (0,0.1) -- (0,0.9);
    \node[draw, fill=black,circle,inner sep=1mm, minimum size=1mm] () at (0,0) {};
  \end{tikzpicture}}}};
  \node()[above of=1bot,node distance=12mm,yshift=-1mm]{$c_1$};

  \node[player1,minimum size=10mm](3top)[right of=1bot]{\makebox[0mm][c]{
  \scalebox{0.5}{\begin{tikzpicture}
    \path[draw,->,thick](0,0) -> (2.3,0) node[above] {$x$};
    \path[draw,->,thick](0,0) -> (0,2.3) node[right] {$y$};
    \path[draw,-] (0,0) -- (2,2)
    (1,0) -- (1,2) -- (0,1) -- (2,1) -- (1,0)
    (0,2) -- (2,2) -- (2,0);
    \node () at (1,-0.2) {};%
    \node () at (2,-0.2) {};%
    \node () at (-0.2,-0.2) {};%
    \node () at (-0.2,1) {};%
    \node () at (-0.2,2) {};%
    \path[draw, -, fill=black, line width=1.5mm] (0.1,0) -- (0.9,0);
    \node[draw, fill=black,circle,inner sep=1mm, minimum size=1mm] () at (1,0) {};
  \end{tikzpicture}}}};
  \node()[below of=3top,node distance=12mm]{$c_3'$};

  \node[player1,minimum size=10mm](3bot)[right of=3top]{\makebox[0mm][c]{
  \scalebox{0.5}{\begin{tikzpicture}
    \path[draw,->,thick](0,0) -> (2.3,0) node[above] {$x$};
    \path[draw,->,thick](0,0) -> (0,2.3) node[right] {$y$};
    \path[draw,-] (0,0) -- (2,2)
    (1,0) -- (1,2) -- (0,1) -- (2,1) -- (1,0)
    (0,2) -- (2,2) -- (2,0);
    \node () at (1,-0.2) {};%
    \node () at (2,-0.2) {};%
    \node () at (-0.2,-0.2) {};%
    \node () at (-0.2,1) {};%
    \node () at (-0.2,2) {};%
    \path[draw, -, fill=black, line width=1.5mm] (0.1,0) -- (0.9,0);
    \node[draw, fill=black,circle,inner sep=1mm, minimum size=1mm] () at (0,0) {};
  \end{tikzpicture}}}};
  \node()[below of=3bot,node distance=12mm]{$c_3$};

  \node[draw,regular polygon,regular polygon sides=4,inner sep=-2mm,minimum size=10mm](4bot)[left of=1bot]{\makebox[0mm][c]{
  \scalebox{0.5}{\begin{tikzpicture}
    \path[draw,->,thick](0,0) -> (2.3,0) node[above] {$x$};
    \path[draw,->,thick](0,0) -> (0,2.3) node[right] {$y$};
    \path[draw,-] (0,0) -- (2,2)
    (1,0) -- (1,2) -- (0,1) -- (2,1) -- (1,0)
    (0,2) -- (2,2) -- (2,0);
    \node () at (1,-0.2) {};%
    \node () at (2,-0.2) {};%
    \node () at (-0.2,-0.2) {};%
    \node () at (-0.2,1) {};%
    \node () at (-0.2,2) {};%
    \path[draw, -, fill=black, line width=1.5mm] (1,0.1) -- (1,0.9);
    \node[draw, fill=black,circle,inner sep=1mm, minimum size=1mm] () at (1,0) {};
  \end{tikzpicture}}}};
  \node()[below of=4bot,node distance=12mm]{$c_4$};

  \node[draw,regular polygon,regular polygon sides=4,inner sep=-2mm,minimum size=10mm](4top)[left of=4bot]{\makebox[0mm][c]{
  \scalebox{0.5}{\begin{tikzpicture}
    \path[draw,->,thick](0,0) -> (2.3,0) node[above] {$x$};
    \path[draw,->,thick](0,0) -> (0,2.3) node[right] {$y$};
    \path[draw,-] (0,0) -- (2,2)
    (1,0) -- (1,2) -- (0,1) -- (2,1) -- (1,0)
    (0,2) -- (2,2) -- (2,0);
    \node () at (1,-0.2) {};%
    \node () at (2,-0.2) {};%
    \node () at (-0.2,-0.2) {};%
    \node () at (-0.2,1) {};%
    \node () at (-0.2,2) {};%
    \path[draw, -, fill=black, line width=1.5mm] (1,0.1) -- (1,0.9);
    \node[draw, fill=black,circle,inner sep=1mm, minimum size=1mm] () at (1,1) {};
  \end{tikzpicture}}}};
  \node()[below of=4top,node distance=12mm]{$c_4'$};

  \node[player1,minimum size=10mm](5top)[right of=1top,xshift=5mm,accepting]{\makebox[0mm][c]{
  \scalebox{0.5}{\begin{tikzpicture}
    \path[draw,->,thick](0,0) -> (2.3,0) node[above] {$x$};
    \path[draw,->,thick](0,0) -> (0,2.3) node[right] {$y$};
    \path[draw,-] (0,0) -- (2,2)
    (1,0) -- (1,2) -- (0,1) -- (2,1) -- (1,0)
    (0,2) -- (2,2) -- (2,0);
    \node () at (1,-0.2) {};%
    \node () at (2,-0.2) {};%
    \node () at (-0.2,-0.2) {};%
    \node () at (-0.2,1) {};%
    \node () at (-0.2,2) {};%
    \path[draw, -, fill=black, line width=1.5mm] (0.1,0) -- (0.9,0);
    \node[draw, fill=black,circle,inner sep=1mm, minimum size=1mm] () at (1,0) {};
  \end{tikzpicture}}}};
  \node()[above of=5top,node distance=12mm]{$c_t', \weightT=\mathbf{1}$};

  \node[player1,minimum size=10mm](5bot)[right of=5top,xshift=-5mm,accepting]{\makebox[0mm][c]{
  \scalebox{0.5}{\begin{tikzpicture}
    \path[draw,->,thick](0,0) -> (2.3,0) node[above] {$x$};
    \path[draw,->,thick](0,0) -> (0,2.3) node[right] {$y$};
    \path[draw,-] (0,0) -- (2,2)
    (1,0) -- (1,2) -- (0,1) -- (2,1) -- (1,0)
    (0,2) -- (2,2) -- (2,0);
    \node () at (1,-0.2) {};%
    \node () at (2,-0.2) {};%
    \node () at (-0.2,-0.2) {};%
    \node () at (-0.2,1) {};%
    \node () at (-0.2,2) {};%
    \path[draw, -, fill=black, line width=1.5mm] (0.1,0) -- (0.9,0);
    \node[draw, fill=black,circle,inner sep=1mm, minimum size=1mm] () at (0,0) {};
  \end{tikzpicture}}}};
  \node()[above of=5bot,node distance=12mm]{$c_t, \weightT=\mathbf{0}$};

  \path
  (1bot) edge node[below]{$\mathbf{2}$} (3top)
  (3top) edge[bend left=35] node[below,yshift=1mm]{$\mathbf{0}$} (4bot)
  (4bot) edge node[below]{$\mathbf{-2}$} (1bot)
  (1top) edge node[above,xshift=-5mm,yshift=-2mm]{$\mathbf{1}$} (3bot)
  (3bot) edge[bend left=30] node[below]{$\mathbf{1}$} (4top)
  (4top) edge[bend left=10] node[above]{$\mathbf{-2}$} (1top)
  (4bot) edge node[below]{$\mathbf{-2}$} (1top)
  (3top) edge[bend right=40] node[right]{$\mathbf{0}$} (5top)
  (3bot) edge[bend right=20] node[right]{$\mathbf{0}$} (5bot)
  ;

\end{tikzpicture}
  \caption{
  The finite weighted game obtained from ${\cal C}_{1}(\Kernel_{\loc_1}'')$,
  where $c_i$ and $c_i'$ are the corners of $\loc_i''$ in \tgame.}
  \label{fig:example-corner-game}
\end{figure}

We can compute the values in this game to obtain $\Val(c_1')=1$ and $\Val(c_1)=3$.
We then define a value for every configuration in state $\loc_1''$ by linear interpolation,
obtaining $(x,y)\to 3-2y$ (which happens to be exactly $(x,y)\to\Val_{\tgame}(\loc_1'',(x,y))$
in this case, but would only be an $\varepsilon/3$-approximation of it in general).
Now, we can compute an $\varepsilon/3$-approximation of $\Val_{\tgame}(\loc_2')$
with one step of value iteration, obtaining $(x,y)\to\inf_{0<d<2-x} (-1)*d+1+3-2(0+d)=3x-2$.

The next step is computing an $\varepsilon/3$-approximation of the value of $\loc_1'$ in the game defined by
$\Kernel_{\loc_1}'$ and its output transitions to $\loc_t'$ and $\loc_2'$, of
respective output weight functions $(x,y)\to x$ and $(x,y)\to 3x-2$.
This will give us an $2\varepsilon/3$-approximation of $\Val_{\tgame}(\loc_1')$.

Following Section~\ref{sec:approx-kernels} once again, we should use $N\geq 3(5+3)/\varepsilon$
and compute values in the $1/N$-corners game ${\cal C}_{N}(\Kernel_{\loc_1}')$.
For $\varepsilon=15$ this gives $N=2$ (which will once again keep the computation exact).
We can construct a finite (untimed) weighted game as in Figure~\ref{fig:example-corner-game},
and obtain a value for each $1/2$-corner of state $\loc_1'$:
On the $1/2$-region $(0<y<1/2,x=0)$, corner $(0,0)$ has value $2$ and corner $(0,1/2)$
has value $2$.
On the $1/2$-region $(y=1/2,x=0)$, corner $(0,1/2)$ has value $2$.
On the $1/2$-region $(1/2<y<1,x=0)$, corner $(0,1/2)$ has value $2$ and corner $(0,1)$
has value $1$.
From these results, we define a piecewise-linear function by interpolating
the values of corners on each $1/2$-region, and obtain
$(x,y)\to\begin{cases} 2 & \text{if } y\leq 1/2 \\ 3-2y & \text{otherwise}\end{cases}$,
as depicted in Figure~\ref{fig:example-value}.

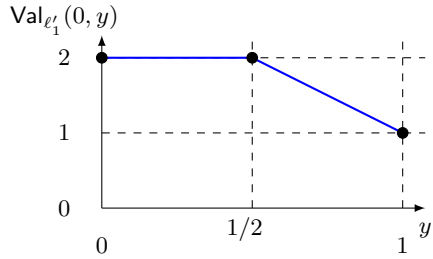
\begin{figure}[ht]
\centering
\begin{tikzpicture}[node distance=3cm,auto,>=latex]

 \draw[->] (0,0) -- (0,2.3);
 \draw[->] (0,0) -- (4.3,0);

 \draw[dashed] (4,0) -- (4,2.3);

 \draw[dashed] (0,1) -- (4.3,1);
 \draw[dashed] (0,2) -- (4.3,2);

 \draw[dashed] (2,0) node[below,xshift=-1mm]{$1/2$}-- (2,2.3);

 \node at (4.3,-.3) {$y$};
 \node at (-.5,2.5) {$\Val_{\loc_1'}(0,y)$};

 \node at (0,-.5) {$0$};
 \node at (4,-.5) {$1$};

 \node at (-.5,0) {$0$};
 \node at (-.5,1) {$1$};
 \node at (-.5,2) {$2$};

 \draw[blue,thick] (0,2) -- (2,2) -- (4,1);
 \node () at (0,2) {};
 \node[draw, fill=black,circle,inner sep=0.5mm] () at (0,2) {};
 \node[draw, fill=black,circle,inner sep=0.5mm] () at (2,2) {};
 \node[draw, fill=black,circle,inner sep=0.5mm] () at (4,1) {};

\end{tikzpicture}
  \caption{
  The value function $(x,y)\to\Val_{\tgame}(\loc_1',(x,y))$, projected on $x=0$.
  Black dots represent the values obtained for $1/2$-corners using the corner-points
  abstraction.}
  \label{fig:example-value}
\end{figure}

This gives us an $2\varepsilon/3$-approximation of $(x,y)\to\Val_{\tgame}(\loc_1',(x,y))$
(in fact exactly $\Val_{\tgame}(\loc_1')$).
Now, we can compute an $2\varepsilon/3$-approximation of $\Val_{\tgame}(\loc_2)$
on region $(1<x<2, y=0)$ with one step of value iteration,
obtaining :
\[(x,y)\to  \inf_{0<d<2-x}
\begin{cases} 3-d & \text{if } d\leq 1/2 \\
   4-3d & \text{otherwise}
\end{cases} \allowbreak=
\begin{cases} 3x-2 & \text{if } x\leq 3/2 \\ x+1 & \text{otherwise}
\end{cases}\]

Then, we need to compute an $\varepsilon/3$-approximation of the value of $\loc_1$ in the game defined by
$\Kernel_{\loc_1}$ and its output transitions to $\loc_t$ and $\loc_2$, of
respective output weight functions $(x,y)\to x$ and
$(x,y)\to 3x-2 \text{ if } x\leq 3/2, x+1 \text{ otherwise}$.
This will give us an $\varepsilon$-approximation of $\Val_{\tgame}(\loc_1)$.

Following Section~\ref{sec:approx-kernels} one last time, we should use $N\geq 3(5+3)/\varepsilon$
and compute values in the $1/N$-corners game ${\cal C}_{N}(\Kernel_{\loc_1})$.
This time, let us use $N=3$ to showcase an example where the computed value is not exact.
We can construct a finite (untimed) weighted game as in Figure~\ref{fig:example-corner-game},
and obtain a value for each $1/3$-corner of state $\loc_1'$.
From these results, we define a piecewise-linear function by interpolation,
as depicted in Figure~\ref{fig:example-value-approx}.

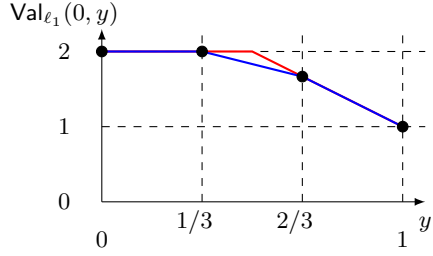
\begin{figure}[ht]
\centering
\begin{tikzpicture}[node distance=3cm,auto,>=latex]

 \draw[->] (0,0) -- (0,2.3);
 \draw[->] (0,0) -- (4.3,0);

 \draw[dashed] (4,0) -- (4,2.3);

 \draw[dashed] (0,1) -- (4.3,1);
 \draw[dashed] (0,2) -- (4.3,2);

 \draw[dashed] (1.333,0) node[below,xshift=-1mm]{$1/3$}-- (1.333,2.3);
  \draw[dashed] (2.666,0) node[below,xshift=-1mm]{$2/3$}-- (2.666,2.3);

 \node at (4.3,-.3) {$y$};
 \node at (-.5,2.5) {$\Val_{\loc_1}(0,y)$};

 \node at (0,-.5) {$0$};
 \node at (4,-.5) {$1$};

 \node at (-.5,0) {$0$};
 \node at (-.5,1) {$1$};
 \node at (-.5,2) {$2$};

 \draw[red,thick] (0,2) -- (2,2) -- (4,1);
 \draw[blue,thick] (0,2) -- (1.333,2) -- (2.666,1.666) -- (4,1);
 \node () at (0,2) {};
 \node[draw, fill=black,circle,inner sep=0.5mm] () at (0,2) {};
 \node[draw, fill=black,circle,inner sep=0.5mm] () at (1.333,2) {};
 \node[draw, fill=black,circle,inner sep=0.5mm] () at (2.666,1.666) {};
 \node[draw, fill=black,circle,inner sep=0.5mm] () at (4,1) {};

\end{tikzpicture}
  \caption{
  The value function $(x,y)\to\Val_{\tgame}(\loc_1,(x,y))$, projected on $x=0$,
  is depicted in red.
  Black dots represent the values obtained for $1/3$-corners using the corner-points
  abstraction, and the derived approximation of the value function is depicted in blue}
  \label{fig:example-value-approx}
\end{figure}

Finally, from this $\varepsilon$-approximation of $\Val_{\tgame}(\loc_1)$,
we can compute an $\varepsilon$-approximation of $\Val_{\tgame}(\loc_0)$
using one step of value iteration, and conclude. On our example this
ensures $$\Val_{\tgame}(\loc_0,(0,0))=\sup_{0<d<1} \Val_{\tgame}(\loc_1,(0,d))
\in[2-\varepsilon,2+\varepsilon]$$.

\subsection{Complexity analysis}\label{app:complexity}

We will express complexities according to several parameters: $|\Locs|$, $|\Clocks|$,
greatest guard constant $\clockbound$, greatest location and transition weight
constants $\wmax^\Locs$ and $\wmax^\Trans$.
We also need to keep track of the output weight functions' characteristics.
Recall that the output weight functions must be piecewise linear with finitely many pieces
and Lipschitz-continuous over regions.
We define three parameters, its Lipschitz constant $\lipconst$, its number of linear pieces $\numpieces$
and a bound $\addconstbound$ (that we call additive bound) on its additive constant, such that
if $(x_1,\dots,x_{|\Clocks|})\to \sum_{i=1}^{|\Clocks|} a_i x_i + b$
defines one of those linear pieces, then $|b|\leq \addconstbound$ and $\forall 1\leq i\leq |\Clocks|, |a_i|\leq \lipconst$.

Note that $|\Locs|$, $|\Clocks|$ and $\numpieces$ are all polynomial in the size of the input,
but \clockbound, $\wmax^\Locs$, $\wmax^\Trans$, $\lipconst$ and $\addconstbound$ are exponential in the size of the input if
constants are encoded in binary.

We start with simple estimates:
\begin{itemize}
  \item Number of regions $|\regions\Clocks\clockbound|$:
        Polynomial in \clockbound, exponential in $|\Clocks|$.
  \item Number of $1/N$-regions $|\Nregions N\Clocks\clockbound|$:
        Polynomial in \clockbound and $N$, exponential in $|\Clocks|$.
  \item Number of $1/N$-corners:
        Polynomial in \clockbound and $N$, exponential in $|\Clocks|$.
  \item Maximum weight of a timed transition $\wmax^e$:
        Polynomial in $\clockbound$, $\wmax^\Locs$ and $\wmax^\Trans$.
  \item Maximum output weight $\sup|\weightT|$:
        Polynomial in \clockbound, \addconstbound, $|\Clocks|$ and \lipconst.
\end{itemize}

\subsubsection{Tree}

Let us recall the complexity of the value iteration algorithm, used to compute the exact value of an acyclic WTG:
\\
Input: An acyclic game of depth $i$.
\\
Algorithm schema: Computes $\mathcal{F}^i(V^0)=\Val^i=\Val$.
\\
Output: A $\lipconst'$-Lipschitz-continuous function with $\numpieces'$ pieces and additive bound $\addconstbound'$ that is the game's value.
  \begin{itemize}
  \item $\lipconst'$ is of the form $\lipconst\,\lipconst''$ with $\lipconst''$
  polynomial in $\wmax^\Locs$ and $|\Clocks|$
  and exponential in $i$.
  \item $\numpieces'$ is of the form $\numpieces^{|\Clocks|}\,\numpieces''$ with $\numpieces''$
  polynomial in \clockbound and $|\Locs|$
  and exponential in $|\Clocks|$ and $i$.
  \item $\addconstbound'$ is of the form $\addconstbound+\addconstbound''$ with $\addconstbound''$
  polynomial in \clockbound, $\wmax^\Locs$, $\wmax^\Trans$ and $i$.
  \end{itemize}

Complexity: exponential in $i$ and the size of the input.

\subsubsection{Kernel}

Input: A kernel WTG, a precision $\varepsilon>0$.
\\
Algorithm schema: Solves optimal reachability on the finite $1/N$-corner game with $N$ polynomial in $1/\varepsilon$,
  $\wmax^\Locs$, $|\Locs|$, \clockbound and \lipconst
  and exponential in $|\Clocks|$.
\\
Output: A $\lipconst'$-Lipschitz-continuous value function with $\numpieces'$ pieces and additive bound $\addconstbound'$ that is an $\varepsilon$-approximation of the game's value.
  \begin{itemize}
  \item $\lipconst'$ is of the form $\lipconst\,\lipconst''$ with $\lipconst''$
  polynomial in $|\Locs|$, $\wmax^\Locs$ and \clockbound
  and exponential in $|\Clocks|$.
  \item $\numpieces'$ is polynomial in $1/\varepsilon$,
  $\wmax^\Locs$, $|\Locs|$, \clockbound and \lipconst
  and exponential in $|\Clocks|$ (in particular, it is independent in \numpieces).
  \item $\addconstbound'$ of the form $\addconstbound+\addconstbound''$ with $\addconstbound''$
  polynomial in $1/\varepsilon$,
  $\wmax^\Locs$, $\wmax^\Trans$, $|\Locs|$, \clockbound and \lipconst
  and exponential in $|\Clocks|$.
  \end{itemize}

Complexity: polynomial in $1/\varepsilon$,
  $\wmax^\Locs$, $\wmax^\Trans$, $|\Locs|$, \clockbound and \lipconst
  and exponential in $|\Clocks|$.

\subsubsection{Semi-unfolding}

We now stack several kernel and tree parts to form a semi-unfolding of a region game.
\\
Input: A semi-unfolding of branch depth $D$, a precision $\varepsilon>0$.
\\
Algorithm schema: value iteration for the trees and region-based for the kernels
  (on $1/N$ corners), with precision $\varepsilon/D$.
  In order to bound $N$, we need to bound the Lipschitz constants along the whole computation.
  We can recursively show that along this computation the Lipschitz constants, additive constants and number of pieces
  do not grow too much, and obtain global bounds:
  \begin{itemize}
  \item we can bound all Lipschitz constants by
  $\lipconst\,\lipconst''$ with $\lipconst''$ polynomial in $|\Locs|$, $\wmax^\Locs$, \clockbound
  and exponential in $|\Clocks|$ and $D$.
  \item we can bound all number of pieces by $\numpieces^{|\Clocks|}\,\numpieces''$ with $\numpieces''$
  polynomial in $1/\varepsilon$, \clockbound, $|\Locs|$, $\wmax^\Locs$, and \lipconst
  and exponential in $|\Clocks|$ and $D$.
  \item we can bound all additive constants by $\addconstbound+\addconstbound''$ with $\addconstbound''$
  polynomial in $1/\varepsilon$,
  $\wmax^\Locs$, $\wmax^\Trans$, $|\Locs|$, \clockbound and \lipconst
  and exponential in $|\Clocks|$ and $D$.
  \end{itemize}
  Therefore, $N$ can be chosen polynomial in $1/\varepsilon$,
  $\wmax^\Locs$, $|\Locs|$, \clockbound and $\lipconst$
  and exponential in $|\Clocks|$ and $D$.
\\
Output: A $\lipconst'$-Lipschitz-continuous value function with $\numpieces'$ pieces
and additive bound $\addconstbound'$ that is an $\varepsilon$-approximation of the game's value.
$\lipconst'$, $\numpieces'$, $\addconstbound'$ are bounded by their respective global bound.
\\
Complexity: polynomial in $1/\varepsilon$ and exponential in the size of the input and $D$.

\subsubsection{Almost divergent game}

Input: An almost divergent game, a precision $\varepsilon>0$.
\\
Algorithm schema: First, compute the region game's SCCs, and remove $+-\infty$ locations.
  Then, perform the semi-unfolding of the game, of depth $D$
  whose value is equivalent to that of the original game, with $D$ polynomial in
  \clockbound, $|\Locs|$, $\wmax^\Locs$, $\wmax^\Trans$, \lipconst, \addconstbound
  and exponential in $|\Clocks|$.
\\
Output: A $\lipconst'$-Lipschitz-continuous value function with $\numpieces'$ pieces and additive bound $\addconstbound'$ that is an $\varepsilon$-approximation of the game's value.
  \begin{itemize}
  \item $\lipconst'$ is
  exponential in \clockbound, $|\Locs|$, $\wmax^\Locs$, $\wmax^\Trans$, \lipconst, \addconstbound
  and doubly-exponential in $|\Clocks|$.
  \item $\numpieces'$ is polynomial in \numpieces, $1/\varepsilon$,
  exponential in \clockbound, $|\Locs|$, $\wmax^\Locs$, $\wmax^\Trans$, \lipconst, \addconstbound
  and doubly-exponential in $|\Clocks|$.
  \item $\addconstbound'$ is polynomial in $1/\varepsilon$
  and exponential in \clockbound, $|\Locs|$, $\wmax^\Locs$, $\wmax^\Trans$, \lipconst, \addconstbound
  and doubly-exponential in $|\Clocks|$
  \end{itemize}

Complexity: polynomial in $1/\varepsilon$,
  exponential in the size of the input and \clockbound, \lipconst, \addconstbound, $\wmax^\Locs$ and $\wmax^\Trans$
  and doubly-exponential in $|\Clocks|$.

\section{Proofs of the symbolic approximation schema  (Section~\ref{sec:symbolic})}\label{app:symbolic}

\begin{proof}[Proof of Lemma~\ref{lm:symbolic-scc-plus}]
Consider a non-negative SCC's \game, a precision $\varepsilon$, and an initial configuration $(\loc_0,\val_0)$.
Let \tgame be its finite semi-unfolding (obtained from the labelled tree $T$,
as in Appendix~\ref{app:unfolding}), such that
$\Val_\game(\loc_0,\val_0)=\Val_\tgame((\tilde{\loc}_0,r_0),\val_0)$.
Let $\alpha$ be the maximum number of kernels along a branch of $T$.
Let $P'$ be an integer such that for all kernels \Kernel in \tgame,
$|\Val_\Kernel(\loc,\val)-\Val^{P'}_\Kernel(\loc,\val)|\leq \varepsilon/\alpha$
for all configurations $(\loc,\val)$ of $\game$.
We can find such a $P'$ by using Lemma~\ref{lm:symbolic-kernel}.

Create $\mathcal T'(\game)$ from $T$ by applying the method used to create \tgame
but replace every kernel by its complete $P'$-unfolding instead.
This implies that $\mathcal T'(\game)$ is a tree, of bounded depth $P$ (at most the depth of $T$ times $P'$).
Then $|\Val_\tgame((\tilde{\loc}_0,r_0),\val_0)-
\Val_{\mathcal T'(\game)}((\tilde{\loc}_0,r_0),\val_0)|\leq\varepsilon$.
This holds because the value function is $1$-Lipschitz-continuous with regards to the
output weight function, so imprecision builds up additively.

Consider now $\mathcal T''(\game)$ the (complete) unfolding of \rgame with
unfolding depth $P$, where kernels are also unfolded.
By construction, $\Val_{\mathcal T''(\game)}((\tilde{\loc}_0,r_0),\val_0)
=\Val_{\mathcal T''(\game)}^P((\tilde{\loc}_0,r_0),\val_0)$.
Then, we can prove that $\Val_{\mathcal T''(\game)}^P((\tilde{\loc}_0,r_0),\val_0)
=\Val_\game^P(\loc_0,\val_0)$ (same strategies at bounded horizon $P$),
which implies $\Val_{\rgame)}((\loc_0,r_0),\val_0)\leq
\Val_{\mathcal T''(\game)}((\tilde{\loc}_0,r_0),\val_0)$ (monotonicity of $\Val^k$).
By another monotonicity argument (because $\mathcal T''$ contains $\mathcal T'$ as a prefix),
we can also prove $\Val_{\mathcal T''(\game)}((\tilde{\loc}_0,r_0),\val_0)\leq
\Val_{\mathcal T'(\game)}((\tilde{\loc}_0,r_0),\val_0)$.

Bringing everything together we obtain
$|\Val_\game^P(\loc_0,\val_0)-\Val_\game(\loc_0,\val_0)|\leq\varepsilon$.
\end{proof}

\begin{proof}[Proof of Lemma~\ref{lm:symbolic-scc-minus}]
Consider a non-positive SCC \game, a precision $\varepsilon$, and an initial configuration $(\loc_0,\val_0)$.
Let \tgame be its finite semi-unfolding (obtained from the labelled tree $T$,
as in Appendix~\ref{app:unfolding}), such that $\Val_\game(\loc_0,\val_0)=\Val_\tgame((\tilde{\loc}_0,r_0),\val_0)$.

We now change $T$, by adding a subtree under each stopped leaf: the complete unfolding of \rgame,
starting from the stopped leaf, of depth $|\rgame|$. Let us name $T^+$ this unfolding tree.
We then construct $\mathcal T^+(\game)$ as before, based on $T^+$.
Since we are in a non-positive SCC, $\mathcal T^+(\game)$ must have
output weight $-\infty$ on its stopped leaves.
It is easy to see that $\Val_\game(\loc_0,\val_0)=\Val_{\mathcal T^+(\game)}((\tilde{\loc}_0,r_0),\val_0)$
still holds (the proof was based on branches being long enough, and we
increased the lengths).
We now perform a small but crucial change: the output weight of stopped leaves
in $\mathcal T^+(\game)$ is set to $+\infty$ instead of $-\infty$.
Trivially $\Val_\tgame((\tilde{\loc}_0,r_0),\val_0)\leq
\Val_{\mathcal T^+(\game)}((\tilde{\loc}_0,r_0),\val_0)$ (we increased the output weight function).
Let us prove that $\Val_{\mathcal T^+(\game)}((\tilde{\loc}_0,r_0),\val_0)\leq
\Val_\tgame((\tilde{\loc}_0,r_0),\val_0)$.

For a fixed $\eta>0$, consider $\stratmin$ a $\eta$-optimal strategy for player \MinPl
in \tgame.
Let us define $\stratmin^+$, a strategy for \MinPl in $\mathcal T^+(\game)$, by
making the same choice as \stratmin on the common prefix tree, and once a node
that is a stopped leaf in \tgame is reached, we switch to a memoryless attractor
strategy of \MinPl towards target states.
Consider any strategy $\stratmax^+$ of \MaxPl in $\mathcal T^+(\game)$,
and let \stratmax be its projection in \tgame.
Let $\play^+$ denote the (maximal) play $\outcomes_{\mathcal T^+(\game)}(((\loc_0,r_0),\val_0),\stratmin^+,\stratmax^+))$,
and $\play$ be $\outcomes_{\tgame}(((\loc_0,r_0),\val_0),\stratmin,\stratmax))$.
By construction, $\play^+$ does not reach a stopped leaf in $\mathcal T^+(\game)$.
If the play $\play^+$
stays in the common prefix tree of $T$ and $T^+$, then $\play=\play^+$,
and $\weight_{\mathcal T^+(\game)}(\play^+)\leq \Val_\tgame((\tilde{\loc}_0,r_0),\val_0)+\eta$.
If it doesn't, then $\play^+$ has a prefix that reaches a stopped leaf
in \tgame: this must be $\rho$.
This implies that $\weight_{\mathcal T^+(\game)}(\play^+)<-|\rgame|\wmax^e-\sup|\weightT|\leq\Val_\tgame((\tilde{\loc}_0,r_0),\val_0)$
 (see Lemma~\ref{lm:mimickedplays}).
Since this holds for all $\eta>0$, we proved $\Val_{\mathcal T^+(\game)}((\tilde{\loc}_0,r_0),\val_0)\leq
\Val_\tgame((\tilde{\loc}_0,r_0),\val_0)$, which finally implies that
the two values are equal.

Then, we can follow the proof of Lemma~\ref{lm:symbolic-scc-plus}
(with $T^+$ and $\mathcal T^+(\game)$) in order to conclude.
\end{proof}

\end{document}